\newtheorem{theorem}{Theorem}[section]
\newtheorem{proposition}[theorem]{Proposition}
\newtheorem{lemma}[theorem]{Lemma}
\newtheorem{corollary}[theorem]{Corollary}
\theoremstyle{definition}
\newtheorem{definition}[theorem]{Definition}
\theoremstyle{remark}
\newtheorem*{remark}{Remark}
\newcommand{\BN}{\mathbb N}
\newcommand{\BR}{\mathbb R}
\newcommand{\BZ}{\mathbb Z}
\newcommand{\eps}{\varepsilon}
\newcommand{\poly}{\operatorname{poly}}
\newcommand{\GREEDY}{\operatorname{GREEDY}}
\newcommand{\OPT}{\operatorname{OPT}}
\newcommand{\defnn}[1]{\textbf{\emph{#1}}}
\renewcommand{\paragraph}{\medskip \textbf}
\title{Optimal Time-Backlog Tradeoffs for the Variable-Processor Cup Game}
\author{William Kuszmaul\thanks{Massachusetts Institute of Technology. Email: \texttt{kuszmaul@mit.edu}} , Shyam Narayanan\thanks{Massachusetts Institute of Technology. Email: \texttt{shyamsn@mit.edu}.}}
\begin{document}

\maketitle

\begin{abstract}
The \emph{$ p$-processor cup game} is a classic and widely studied scheduling problem that captures the setting in which a $p$-processor machine must assign tasks to processors over time in order to ensure that no individual task ever falls too far behind. The problem is formalized as a multi-round game in which two players, a filler (who assigns work to tasks) and an emptier (who schedules tasks) compete. The emptier's goal is to minimize backlog, which is the maximum amount of outstanding work for any task.

Recently, Kuszmaul and Westover (ITCS, 2021) proposed the \emph{variable-processor cup game}, which considers the same problem, except that the amount of resources available to the players (i.e., the number $p$ of processors) fluctuates between rounds of the game. They showed that this seemingly small modification fundamentally changes the dynamics of the game: whereas the optimal backlog in the fixed $p$-processor game is $\Theta(\log n)$, independent of $p$, the optimal backlog in the variable-processor game is $\Theta(n)$. The latter result was only known to apply to games with \emph{exponentially many} rounds, however, and it has remained an open question what the optimal tradeoff between time and backlog is for shorter games. 

This paper establishes a tight trade-off curve between time and backlog in the variable-processor cup game. We show that, for a 
game consisting of $t$ rounds, the optimal backlog is $\Theta (b (t)) $ where
$$
b(t) = \begin{cases}
t &\text { if } t\le \log n\\
t ^ { 1/3 } \log ^ { 2/3 } \left(\frac{n ^ 3}{t} + 1\right) &\text { if }\log n < t \le n ^ 3\\
n &\text { if } n ^ 3 < t.
\end{cases}
$$
An important consequence is that the optimal backlog is $\Theta(n)$ if and only if $t \ge \Omega(n^3)$. 
Our techniques also allow for us to resolve several other open questions concerning how the variable-processor cup game behaves in beyond-worst-case-analysis settings. 
\end{abstract}

\section{Introduction} \label{sec:intro}

\textbf{The classical $p$-processor cup game.} 
The $p$ processor cup game captures the general problem in which there are some number $n$ of tasks competing for a smaller number $p$ of processors~\cite{BaruahCoPl96,GkasieniecKl17,BaruahGe95,LitmanMo11,LitmanMo05,MoirRa99,BarNi02,GuanYi12,Liu69, LiuLa73,DietzRa91, BenderFaKu19, Kuszmaul20, AdlerBeFr03, DietzSl87, LitmanMo09, im2021matroid}. 
A scheduler must assign tasks to processors over time in order to ensure that no individual task ever falls too far behind.

Formally, this is captured as a game with $n$ cups, each capable of holding an arbitrarily large amount of water, and two competing players, a filler and an emptier.
 In each round of the game, the filler distributes $p$ new units of water into the cups, placing at most 1 unit of water into any particular cup. 
 The emptier then selects $p$ distinct cups and removes up to 1 unit of water from each of them. Note that, whereas the filler may place their $p$ units of water in fractional amounts across arbitrarily many cups, the emptier can only choose $p$ cups per step to empty from. The emptier's goal is to minimize the backlog of the system, which is the amount of water in the fullest cup.

If one views the cup game as a scheduling problem, then the cups represent tasks, the water represents work, the filler represents an adversary that determines when work arrives,
and the emptier represents a scheduler that can select $p$ tasks to run on a given time step (we will use the terms ``round'' and ``time step'' interchangeably). Although we will primarily be interested in the cup game as a scheduling problem~\cite{BaruahCoPl96,GkasieniecKl17,BaruahGe95,LitmanMo11,LitmanMo05,MoirRa99,BarNi02,GuanYi12,Liu69, LiuLa73, AdlerBeFr03, LitmanMo09, DietzRa91}, it has also found
applications to many other problems (e.g. deamortization of data structures~\cite{AmirFaId95,DietzRa91,DietzSl87,AmirFr14,Mortensen03,GoodrichPa13,FischerGa15,Kopelowitz12,BenderDaFa20}, network-switch buffer management~\cite{Goldwasser10,AzarLi06,RosenblumGoTa04,Gail93}, quality-of-service guarantees~\cite{BaruahCoPl96,AdlerBeFr03,LitmanMo09}, etc.).  

Beginning in the late 1960s, much of the early work on the $p$-processor cup game focused on the fixed-rate version of the game,
in which the filler's behavior is the same at every round~\cite{BaruahCoPl96,GkasieniecKl17,BaruahGe95,LitmanMo11,LitmanMo05,MoirRa99,BarNi02,GuanYi12,Liu69, LiuLa73}. In this version of the game, it is possible for the emptier to achieve a backlog
of $O(1)$, both in the single-processor cup game (i.e., $p = 1$) \cite{Liu69, LiuLa73} and in the multi-processor cup game (i.e., $p > 1$) \cite{BaruahCoPl96}.
In recent decades, much of the research has shifted to focus on the non-fixed-rate version of the game, in which the filler is an adaptive adversary that can change their behavior from step to step~\cite{AdlerBeFr03, DietzSl87, Kuszmaul20, LitmanMo09, Bar-NoyFrLa02, FleischerKo04, DamaschkeZh05}.
In this setting, it is possible for the emptier to achieve backlog $O (\log n) $ \cite{AdlerBeFr03, DietzSl87, Kuszmaul20},
and this is known to be asymptotically optimal for all $p \le n - \sqrt{n}$ \cite{Kuszmaul20}. There is also a long history of researchers applying
techniques from beyond-worst-case analysis to cup games, e.g., resource augmentation~\cite{BenderFaKu19, Kuszmaul20, LitmanMo09, DietzRa91}, smoothed analysis~\cite{Kuszmaul20, BenderFaKu19}, adversary restrictions~\cite{BenderFaKu19, Kuszmaul20, BenderKu20, tail, DietzRa91}, semi-clairvoyance~\cite{LitmanMo09}, etc. 

A repeating theme in these directions of work has been the relative difficulty of analyzing the multi-processor case in comparison to the
single-processor case. As Liu discussed in his seminal 1969 paper~\cite{Liu69}, and as many later authors have subsequently reiterated~\cite{LitmanMo09,BaruahCoPl96, GuanYi12, BenderFaKu19, Kuszmaul20}, the difficulty of the multi-processor case stems from the fact that the emptier must remove water from $p$ \emph{distinct} cups,
even if the vast majority of the water is in a smaller number of cups. For both the fixed-rate and the non-fixed-rate games,
the optimal backlog in the multi-processor version of the game~\cite{BaruahCoPl96, Kuszmaul20} was proven decades after the corresponding result for the single-processor game was first shown~\cite{Liu69, DietzSl87}.

\paragraph{The variable-processor cup game.} Recent work by Kuszmaul and Westover~\cite{variable} has considered the question of what happens if the parameter $p$ is permitted to change over time, with the filler
adaptively determining both what value of $p$ will be used at each round and where the $p$ new units of water are placed. The resulting game, which is known as the variable-processor cup game,
captures settings in which the amounts of resources available to the players fluctuate over time.\footnote{As discussed in~\cite{variable}, there is no fundamental reason
why the amount $p$ of resources should fluctuate in the same way for the filler as it does for the emptier over time. However, by assuming that $p$ is always the same for both players,
one ensures that there is a fair playing field: neither player has an advantage over the other in terms of their resources.}

The problem of what to do when computing resources fluctuate has received increasing attention
in recent years due to the proliferation of shared-computing systems in which multiple users and virtual operating systems simultaneously run
on a single physical multi-core machine; the fact that the machine is shared means that
the amount of resources (eg., cache, processors, memory bandwidth, etc.) available to each user is constantly changing, depending on the current demands
of other users. This phenomenon has led researchers to revisit problems in which computing resources have traditionally been viewed as static~\cite{CA1, CA2, CA3, CA4, CA5, variable}.

Intuitively, the variable-processor cup game would seem to be relatively similar to its classical $p$-processor counterpart.
Indeed, the backlog in the $p$-processor cup game is $O (\log n) $ regardless of the value of the parameter $p$, suggesting that the same should be true
if $p $ is permitted to vary. The central result of~\cite{variable} is that this intuition turns out to be completely wrong: given sufficiently many time steps,
the filler can actually force a backlog of $\Omega(n)$ in the variable-processor cup game, and this backlog is asymptotically optimal.

In order to achieve the backlog of $\Omega(n)$, the authors~\cite{variable} construct a strategy for the filler in which the number of processors $p$ follows a recursive ``fractal-like'' pattern.
The recursive structure requires a relatively large number of time steps to complete---to achieve the full backlog of $\Omega(n)$,
the construction requires exponentially many time steps.

The unexpectedly large backlog
prompts several questions. The main open question is the problem of determining the optimal trade-off between backlog and time in the variable-processor cup game,
and, in particular, what the optimal backlog is in games of polynomial lengths. 
In this setting, it is not even known whether the filler can achieve polynomial backlog---the best known filler strategy in this case~\cite{variable} achieves backlog $2^{O(\sqrt{\log n})}$.
Understanding the optimal trade-off between time and backlog in polynomial-length games is especially important since, for instance, one may have a bound on the number of rounds in a given scheduling application, which may allow for a much better guarantee on the backlog.

The authors of~\cite{variable} also raise the question of whether smaller bounds on backlog can be achieved via beyond-worst-case analysis. Based on their results, they propose two directions, in particular, that seem promising. One is to place an additional restriction on the filler that $p$ can only change at a certain rate;
this would thwart the recursive structure of their lower bound construction which changes $p$ dramatically between levels of recursion. The other is to consider the use of resource augmentation,
meaning that the emptier is allowed to remove slightly more water in each time step than the filler is permitted to place into the cups. This direction seems promising due to the
large amount of time required by the filling strategy of~\cite{variable}, since over such a large amount of time, the resource augmentation would potentially offer a large advantage to the emptier. 

\paragraph {Our results.} 
The main result of this paper is a tight trade-off curve between time and backlog in the variable-processor cup game. We show that, for a game consisting of $t$ rounds,
the optimal backlog is $\Theta (b (t)) $ where
\begin{equation} \label{trade-off-curve}
b(t) = \begin{cases}
t &\text { if } t\le \log n\\
t ^ { 1/3 } \log ^ { 2/3 } \left(\frac{n ^ 3}{t} + 1\right) &\text { if }\log n < t \le n ^ 3\\
n &\text { if } n ^ 3 < t.
\end{cases}
\end{equation}
By optimal, we mean that there exists an emptying strategy such that no filler can achieve backlog greater than $\Omega(b(t))$ after $t$ rounds, and there exists a filling strategy such that no emptier can achieve backlog less than $O(b(t))$ after $t$ rounds.
In the case of the emptying strategy, we show that this tradeoff curve is achieved by the greedy emptying algorithm (i.e., always empty from the fullest cups). 

Equation \eqref{trade-off-curve} comes with several interesting takeaways. The first is that in short games, of length $o(n^3)$, the emptier \emph{can} achieve sub-linear backlog---prior to this result, it remained open whether the emptier could even ensure $o(n)$ backlog for only $n$ rounds. The second is that in games of size $\Omega(n^3)$, the optimal backlog is $\Theta(n)$---this resolves the open question of~\cite{variable} as to whether
linear backlog can be achieved in polynomial time. The third is that the optimal tradeoff between backlog and time has a somewhat unexpected polylogarithmic low-order term, which disappears only when $t$ grows to be $\Omega(n^3)$. 

By examining the inverse of the function $b(t)$, another way to think about \eqref{trade-off-curve} is that, for any quantity $b \le n$, the amount of time $t(b)$ needed for an optimal filler to force a backlog of $\Omega(b)$ against an optimal emptier is 
\begin{equation}
t(b) = \Theta\left(b + \frac{b^3}{\log^2 \frac{n}{b}}\right),\label{eq:tradeoff2}
\end{equation}
and that backlogs $b = \omega(n)$ are not achievable by the filler (the latter fact, of course, is already known due to~\cite{variable}).


The second contribution of this paper is to analyze the variable-processor cup game under two forms of beyond-worst-case analysis,
each of which resolves an open question posed by~\cite{variable}.
We begin by considering the setting in which the rate at which the filler can change $p $ is severely limited: 
$p$ is permitted to change by most
$\pm 1$ per time step, and the filler can only change $p $ every $\poly (n)$ time steps.
Remarkably, this has no effect on the optimal backlog, and the filler can still force a backlog of $n/2 $ in polynomial time.
Next, we consider the setting in which the emptier has $\eps > 0$ resource augmentation, meaning that, in each time step, 
the emptier is permitted to remove up to $1+\eps $ (rather than 1) units of water from each of $p $ cups.
This has a dramatic effect on the optimal backlog, reducing it to $O(\frac{1}{\eps} \log n)$, which asymptotically matches 
the optimal backlog of the standard $p $-processor cup game when $\eps = \Omega(1)$. 

\paragraph{Techniques and paper outline. }
We formally describe the necessary preliminaries in Section \ref{sec:prelim}, and provide a detailed overview of the technical ideas in the paper in Section \ref{sec:technical}.

In Section \ref{sec:greedy_optimal}, we prove a result that ends up being useful in many of the later sections, namely that the greedy emptying algorithm is actually the \emph{exact} optimal online algorithm, and that this holds no matter what the starting state of the cups is. Interestingly, our proof {of greedy emptying being the optimal emptying strategy} also applies to the {fixed} $p $-processor cup game, for which the result was also not previously known: {in this setting, the greedy emptying algorithm was previously only known to be \emph{asymptotically} optimal~\cite{Kuszmaul20}, and this was only known for the starting state with all empty fills}.
We remark that, although the fact that greedy emptying is optimal is certainly intuitive, it actually isn't true for every variant of the cup game;
notably, the greedy emptying algorithm isn't even asymptotically optimal for the fixed-rate version of the game~\cite{AdlerBeFr03}.

In Section \ref{sec:lower}, we construct an asymptotically optimal strategy for the filler. The strategy achieves the bound in \eqref{trade-off-curve} no matter what strategy the emptier follows,
and the strategy also easily generalizes the case where the filler is restricted to change $p$ at a slow rate. 
This implies that the optimal backlog $b(t)$ is \emph{at least} that in \eqref{trade-off-curve}.
While the filling algorithm works against any emptier, we focus on the filler working against the greedy emptier, which we know is optimal.
We start with a warm-up (subsection \ref{subsec:lower_warmup}) proving that one can establish $\Omega(n)$ backlog in $O(n^3)$ time, and we then generalize our techniques to apply to games of arbitrary lengths.

The remainder of the paper has been deferred to the Appendix.
In Section \ref{sec:upper}, we turn our attention to proving a tight upper bound for the maximum backlog against a greedy emptier, that is, we show that $b(t)$ is \emph{at most} that in \eqref{trade-off-curve}.
We begin by creating a variation of the variable-processor cup game, which we call the \defnn{stone game}, in which the filler's behavior is limited in a certain combinatorially natural way. We analyze the maximum backlog of any filling strategy for the stone game by devising two potential functions and comparing their growth rates; this allows us to establish that $\Omega(n^3)$ time steps are needed to achieve backlog $\Omega(n)$ in the stone game.
We then tighten the bound on backlog when there are fewer time steps by partitioning the cups into levels and arguing that a constant fraction
of the levels interact especially nicely with the potential functions; this yields \eqref{trade-off-curve} for the stone game.
Finally, we show that, if the emptier behaves greedily, then this stone game encapsulates the main problem, that is, is always advantageous to the filler in the variable-processor cup game to act as though they are in the stone game. Thus we can transfer out bounds on the stone game into bounds on the variable-processor cup game.

Section \ref{sec:upper} is, in our opinion, the most technically involved section in our work. Perhaps the most interesting mathematical contribution in this section is to analyze the furthest backlog in the stone game by constructing two different potential functions and comparing their relative growth rates. While individual potential functions have been used to analyze cup games~\cite{BenderFaKu19}, this is the first time that comparing two potential functions has been applied to cup games.

Finally, in Section \ref{sec:resource_augmentation}, we give a very simple analysis of the variable-processor cup game in the presence of resource augmentation.
Our argument is non-constructive, employing the probabilistic method in order to show that there \emph{exists}
an emptying algorithm that achieves backlog $O(\frac{1}{\eps} \log n)$ (interestingly, the same argument 
also gives a nontrivial bound of $O(\sqrt{t \log n})$ in the resource-augmentation-free setting). 
Since the greedy emptying algorithm is optimal, it must also achieve the same bound.
To the best of our knowledge, this is the first example of the probabilistic method being used to analyze cup games.

\section{Preliminaries} \label{sec:prelim}

We first formally define the \defnn{variable-processor cup game}. In this game, there are $n$ cups of real-valued fills $x_1, \dots, x_n$, all starting at $0$, and two \emph{adaptive} players, a filler and an emptier. At each round, the filler chooses an integer $1 \le p \le n,$ chooses real numbers $a_1, \dots, a_n$ such that $0 \le a_i \le 1$ for all $i$ and $\sum_{i = 1}^{n} a_i = p,$ and replaces $x_i$ with $x_i' = x_i+a_i$ for all $1 \le i \le n$. The emptier then chooses a set $S \subset [n]$ of size $p$, and for each $i \in S$, replaces $x_i'$ with $\max(0, x_i'-1)$ but does not change $x_i'$ for $i \not\in S$. A single \defnn{round} (which we also call \defnn{time step}) consists of both the filler's and emptier's moves. We define the \defnn{state} of the cups at a fixed round to be the current sequence $\{x_1, x_2, \dots, x_n\}$ of the values of cups. Since the filler and emptier are adaptive, we note that two states are equivalent if the sequences are equal up to permutation.
We will also say that the \defnn{fills} are $x_1, x_2, \dots, x_n$ (we think of $x_i$ as the fill of the $i^{\text{th}}$ cup). Finally, for any state $X = \{x_1, \dots, x_n\}$, we define the \defnn{backlog} of $X$ as the maximum fill, or $\max_{1 \le i \le n} x_i$. The goal of the filler is to maximize the backlog after $t$ rounds for some fixed $t$, whereas the goal of the emptier is to minimize it. We also define the \defnn{variable-processor cup game with $\eps$ resource augmentation} as the same as the variable-processor cup game, except that the emptier, for each $i \in S$, replaces $x_i'$ with $\max(0, x_i'-(1+\eps))$.

Next, we define the \defnn{negative-fill variable-processor cup game} as the same as the variable-processor cup game, except that the emptier, for each $i \in S$, replaces $x_i'$ with $x_i'-1$. This may mean that some of the fills in a state become negative, which is allowed.  We analogously define round, state, fills, and backlog (note that the backlog is $\max_{1 \le i \le n} x_i$, not $\max_{1 \le i \le n} |x_i|$).

Unless explicitly stated otherwise, the \emph{standard} game refers to the variable-processor cup game, and the \emph{negative-fill} game refers to the negative-fill variable-processor cup game. We will explicitly state whenever we talk about the fixed $p$-processor cup game (where $p = \sum_{i = 1}^{n} a_i$ is fixed for every round), or the variable-processor cup game with $\eps > 0$ resource augmentation.


We conclude this section by briefly commenting on the relationship between the standard game and the negative-fill game. It is easy to see that the optimal backlog in the standard game is at least as large as the optimal backlog in the negative-fill game. What is less obvious, but worth noting, is that the optimal backlogs in the two games are actually asymptotically \emph{equal}. We provide a proof of this in Appendix \ref{sec:appendix_less_ez}. Thus, in general, either version of the game is equally valid (although the only place where we will take advantage of this in any nontrivial way will be Section \ref{subsec:upper_finish}).

\section{Technical Overview} \label{sec:technical}

In this section, we provide a technical overview for both the lower and upper bound for the variable-processor cup game, as well as the upper bound for the variable-processor cup game with resource augmentation.

\subsection{Overview of the Lower Bound on Backlog} \label{subsec:lower_overview}

In Section \ref{sec:lower}, we provide a lower bound on the backlog that the filler can achieve over $t$ rounds. For now, let us assume that we are playing the negative-fill game and that the emptier is greedy, i.e., at each time step, if the filler fills $p$ units of water, then the emptier empties $1$ unit from the $p$ fullest cups. We shall remove these assumptions at the end of the subsection.

\paragraph{Achieving backlog $\Omega(n)$ in $n^3$ steps.} Suppose at some time step, the fills are $x_1 \ge x_2 \ge \cdots \ge x_n$. Consider any subsequence $x_i, \ldots, x_j$ of even length, such that $x_i = x_{i + 1} = \cdots = x_j = k$ for some value $k$. We claim that it is possible for the filler to place water into the cups so that, once the greedy emptier has performed their move, the net change to the fills will simply be that: half of the cups $i, \ldots, j$ now have fills $k + \frac{1}{2}$ and half of the cups $i, \ldots, j$ now have fills $k - \frac{1}{2}$. Indeed, to achieve this, the filler can simply set $p = (i-1) + (j-i+1) / 2$, place $1$ unit of water into each of cups $1, 2, \ldots, i - 1$ (this water will promptly be removed by the emptier), and place $1/2$ units of water into each of cups $i, \ldots, j$ (the emptier will empty from half of these cups). 

Why is this type of move good for the filler? Consider the potential function measuring the sum of squares of the fills, i.e., $\Phi(x_1, \dots, x_n) := \sum_{i = 1}^{n} x_i^2$. If we let $q = \frac{j-i+1}{2}$, so that $q$ of the fills go up by $1/2$ and $q$ of the fills go down by $1/2$, then it is not hard to show that $\Phi$ increases by $\frac{q}{2} \ge \frac{1}{2}.$ If the filler can force this to happen for $n^3$ consecutive time steps, then $\Phi$ will increase to $\Omega(n^3)$, which means that at least one of the $|x_i|$'s must be $\Omega(n)$. If the filler is careful, then it turns out they can further ensure that the cups are symmetric (i.e., for every cup with fill $s$ there is another cup with fill $-s$). Thus if $|x_i| \ge \Omega(n)$ for some $n$, then a backlog of $\Omega(n)$ has been achieved. 

The only way that the filler might be prevented from performing this type of move for $n^3$ consecutive time steps is if, at some time step, no two cups have the same fills as each other. Note, however, that the filler always adds half-integer values to cups and the emptier always subtracts integer values, which means that $x_1, \dots, x_n$ are always half-integers. So, if the $x_i$'s are all distinct, then $\max_{1 \le i \le n} |x_i| \ge \frac{n-1}{4} = \Omega(n)$. Recalling that the filler can ensure symmetry of the cups, it follows that in this case the filler has also achieved an $\Omega(n)$ backlog in only $n^3$ time steps.

\paragraph{Considering smaller backlogs. }What if we only want to reach some backlog $o(n)$? We will now describe how the filler can achieve backlog $\Omega(t \log (n/t))$ in $O(t^3 \log (n/t))$ time steps. Combining this with some edge cases (which we defer to Section \ref{sec:lower}) results an optimal filling strategy for any backlog. 

First, we claim that within $t^3$ steps, the filler can cause $\Theta(n)$ cups to have fills $\Omega(t)$. To see this, note that if at least half of the cups have fills less than $t$, then by the pigeonhole principle, there must exist some half-integer $s$ with $|s| \le t$ and $\Omega(n/t)$ cups of fill exactly $s$. If the filler causes half of these cups to go up in fill by $1/2$ and half of them to go down in fill by $1/2$, then the net effect on $\Phi$ will be that it increases by $\Omega(n /t )$. The filler can repeatedly force $\Phi$ to increase by $\Omega(n / t)$, while keeping the maximum backlog at $t$, until either $t^3$ steps have passed and $\Phi = \Omega(n \cdot t^2)$ or until at least half of the cups have reached backlog $\pm t$. Either way, at least $\Theta(n)$ of the cups must have fills more than $\Omega(t)$ or less than $-\Omega(t)$, and recalling again that the filler can also ensure a symmetry of the cups, it follows that a constant fraction of the cups have fills $\Omega(t)$.

Once we have $cn$ cups of fill $\Omega(t)$, for some constant $c > 0$, the filler can focus on these cups, and force $c^2 n$ of these cups to fill $2 \cdot \Omega(t)$, then $c^3 \cdot n$ to fill $3 \cdot \Omega(t)$, and so on for a logarithmic number of phases. Overall, one can achieve backlog $\Omega(t \log (n/t))$ in $O(t^3 \log (n/t))$ time steps.\footnote{Note that one can only do $\log (n/t)$ phases, rather than the more intuitive $\log n$ phases, since we need at least $t$ cups to reach fill $a \cdot t$ in order for anything to reach fill $(a+1) \cdot t$.} 

\paragraph{The final piece: establishing that greedy emptying is optimal.}
To conclude our overview of the lowerbound, let us revisit the assumptions that (a) we are playing the negative-fill game, and (b) the emptier is playing greedily. The first assumption is trivial to remove, since the filler in the standard game is strictly better off than the filler in the negative-fill game. The second assumption, that the emptier is playing greedily, requires us to prove that the greedy emptying algorithm is always optimal (this ends up being useful to have for later results as well). 

To prove the greedy emptying is optimal, in Section \ref{sec:greedy_optimal}, we construct a specially designed poset on the possible states $X$ of the system. Say that a state $X = \{x_1, \ldots, x_n\}$ \defnn{weakly monopolizes} a state $Y = \{y_1, \ldots, y_n\}$ if it is possible to order the cups such that either:
\begin{itemize}
\item $x_i \ge y_i$ for all $i$; 
\item or we have (a) that $x_i = y_i$ for all $i > 2$, that (b) $x_1 > y_2$, and that (b) we can get from $X$ to $Y$ by removing exactly $ 1 $ unit of water from cup $ 1 $ and placing some quantity $0 \le c \le 1$ of water into cup $2$. 
\end{itemize}
The transitive closure of weak monopolization induces a partial ordering on the set of all possible system states, where $A \ge B$ if there is a sequence $A = A_1, A_2, \ldots, A_j = B$ such that each $A_i$ weakly monopolizes each $A_{i + 1}$. 

We prove that, given the choice between two states $A, B$ with $A \ge B$, the emptier should always prefer state $B$. Furthermore, starting from any state $X$, greedy emptying always results in a state $B$ that satisfies $B \le A$ for every other state $A$ that the emptier could have reached from $X$. Thus the greedy emptying algorithm is optimal.

\subsection{Overview of the Upper Bound on Backlog} \label{subsec:upper_overview}

In Section \ref{sec:upper}, we show that the greedy emptying algorithm achieves an upper bound on backlog matching the lower bound of Section \ref{sec:lower}.
We first consider a filler that is restricted to only making moves of the form described in Subsection \ref{subsec:lower_overview}, that is, so that the net effect of each round is that some number $2q$ of cups at some height $k$ are replaced with 
$q$ cups at height $k + 1/2$ and $q$ cups at height $k - 1/2$.
We will later show that these types of moves are (essentially) always optimal for the filler, which means this restriction is actually without loss of generality. 
Considering this restricted filler along with a greedy emptier leads to the following simple combinatorial problem, which we call the \defnn{stone game} (we call it the \defnn{stone-variant cup game} in Section \ref{sec:upper}):

\begin{quoting}
\noindent
Suppose you have $n$ stones on a number line, all starting at $0$. At each time step $t$, you may pick any point $k$ on the number line with $2$ or more stones, choose any integer $q \ge 1$ such that there are at least $2 \cdot q$ stones at $k$, and move $q$ of the stones at position $k$ to position $k-1$ and $q$ of the stones at position $k$ to position $k+1$. If you repeat this for $T$ time steps, what is the furthest that any of the $n$ stones may be from the origin?
\end{quoting}

\paragraph{Analyzing the time to get a stone to position $\Omega(n)$.} To analyze the stone game, we start by considering how long it takes for some stone to reach $b := n/10$ in absolute value. Our goal is to show that one needs at least $T = \Omega(n^3)$ time steps. To highlight the relationship between the cup game and the stone game, we shall sometimes refer to the distance of the furthest stone from the origin simply as the backlog. 

Let the positions of the stones be $x_1, \dots, x_n$. We start by recalling the potential function $\Phi = \sum_{i = 1}^{n} x_i^2$. One would naturally hope that $\Phi$ could help us establish a bound of $T= \Omega(n^3)$  (just as it helped us with the $T = O(n^3)$  bound in Subsection \ref{subsec:lower_overview}). For example, one way to prove the $\Omega(n^3)$ time bound would be to show that $\Phi$ increases by at most $O(1)$ in each step, and that $\Phi$ takes a value of at least $\Omega(n^3)$ after the final step (i.e., when backlog $b = n/10$ is achieved). Unfortunately, we run into two problems. The first problem is that even if $\max |x_i| = b = n/10,$ we may still have that $\sum_{i = 1}^{n} x_i^2$ is just $O(n^2)$ after the final step (as opposed to the desired $\Omega(n^3)$). The second, more difficult problem is that the number of stones we move in each direction could be large at each time step. If we move $q$ stones right and $q$ stones left, $\Phi$ increases by $2q$ (rather than the desired $O(1)$), which could be as large as $n$. Together, these two problems mean that, a priori, we can only get a trivial $T = \Omega(n)$ bound, since the change in the potential function $\Phi$ is up to $n$ at each time step and the final potential may be as small as $\Theta(n^2)$.

The first problem can be resolved with a more careful analysis: in particular, one can show that a backlog of $\Omega(n)$ actually does imply $\Phi = \Omega(n^3)$. The key is to prove that there can never be large gaps between consecutive stones. Namely, one can show that if there exists a stone at some position $k > 0$, there must be at least one stone at position $k-1$ or $k-2$, and likewise, if there exists a stone at $k < 0$, there must be at least one stone at position $k+1$ or $k+2.$ As a result, if there is a stone at $b = n/10,$ there must be a stone at position either $b-1$ or $b-2$, at either $b-3$ or $b-4$, and so on, so one can show that there are $\Omega(b)$ stones at positions $b/2$ or greater. Thus, if there is a stone at position $b$, then $\Phi \ge \Omega(b) \cdot (b/2)^2 \ge \Omega(n^3)$, as desired.

The more difficult piece of the analysis is to show that, even though $\Phi$ can grow significantly in a single step, $\Phi$ only increases \emph{on average} by $O(1)$ per step. An important insight here is to create a second potential function, $\Psi$, and compare the growth rates of $\Psi$ and $\Phi$. We define this new potential function $\Psi := \sum_{i < j} |x_i-x_j|,$ where $x_1, \dots, x_n$ are the locations of the $n$ stones. 
In a given step of the stone game, if we move $q$ stones up from $k$ to $k+1$ and $q$ stones down from $k$ to $k-1$, the first potential function $\Phi$ increases by $2q$. However, one can show that $\Psi$ must increase by at least $2q^2$ -- the core reason for this is that the $q$ stones that moved up now each have distance 2 from the $q$ stones that moved down, whereas before they had distance $0$. Now suppose for contradiction that $\Phi$ grows on average by some amount $\overline{q} = \omega(1)$ per step. This would mean that the average growth of $\Psi$ per step is at least $\Omega(\overline{q}^2)$, so the final values of $\Psi$ and $\Phi$ must satisfy $\Psi = \omega(\Phi)$.
However, we know that $\Phi = \Omega(n^3)$ at the end of the game, and it is not hard to see that $\Psi$ must be $O(n^3)$, since all of the $x_i$'s are bounded in the range $[-O(n), O(n)]$. Thus we have a contradiction, and the average growth of $\Phi$ per step is actually $O(1)$. 
The fact that $\Phi$ is $\Omega(n^3)$ at the end of the game and grows by $O(1)$ on average per step is sufficient to show that the time $T$ needed to achieve backlog $b = n / 10$ in the stone game is at least $\Omega(n^3)$ time steps. 

\paragraph{Considering smaller backlogs in the stone game. }But what happens if we wish to analyze the time needed to achieve backlog $k$ in the stone game, for $k \ll n$? If, at the end of the game, we knew that at least a constant fraction of the stones were in positions $\Omega(k)$, then we could use the same argument as the one above to show that the average growth rate of $\Phi$ is $O(1)$. Unfortunately, the number of stones at positions $\Omega(k)$ might be as small as $O(k)$. 
In this case, $\Phi$ could be as small as $O(k^3)$, whereas $\Psi$ must be at least $\Omega(k^2 n)$. If the average growth of $\Phi$ were $\Theta(n/k)$ and the average growth of $\Psi$ were $\Theta((n/k)^2)$, these values of $\Phi$ and $\Psi$ could be obtained in $O(k^4/n)$ rounds, which is much smaller than our desired bound of $\Omega(k^3 / \log^2(n / k))$.


To obtain the optimal bound of $\Omega(k^3/\log^2(n/k))$, we introduce another variant of the stone game that has what we call checkpoints: namely, we choose an integer $\ell$ and play the same game but now, once a stone has reached a position $a \cdot \ell$ for any $a \ge 0$, it can never go below it. In other words, if we move $q$ stones from $a \cdot \ell$ to $a \cdot \ell+1$, rather than moving $q$ stones from $a \cdot \ell$ down to $a \cdot \ell - 1$, we keep them as is. A key insight is that, if a player wishes to get a stone to position $10 \log n \cdot \ell$, the player must have the property that, for the majority of the checkpoints, the player gets at least half of the stones that reach checkpoint $a \cdot \ell$ all the way up to checkpoint $(a+1) \cdot \ell$. 

We can think of the steps of the stone game (with checkpoints) as being split into subgames, where each subgame takes place between two checkpoints $a \cdot \ell$ and $(a + 1) \cdot \ell$. We analyze each subgame individually by creating potential functions $\Phi_a, \Psi_a$ between each pair of checkpoints at $a \cdot \ell$ and $(a+1) \cdot \ell$. At least half of the subgames have the property that at least half of their stones make it to the next checkpoint, and this property makes each such subgame amenable to being analyzed using $\Phi_a$ and $\Psi_a$. By analyzing the subgames individually, we can show that the total length of the full stone game (which is the sum of the lengths of the subgames) is at least $\Omega(k^3/\log^2 (n/k))$. 

Finally, one can show that adding the checkpoints to the original stone game only makes it easier for the player to achieve a large backlog. 
This involves proving that if states $X$ and $Y$ have $x_i \ge y_i$ for all $i$, then given any stone-game operation on $Y$ to obtain a state $Y'$, we can generate a corresponding (but perhaps different) stone-game operation on $X$ to obtain $X'$, and preserve the ordering $x_i' \ge y_i'$ for all $i$.
Thus, the $T = \Omega(k^3/\log^2 (n/k))$ bound also applies to the original stone game.

\paragraph{Transferring bounds from the stone game to the cup game.} Overall, the potential-function arguments above get optimal bounds for the \emph{stone game} but they do not directly give bounds for the more general variable-processor cup game. The main issue is that in certain time steps in the cup game, it might be possible for the filler to make ``backward'' moves where both $\Phi$ and $\Psi$ decrease considerably: in the worst case, $\Psi$ can even decrease by up to $n^2$. Our analysis of the stone game relied heavily on the fact that the change in $\Psi$ was comparable to the square of the change in $\Phi$ (in a given step), but if $\Psi$ can decrease significantly in a single step, then our comparison of growth rates no longer works. Our fix for this is not to modify the potential functions, but rather to show that these backward moves, or any other ``non-stone game'' moves, are never advantageous to the filler, even in the long run.

To show that non-stone-game moves are never advantageous for the filler, we analyze the relationship between the stone game and the variable-processor cup game. 
Say that a sequence $X = \{x_1, \dots, x_n\}$ of real numbers \defnn{majorizes} another sequence $Y = \{y_1, \dots, y_n\}$ (where $x_1 \ge \cdots \ge x_n$ and $y_1 \ge \cdots \ge y_n$) if $x_1+ \cdots+x_i \ge y_1+\cdots+y_i$ for all $1 \le i \le n$, and if $\sum_i x_i = \sum_i y_i$. We prove that, for any sequence of cup game rounds, one can create a corresponding sequence of stone game rounds such that after every round, the sequence of stone positions majorizes the sequence of cup fills. Since $X$ majorizing $Y$ implies that $x_1 \ge y_1$, we get that the maximum backlog after $T$ steps of the cup game is at most the maximum backlog after $T$ steps of the stone game. Therefore, even if we cannot utilize the potential functions on the general cup game, it suffices to look at the stone game as it will have a greater maximum backlog.

The main technical claim needed to show this majorization result is to show that if a sequence $X$ majorizes a sequence $Y$, and if $Y$ can be converted to a sequence $Y'$ in one round of filling/greedy-emptying in the variable-processor cup game, then it is possible to convert $X$ into some $X'$ (also with a single round of filling/greedy emptying) such that $X'$ majorizes $Y'$. This claim is quite casework-heavy and crucially uses the fact that we are in the \emph{variable-processor} cup game. Indeed, the choice of $p$ may have to differ between the round performed on $X$ and the round performed on $Y$, and perhaps surprisingly, the claim is actually \emph{false} in the fixed $p$-processor cup game.

\subsection{Overview of Resource-Augmentation Analysis} \label{subsec:resource_augmentation_overview}

In Section \ref{sec:resource_augmentation}, we analyze the variable-processor cup game with $\eps $ resource augmentation (and non-negative fills), meaning that in each time step, the emptier is permitted to remove up to $ 1+\eps $ units of water from each of $ p $ cups (rather than just $ 1 $ unit of water).

We prove that even a very small amount of resource augmentation significantly decreases backlog of the game: the greedy emptying algorithm achieves backlog $ O (\eps ^ { -1 }\log n) $.

The proof uses the probabilistic method. Rather than analyzing the greedy emptying algorithm directly, we instead construct a randomized emptying algorithm that, at any given moment, achieves backlog $ O (\eps ^ { -1 }\log n) $ with \emph{non-zero} probability. (Importantly, the randomized algorithm is against an \emph{adaptive} filler, not an oblivious one.) The existence of such a randomized algorithm non-constructively implies the existence of a deterministic emptying algorithm with the same guarantee. But we already know that the best deterministic emptying algorithm is the greedy one. Thus greedy emptying achieves backlog $ O (\eps ^ { -1 }\log n) $ (deterministically). 

To simplify our discussion here, let us consider only games of polynomial length (in Section \ref{sec:resource_augmentation} we consider arbitrary game lengths). In this case, our randomized emptying algorithm can simply take an approach that we call  \defnn{proportional emptying}: in each time step of the game, if the filler places some amount $ q_j $ of water into cup $ j $, then the emptier empties from cup $ j $ with probability exactly $ q_j $. 

To analyze proportional emptying, we show that, at any given moment, each cup has fill $ O (\eps ^ { -1 }\log n) $. Roughly speaking, the amount of water in each cup can then be modeled as a biased random walk: in each time step, the expected amount of water that the emptier removes is a factor of $ 1+\eps $ larger than the amount of water that the filler inserts. The bias in the random walk prevents it from ever reaching a large fill. The result is that a simple Chernoff-style analysis (modified using a variation on Azuma's martingale inequality to handle the fact that the filler is an adaptive adversary) can be used to bound the fill in each cup by $ O (\eps ^ { -1 }\log n) $ (with high probability).

Interestingly, the above argument also immediately yields a nontrivial bound in the resource-augmentation-free setting. Now the amount of water in each cup follows an \emph{unbiased} random walk. At any given time step $t$, one can bound the height of such random walk by $O(\sqrt{t \log n})$ with high probability. Using the fact that greedy emptying is as good as any randomized emptying strategy, it follows that greedy emptying achieves backlog $O(\sqrt{t \log n})$ in a $t$ time step game.

\section{The Greedy Emptier is Always Optimal} \label{sec:greedy_optimal}

Intuitively, the greedy emptying algorithm (i.e., always empty from the fullest cups) should be the optimal deterministic emptying algorithm (for both the $ p $-processor cup game and the variable-processor cup game). This intuition is known to be true for the single-processor cup game starting in a state with all empty cups (in particular, the lower and upper bounds on backlog match in this case \cite{AdlerBeFr03}), but whether or not the intuition is correct in general has remained an open question. (We remark that there are variants of the game, for example the fixed-rate cup game, where greedy emptying is \emph{not} optimal, even asymptotically \cite{AdlerBeFr03}.) In this section, we prove that greedy emptying is, in fact, optimal for both of the $ p $-processor cup game and the variable-processor cup game.. That is, for any starting state of the cups, and for any game length, greedy emptying is the best possible algorithm for minimizing backlog against an adaptive filler.

For any state $S$ and any length $t$, define $\OPT(S, t)$ to be the supremum backlog that a filler can achieve at the end of a $t$-step game starting at state $S$ assuming the emptier plays optimally. That is, $\OPT(S, 0)$ is just the amount of water in the fullest cup of $S$, and $\OPT(S, t)$ for $t > 0$ is defined by induction as
$$\sup_{S' \text{ reachable from }S \text{ by filler}} \phantom{foo} \min_{S'' \text{ reachable from }S'\text{ by emptier}} \OPT(S'', t - 1).$$
Note that this also allows for us to talk about the \defnn{optimal emptier}, which is the emptier that achieves backlog at most $\OPT(S, t)$ in any $t$-step game starting at any state $S$.

Define $\GREEDY(S, t)$ to be the supremum backlog that a filler can achieve at the end of a $t$-step game starting at state $S$, assuming the emptier plays greedily.
We wish to prove that $\OPT(S, t) = \GREEDY(S, t)$ for all $S, t$. Throughout the section we shall prove all of our results for both the versions of the games with non-negative fills and the versions of the games with negative fills.

We say that a state $ A $ \defnn{monopolizes}  a state $ B $ if it is possible to assign labels $ 1, 2,\ldots, n$ to the cups in $ A $ and $ B $ such that:
\begin{itemize}
\item cups $ 3, 4,\ldots, n $ contain the same amounts of water in both states;
\item cup 1 in $ A $ contains more water than cup 2 in $ B $;
\item cup 1 in $A$ contains one more unit of water than cup 1 in $B$;
\item cup 2 in $B$ contains $c$ more units of water than cup 2 in $A$ for some $0 \le c \le 1$.
\end{itemize}
In other words, you can get from $ A $ to $ B $ by removing 1 unit of water from cup 1 and and placing $c \le 1$ units into cup 2, and cup 1 in $A$ contains more water than cup 2 in $B$. 

We say that $ A $  \defnn{dominates} $B$ if it is possible to label the cups such that each cup $ i $ in $A$ contains at least as much water as cup $i$ in $B$. Finally, we say that $ A $ \defnn{weakly monopolizes} $ B $ if either $ A $ monopolizes $ B $, or $ A $ dominates $ B $. 

We now prove several properties of weak monopolization in the $p$-processor cup game (for any $p$). Our first lemma says that, if $ A $ weakly monopolizes $ B $, then for any filler move on $B$, there is some filler move on $A$ that preserves the weak monopolization of $A$ over $B$.
\begin{lemma} \label{lem:dom1}
Suppose $ A $ weakly monopolizes $ B $. Suppose that there is a $p$-processor filler move that transforms $B$ into some $B'$. Then there exists a $ p $-processor filler move that transforms $A$ into some $A'$ such that $A'$ weakly monopolizes $ B' $.  
\end{lemma}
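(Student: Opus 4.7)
The plan is to case-split according to which of the two clauses in the definition of weak monopolization is responsible for $A$ weakly monopolizing $B$, and in each case explicitly construct an appropriate filler move on $A$. Throughout, write $d_1, \ldots, d_n \in [0,1]$ with $\sum_i d_i = p$ for the filler's choice on $B$.

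If $A$ dominates $B$, I would label cups so that $a_i \ge b_i$ for every $i$ and have the $A$-filler simply copy the $B$-filler's move ($d_i' = d_i$). Since adding equal amounts to a pointwise-dominating sequence preserves the inequality, $A'$ dominates $B'$ under the same labeling and we are done.

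If $A$ strictly monopolizes $B$, I would fix the labeling from the definition: $a_i = b_i$ for $i \ge 3$, $a_1 = b_1 + 1$, $b_2 = a_2 + c$ with $c \in [0,1]$, and crucially $a_1 > b_2$. My first attempt is again the copy move $d_i' = d_i$. A quick check shows that this automatically preserves the first three bullets in the definition of monopolization (same $c$), and preserves the last one (cup 1 of $A'$ exceeds cup 2 of $B'$) exactly when $b_1 + d_1 + 1 > b_2 + d_2$; in this subcase the copy move finishes the proof on the nose.

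The main work lies in the remaining subcase $b_1 + d_1 + 1 \le b_2 + d_2$, where copying destroys monopolization. My plan here is to redirect some of the $A$-filler's water from cup 2 into cup 1 by exactly the deficit $\Delta := (b_2 + d_2) - (b_1 + d_1 + 1) \ge 0$, using $d_1' = d_1 + \Delta$, $d_2' = d_2 - \Delta$, and $d_i' = d_i$ for $i \ge 3$. This choice is engineered so that $a_1' = b_2'$ exactly after the move; a short direct computation then gives $a_2' - b_1' = 1 - c \ge 0$, so relabeling to pair $A'$'s two modified cups crosswise with $B'$'s exhibits $A'$ as dominating $B'$. The step I expect to be most delicate is verifying that this adjusted move is a legal $p$-processor filler move, i.e., that $d_1', d_2' \in [0,1]$; this will come from combining the strict inequality $a_1 > b_2$ (equivalently $b_2 - b_1 < 1$) with the trivial bounds $d_1 \ge 0$ and $d_2 \le 1$ to conclude $\Delta < \min(d_2, 1 - d_1)$, which is exactly what feasibility demands. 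The real insight is to identify this precise $\Delta$: it is what collapses both the feasibility check and the dominance inequality at cup 2 into inequalities already guaranteed by the monopolization hypothesis.
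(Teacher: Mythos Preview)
Your proof is correct and follows essentially the same approach as the paper: both arguments first try copying the filler move verbatim, and in the subcase where this destroys the inequality $a_1'>b_2'$, both redirect exactly the deficit amount of water between cups $1$ and $2$ so that $a_1'=b_2'$, then conclude $A'$ dominates $B'$ via the crosswise pairing (your $\Delta$ equals the paper's $s$, and your $(d_1+\Delta,\,d_2-\Delta)$ coincides with the paper's $(r+s,\,q+r)$). If anything, your write-up is slightly more careful, since you explicitly verify feasibility of the adjusted move and correctly state the conclusion as domination rather than equality of states.
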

\begin{proof}
If $ A $ dominates $ B $, then the lemma is trivial. Thus we can assume that $A$ monopolizes $B$. Let $q$ be the amount by which cup 1 in $A$ contains more water than cup 2 in $B$.

Define $ X $ to be a state that we reach from $ A $ if we perform the same filler move that transforms $B$ into $ B' $. If cup 1 in $X$ contains more water than cup 2 in $B'$ then we can set $A' = X$ and be done.

Suppose cup 1 in $X$ does not contain more water than cup 2 in $B'$. Then, in the transformation from $B$ to $B'$, the filler must have placed at least $q$ more water into cup 2 than into cup 1. Let $r$ be the amount of water that the filler placed into cup $1$, and let $r + q + s$ (for some $s \ge 0$) the amount of water that the filler placed into cup $2$. 

Now define $A'$ to be the state that one would reach from $A$ by performing the following $p$-processor filler move: place $r + s$ units of water into cup 1 and $q + r$ units of water into cup 2 (and then place water into cups $3, 4, \ldots, n$ in the same way as to transform $B$ to $B'$). Cup 1 in $ A' $ contains the same amount of water as cup 2 in $ B' $, and cup 2 and $ A' $ contains the same amount of water as cup 1 in $ B' $. Thus $A'$ and $B'$ are equivalent states, meaning that $A'$ weakly monopolizes $B'$.
\end{proof}

Our next lemma says that, if $ A $ weakly monopolizes $ B $, and if $ A $ is then greedily emptied from, then it is possible to empty from $ B $ in such a way that the monopolization relationship is preserved.
\begin{lemma} \label{lem:dom2}
Suppose $A$ weakly monopolizes $B$. Let $A'$ be the state reached if a $p$-processor emptier greedily empties from $A$. Then there exists a valid $ p $-processor emptier move on $ B $ that achieves some state $B'$ such that $A'$ weakly monopolizes $B'$. 
\end{lemma}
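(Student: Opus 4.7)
The plan is to split on whether $A$ dominates $B$ or $A$ strictly monopolizes $B$, and in the monopolization case to split further on how the greedy-empty set $S$ chosen on $A$ interacts with cups $1$ and $2$ (under the labeling from the monopolization). If $A$ dominates $B$, I have the emptier on $B$ use the same set $S$: for $i \notin S$ we get $a_i' = a_i \ge b_i = b_i'$, and for $i \in S$ we get $a_i' = \max(0, a_i - 1) \ge \max(0, b_i - 1) = b_i'$, so dominance is preserved (the $\max$ is trivial in the negative-fill version).

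Now suppose $A$ strictly monopolizes $B$ under labels with $a_1 = b_1 + 1$, $a_2 = b_2 - c$ for some $c \in [0, 1]$, $a_i = b_i$ for $i \ge 3$, and $a_1 > b_2$. First I observe that $a_1 > b_2 \ge a_2$ strictly, so greedy will never empty cup $2$ without also emptying cup $1$; this rules out the subcase $2 \in S,\ 1 \notin S$. In the subcase $\{1,2\} \cap S = \emptyset$ I have $B$ empty the same $S$: cups $1$ and $2$ are unchanged in both states, so the original witness to monopolization persists. In the subcase $1 \in S,\ 2 \notin S$ I have $B$ empty $T = (S \setminus \{1\}) \cup \{2\}$, swapping cup $1$ for cup $2$; then $a_1' = b_1 = b_1'$ while $a_2' = b_2 - c$ and $b_2' = \max(0, b_2 - 1)$, and a small case check (on whether $b_2 \ge 1$, using $b_2 \ge c$ when fills are non-negative) yields $a_2' \ge b_2'$, so $A'$ dominates $B'$.

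The remaining subcase, $\{1, 2\} \subset S$, is the delicate one. I have $B$ again empty $S$. In the negative-fill version, $a_1' - b_1' = 1$ and $a_2' - b_2' = -c$, with $a_1' = b_1 > b_2 - 1 = b_2'$ by $a_1 > b_2$, so monopolization passes through directly under the natural labeling. In the non-negative version, however, the clamp $\max(0, \cdot)$ can shrink $a_1' - b_1'$ strictly below $1$, invalidating that labeling. The fix is to abandon it and permute the cups of $B'$: using the inequality $b_1 > b_2 - 1$ (which follows from $a_1 > b_2$), pair the $A'$-cup of value $b_1$ with the $B'$-cup of value $\max(0, b_2 - 1)$ and pair the $A'$-cup of value $\max(0, b_2 - c - 1)$ with the $B'$-cup of value $\max(0, b_1 - 1)$. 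A finite subcase analysis on the signs of $b_1 - 1$ and $b_2 - c - 1$ then shows that either this relabeling (or the naive one when $b_1 \ge 1$) witnesses dominance or monopolization of $A'$ over $B'$.

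The main obstacle is precisely this last subcase in the non-negative game: when emptying cup $1$ (or cup $2$) in $B$ clamps at $0$, the clean $+1 / -c$ bookkeeping of monopolization breaks under the inherited labeling. The remedy is essentially combinatorial---swap labels in $B'$ using $b_1 > b_2 - 1$---but the case enumeration is somewhat tedious. Everything else in the proof is routine verification.
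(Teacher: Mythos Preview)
Your proposal is correct and follows essentially the same route as the paper's proof: the same case split on how the greedy set $S$ interacts with cups $1$ and $2$, the same choice of which cups to empty in $B$ in each case (including the swap $1\leftrightarrow 2$ when only cup $1$ is emptied in $A$), and the same relabeling trick in the delicate $\{1,2\}\subset S$ subcase of the non-negative game (swap when $b_1<1$, keep the naive labeling when $b_1\ge 1$). The paper's subcase division into $b_1<1$ versus $b_1\ge 1,\ a_2<1$ is exactly your split on the signs of $b_1-1$ and $b_2-c-1$.
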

\begin{proof}
If $ A $ dominates $ B $, then the lemma is trivial. Thus we can assume that $A$ monopolizes $B$.

Let $a_1, a_2$ denote the fills of cups $1$ and $2$ in $A$, and let $b_1, b_2$ denote the fills of cups $ 1 $ and $ 2 $ in $B$. So $a_1 = b_1 + 1$, $b_2 = a_2 + c$ for some $0 \le c \le 1$, and $a_1 > b_2$. Furthermore, let $a_1', a_2'$ the fills of cups $1$ and $2$ in $A'$, and let $b_1', b_2'$ denote the fills of cups $ 1 $ and $ 2 $ in $B'$ (once we've defined it).

If the greedy emptier on $ A $ empties from neither cup 1 nor cup 2, then the same set of empties on $B$ results in a $B'$ that is weakly monopolized by $A'$.

Next consider the case where the greedy emptier on $A$ empties from both cups 1 and 2. Then we empty from the same set of cups in $B$ to arrive at a state $B'$. In the negative-fill game, is immediate to see that $A'$ monopolizes $B'$. In the standard game, we must be careful about the fact that one of $b_1$ or $a_2$ might be less than $1$. If $ b_1 < 1$, then $b_1' = 0 \le a_2'$, and since $a_1 > b_2$, we also have $a_1' \ge b_2'$; thus $A'$ dominates $B'$ in this case. If $a_2 < 1$ but $b_1 \ge 1$, then we have $a_1' = b_1' + 1$ and $b_2' = a_2' + c'$ for some $0 \le  c \le 1$; we also have $a_1' \ge b_2'$ (since $a_1 \ge b_2$), so $A'$ monopolizes $B'$ in this case.

Finally, consider the case where the greedy emptier on $ A $ empties from only one of the cups 1 or 2. Since $a_1 > b_2 \ge a_2$, the emptier must empty from cup 1. Suppose we construct $B'$ by performing the same set of empties on $B$ as were performed on $A$, but we remove water from cup 2 instead of cup 1. Then $a_1' = b_1'$ and $a_2' \ge b_2'$, so $A'$ dominates $B'$. This completes the proof.
\end{proof}

By combining the previous two lemmas, we can arrive at the following.
\begin{lemma} \label{lem:dom3}
Consider either the $p$-processor cup game, for some $ p $, or the variable processor cup game (and consider either the negative-fill case or the non-negative-fill case).
Suppose $A$ weakly monopolizes $B$. Then for any $t$, 
$$\GREEDY(A, t) \ge \OPT(B, t).$$
\end{lemma}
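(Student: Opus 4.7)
My plan is to prove the lemma by induction on $t$, with Lemmas \ref{lem:dom1} and \ref{lem:dom2} supplying the inductive step. The key observation is that these two lemmas are tailored to the asymmetric roles in the $\sup$-$\min$ definition of $\OPT$: Lemma \ref{lem:dom1} converts a filler move on $B$ into a matching filler move on $A$ (the correct direction to control the $\sup$ over filler moves on $B$), while Lemma \ref{lem:dom2} converts the forced greedy empty on $A$ into an empty on $B$ (the correct direction to feed the $\min$ over emptier moves on $B$).

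For the base case $t=0$, I need $\max_i a_i \ge \max_i b_i$. When $A$ dominates $B$, this is immediate. When $A$ monopolizes $B$, with the labeling guaranteed by the definition, we have $a_1 = b_1 + 1 > b_1$, $a_1 > b_2$, and $a_i = b_i$ for $i \ge 3$, so cup $1$ of $A$ already exceeds every cup of $B$.

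For the inductive step, I fix any filler move on $B$ producing $B'$. Using the same processor count $p$, Lemma \ref{lem:dom1} yields a filler move on $A$ producing $A'$ that weakly monopolizes $B'$. Let $A''$ be the result of greedy emptying on $A'$. Lemma \ref{lem:dom2} then yields an emptier move on $B'$ producing some $B''$ with $A''$ weakly monopolizing $B''$. The inductive hypothesis gives $\GREEDY(A'',t-1) \ge \OPT(B'',t-1)$, and so
\[
\GREEDY(A,t) \;\ge\; \GREEDY(A'',t-1) \;\ge\; \OPT(B'',t-1) \;\ge\; \min_{B''' \text{ reachable from } B'} \OPT(B''',t-1).
\]
Since this chain holds for every filler-reachable $B'$, taking the supremum over $B'$ yields $\GREEDY(A,t) \ge \OPT(B,t)$.

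The main obstacle I anticipate is not the algebra but getting the quantifier directions right: the filler's move on $A$ must be chosen \emph{in response to} the filler's move on $B$ (so that $\sup_{B'}$ is controlled by one specific $A'$ rather than ranging over all $A'$), and dually the emptier's move on $B$ must be chosen in response to the forced greedy move on $A$. The statements of Lemmas \ref{lem:dom1} and \ref{lem:dom2} are phrased precisely in these directions, so the bookkeeping works. One small additional check is that both lemmas reuse the same $p$, so the induction applies uniformly to each fixed-$p$ game and to the variable-processor game, and the base case and the two helper lemmas were each established for both the negative-fill and non-negative-fill variants, so no separate argument is needed for those.
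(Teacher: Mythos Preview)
Your proof is correct and follows essentially the same approach as the paper: induction on $t$, with Lemma~\ref{lem:dom1} lifting the filler's move from $B$ to $A$, Lemma~\ref{lem:dom2} pushing the greedy empty from $A'$ down to an empty on $B'$, and the inductive hypothesis closing the chain $\GREEDY(A,t)\ge\GREEDY(A'',t-1)\ge\OPT(B'',t-1)\ge\min_{B'''}\OPT(B''',t-1)$. Your explicit discussion of the quantifier directions and the uniformity in $p$ matches the paper's reasoning exactly.
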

\begin{proof}
We prove the lemma by induction on $ t $. In the base case of $ t = 0 $, the lemma reduces to showing that $ A $ has backlog at least as large as $ B $; this follows from the fact that $ A $ weakly monopolizes $ B $.

Now consider some $ t > 0 $, and suppose the result holds for $t - 1$. For each state $B'$ that the filler can reach from $B$, Lemma \ref{lem:dom1} tells us that there must exist a state $A'$ that the filler can reach from $A$ in such that $ A' $ weakly monopolizes $ B' $. Moreover, Lemma \ref{lem:dom2} tells us that, if greedy emptying from $ A' $ results in some state $A''$, then there must be an emptying move on $B'$ that results in a state $B''$ such that $A''$ weakly monopolizes $ B'' $.

Note that, in the previous paragraph, the filler's move from $B$ to $B'$ (i.e., the choice of $B'$) fully determines $A'$ (by the construction in Lemma \ref{lem:dom1}), $A''$ (by greedy emptying on $A'$), and $B''$ (by the construction in Lemma \ref{lem:dom2}).
Thus, we will think of $A', A'', B''$ as being functions of $B'$.

Expanding out $\GREEDY(A, t)$, we have
$$\GREEDY(A, t) \ge \sup_{B'} \GREEDY(A'', t - 1).$$
By the inductive hypothesis, since each $A''$ weakly monopolizes $B''$, we have $\GREEDY(A'', t - 1) \ge \OPT(B'', t - 1)$. Thus
$$\GREEDY(A, t) \ge \sup_{B'} \OPT(B'', t - 1).$$
Since $\sup_{B'} \OPT(B'', t - 1) \ge \OPT(B, t)$, the lemma follows.
\end{proof}

Recall that our goal is to upperbound $\GREEDY(A, t)$ by $\OPT(A, t)$ for all $A, t$. Thus Lemma \ref{lem:dom3} may at first seem to be backward progress, since the lemma establishes a \emph{lower bound} on $\GREEDY(A, t)$. The trick to using Lemma  \ref{lem:dom3} is that we will only ever apply the lemma to states $A$ and game lengths $t$ for which we have already proven by induction that $\OPT(A, t) = \GREEDY(A, t)$; in this setting, the lemma establishes that $\OPT(A, t) \ge \OPT(B, t)$ which, as we shall see, ends up being critical to the proof.

\begin{theorem} \label{thm:greedyopt}
Consider either the $p$-processor cup game, for some $ p $, or the variable-processor cup game (and consider either the negative-fill case or the non-negative-fill case).
For all states $ A $ and game lengths $t$, 
$$\GREEDY(A, t) = \OPT(A, t).$$
\end{theorem}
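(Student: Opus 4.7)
The trivial direction is $\OPT(A,t) \le \GREEDY(A,t)$, since the greedy strategy is one particular deterministic emptier that the optimal emptier is always free to beat or tie. The work is in the reverse inequality, which I would prove by induction on $t$. The base case $t = 0$ is immediate, as neither side makes any moves and both just report the backlog of $A$.

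For the inductive step, fix $A$ and suppose the theorem holds for all shorter games. For any filler move that takes $A$ to some $A'$ using $p$ processors, write $E(A')$ for the greedy-emptying result, and let $A''$ denote the state reached by any other valid $p$-processor emptying move on $A'$. The core technical claim I would establish is a \emph{swap lemma}: if greedy empties the cup-set $G$ (the $p$ fullest cups of $A'$) and an alternative strategy empties a different set $H$ with $|H| = p$, then picking any $i \in G \setminus H$ and $j \in H \setminus G$ (which exists and satisfies $y_i \ge y_j$ because $G$ consists of the $p$ largest fills), and setting $H' = (H \setminus \{j\}) \cup \{i\}$, the resulting state $E_H(A')$ weakly monopolizes $E_{H'}(A')$. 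In the negative-fill game this is clean: $E_H$ has cup $i$ at $y_i$ and cup $j$ at $y_j - 1$, while $E_{H'}$ has cup $i$ at $y_i - 1$ and cup $j$ at $y_j$, so one gets from $E_H$ to $E_{H'}$ by removing one unit from cup $i$ and adding one unit to cup $j$, with $y_i \ge y_j$, which is exactly monopolization (or equivalence, yielding domination, when $y_i = y_j$).

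The case analysis for the non-negative-fill game is what I expect to be the main obstacle, since the $\max(0, \cdot)$ clipping can break the clean ``remove 1, add $c \le 1$'' structure when $y_j < 1$. I would handle this in three subcases: (i) $y_j \ge 1$ reduces to the negative-fill computation; (ii) $y_j < 1 \le y_i$ gives monopolization with $c = y_j \le 1$ and the required inequality $y_i > y_j$; (iii) $0 \le y_j \le y_i < 1$ gives $E_H = (y_i, 0, \ldots)$ and $E_{H'} = (0, y_j, \ldots)$, where one can relabel so $E_H$ dominates $E_{H'}$ coordinatewise. Iterating the swap lemma converts any $H$ into $G$ through a chain, producing a sequence $A'' = S_0 \succeq_w S_1 \succeq_w \cdots \succeq_w S_k = E(A')$ of weak monopolizations.

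Now I apply Lemma \ref{lem:dom3} repeatedly along this chain, using the inductive hypothesis at each step to identify $\GREEDY(S_i, t-1)$ with $\OPT(S_i, t-1)$. This gives
\[
\OPT(A'', t-1) = \GREEDY(A'', t-1) \ge \GREEDY(S_1, t-1) \ge \cdots \ge \GREEDY(E(A'), t-1).
\]
Since this bound holds for every emptier choice $A''$ (including greedy itself), taking the minimum over $A''$ and then the supremum over filler choices $A'$ yields
\[
\OPT(A, t) \;=\; \sup_{A'} \min_{A''} \OPT(A'', t-1) \;\ge\; \sup_{A'} \GREEDY(E(A'), t-1) \;=\; \GREEDY(A, t),
\]
closing the induction. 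The argument is uniform across the fixed $p$-processor, variable-processor, negative-fill, and non-negative-fill variants, as Lemmas~\ref{lem:dom1}--\ref{lem:dom3} were already stated in that generality and the swap lemma respects the same case split.
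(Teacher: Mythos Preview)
Your proposal is correct and follows essentially the same approach as the paper: both argue by induction on $t$, construct a chain of weak monopolizations connecting an arbitrary emptying result to the greedy emptying result via single-cup swaps, and then invoke Lemma~\ref{lem:dom3} together with the inductive hypothesis along the chain to conclude $\OPT(A,t) \ge \GREEDY(A,t)$. Your case split for the non-negative-fill swap lemma (three cases on whether $y_i, y_j$ lie below $1$) is slightly more granular than the paper's (which only splits on whether the larger fill is at least $1$), but the content is the same.
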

\begin{proof}
We prove the theorem by induction on $ t $ with a trivial base case of $ t = 0$. Consider $ t >0 $, and suppose the theorem holds for $t-1 $. 

Consider any state $ A' $ that the filler can reach from $ A $. Let $ A'' $ be the state reached from $A'$ if the emptier empties greedily, and let $B$ be the state reached from $A'$ if the emptier empties according to OPT. Since the transformation from $A'$ to $A''$ empties from cups with at least as much water as the transformation from $A'$ to $B$, there must be a sequence of states $X_0, X_1, \ldots, X_k$, with $X_0 = B$ and $X_k = A''$, such that each $X_i$ weakly monopolizes $X_{i - 1}$. In particular, one can define the $X_i$'s such the only difference between each $X_i$ and $X_{i + 1}$ is that, to get from $A'$ to $X_{i + 1}$ instead of from $A'$ to $X_i$, the emptier removes water from a cup $j$ instead of a cup $k$, where cup $j$ contains less water than cup $k$ in state $A'$. It is straightforward to verify that this results in each $X_{i + 1}$ weakly monopolizing each $X_i$: if cup $k$ (in state $A'$) has fill at least $1$, then $X_{i + 1}$ monopolizes $X_i$; and otherwise, both cups $k$ and $j$ (in state $A'$) have fills less than $1$, and thus $X_{i + 1}$ dominates $X_i$.

By the inductive hypothesis, we know that $\OPT(X_i, t - 1) = \GREEDY(X_i, t - 1)$ for each $X_i$. Thus Lemma \ref{lem:dom3} tells us that $\OPT(X_i, t - 1) \ge \OPT(X_{i - 1}, t - 1)$. By transitivity, it follows that $\OPT(X_k, t - 1) \ge \OPT(X_0, t - 1)$, and thus $\OPT(B, t - 1) \ge \OPT(A'', t - 1)$. Again applying the inductive hypothesis to deduce that $\OPT(A'', t - 1) = \GREEDY(A'', t - 1)$, we have that 
$$\OPT(B, t - 1) \ge \GREEDY(A'', t - 1).$$
Thus
$$\GREEDY(A, t) = \sup_{A''} \GREEDY(A'', t - 1) \le \sup_{B} \OPT(B, t - 1) = \OPT(A, t).$$
Finally, as $\GREEDY(A, t) \ge \OPT(A, t)$ trivially, we have $\GREEDY(A, t) = \OPT(A, t)$.
\end{proof}


\begin{corollary}
Theorem \ref{thm:greedyopt} continues to hold for the game with non-negative fills even if the emptier is given $1 + \eps$ resource augmentation.
\label{cor:aug}
\end{corollary}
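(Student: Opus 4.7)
The plan is to reuse the proof of Theorem \ref{thm:greedyopt} essentially verbatim, with one modification: replace the notion of monopolization with one suited to the augmented emptying rate. Specifically, I would redefine ``$A$ monopolizes $B$'' by replacing each $1$ in the current definition with $1+\varepsilon$: cup 1 in $A$ should contain $1+\varepsilon$ more water than cup 1 in $B$, and cup 2 in $B$ should contain some $c \in [0, 1+\varepsilon]$ more water than cup 2 in $A$. The definitions of domination and weak monopolization are then the natural analogues.

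I would then re-verify Lemmas \ref{lem:dom1}, \ref{lem:dom2}, and \ref{lem:dom3} under this adapted notion. Lemma \ref{lem:dom1} (filler's move) requires only bookkeeping changes: the filler's per-cup budget is still $1$, so the quantity $q = a_1 - b_2$ in the ``problematic'' subcase (where the filler places much more water into cup 2 than cup 1) is still at most $1$, and the paper's alternate filler move on $A$ remains valid. The computation establishing that $A'$ dominates $B'$ goes through, with the min-fill slack becoming $1+\varepsilon - c \ge 0$ (using the new bound $c \le 1+\varepsilon$). Lemma \ref{lem:dom2} (emptier's move) is where the main work lies. Its key subcase --- greedy on $A$ empties only cup 1, and we mirror by emptying cup 2 on $B$ --- works out cleanly because the new definition forces $a_1 \ge 1+\varepsilon$ (since $b_1 = a_1 - (1+\varepsilon) \ge 0$ is required in the non-negative fill game), giving $a_1' = b_1' = a_1 - (1+\varepsilon)$ directly and $a_2' \ge b_2'$ with slack $1+\varepsilon - c$. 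When the greedy emptier empties both cups 1 and 2, we mirror on $B$, and truncation-at-$0$ interactions are handled by relabeling cups 1 and 2 in $B'$ (matching $A'$'s cup 1 with $B'$'s cup 2, justified by the hypothesis $a_1 > b_2$) to establish domination.

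The chain construction in Theorem \ref{thm:greedyopt} then adapts directly: swapping the emptier's choice from cup $k$ to cup $j$ (with $a_k > a_j$ in state $A'$) yields a state that monopolizes the previous one under the new definition when $a_k \ge 1+\varepsilon$, and dominates it via a swap of cup labels when $a_k < 1+\varepsilon$ (both affected cups then truncate to $0$, and the nonzero fill $a_k$ on one side is matched with $a_j$ on the other). With Lemmas \ref{lem:dom1}--\ref{lem:dom3} re-established, the rest of Theorem \ref{thm:greedyopt}'s proof is unchanged.

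The main obstacle will be the subcase analysis in Lemma \ref{lem:dom2}, where the interplay between $\varepsilon$ augmentation and the $\max(0,\cdot)$ truncation produces several subcases that require non-identity relabelings in $B'$ to establish weak monopolization via domination. Enumerating and verifying these subcases is mechanical but tedious.
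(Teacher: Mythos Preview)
Your proposal is correct and takes essentially the same approach as the paper: both redefine monopolization by replacing each $1$ with $1+\varepsilon$ and then observe that the proof of Theorem~\ref{thm:greedyopt} carries over. The paper's proof is terser (it simply asserts the proof ``continues to hold without modification'' after stating the new definition), whereas you sketch the re-verification of Lemmas~\ref{lem:dom1}--\ref{lem:dom3} and the chain construction; your more careful treatment of the truncation subcases in Lemma~\ref{lem:dom2} and of the $(1+\varepsilon)-c$ slack in Lemma~\ref{lem:dom1} is sound and in fact slightly more precise than the paper's original wording.
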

\begin{proof}
The proof of Theorem \ref{thm:greedyopt} continues to hold without modification. The only difference is that, now, we say that a state $A$ monopolizes a state $B$ if there is a ordering of the cups for which two properties hold: (1) we can take $1 + \eps$ unit of water from some cup $1$ in $A$, place some amount $0 \le c \le 1 + \eps$ of water into some other cup $2$ in $A$, and in doing so arrive at $B$; and (2) cup $1$ in $A$ contains more water than cup $2$ in $B$. 
\end{proof}

\section{Lower Bounding Backlog} \label{sec:lower}

In this section, we prove the optimal lower bound on the backlog. 
In other words, we show that for any integer $t \ge 1$, there exists a filling strategy that can guarantee a backlog of $\Omega(b(t))$ against \emph{any} emptying strategy, after $t$ rounds of the variable-processor cup game, where $b(t)$ is defined as in Equation \eqref{trade-off-curve} in the introduction (Section \ref{sec:intro}). This result can be captured compactly in the following theorem:

\begin{restatable}{theorem}{thmlower}
\label{thm:main_lower}
    For some absolute constant $c > 0$ and any $k \le c \cdot n$, the filler in the variable-processor cup game can force a maximum backlog of $\Omega(k)$ using only $O\left(k + \frac{k^3}{\log^2 (n/k)}\right)$ time steps.
\end{restatable}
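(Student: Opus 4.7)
The plan is to construct an explicit filling strategy. Two preliminary reductions simplify the task: by Theorem \ref{thm:greedyopt} we may assume the emptier is greedy, and by the reduction of Appendix \ref{sec:appendix_less_ez} we may work in the negative-fill variant. We also maintain as invariants that every fill is a half-integer and that the multiset of fills is symmetric about $0$, so that at every round $\max_i x_i = \max_i |x_i|$, and it therefore suffices to force $\max_i |x_i| = \Omega(k)$.

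The atomic move is the \emph{split at level $s$}: if the current state has $2q$ cups at fill exactly $s$, and $T$ is the set of all cups at fill strictly greater than $s$, the filler picks $p = |T| + q$, adds one unit of water to each cup of $T$, and adds $1/2$ unit to each of the $2q$ tied cups. The $p$ fullest cups after filling are then precisely $T$ together with the $2q$ tied cups at $s + 1/2$, so greedy emptying leaves $T$ unchanged, drops $q$ tied cups to $s - 1/2$, and leaves $q$ at $s + 1/2$. This matches the stone-game operation of Section \ref{subsec:upper_overview} exactly, and each application grows $\Phi := \sum_i x_i^2$ by $q/2 \ge 1/2$. The case $k = O(\log n)$ is trivial (fill a single cup every round), so from now on set $\tau := \Theta(k/\log(n/k))$ and run the strategy in $L = \Theta(\log(n/k))$ phases of $O(\tau^3)$ rounds each.

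At the start of phase $i$ there will be a focused set $S_i$ of at least $n_i := n/4^i$ cups all at fill exactly $i\tau$, with $S_0$ being all $n$ cups and $L$ chosen so that $n_L \ge \tau$. By the symmetric-tail invariant, no cup outside $S_i$ has fill above $(i-1)\tau$. Within phase $i{+}1$ the filler works relative to reference level $i\tau$, tracking the offsets $y_1, \ldots, y_m$ of the $m := n_i$ focused cups. Suppose for contradiction that after $T = \Theta(\tau^3)$ rounds fewer than $m/4$ of these cups have been promoted to offset $\ge \tau$ (equivalently fill $\ge (i{+}1)\tau$); by symmetry at least $m/2$ focused cups still lie in $(-\tau, \tau)$. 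Pigeonhole over the $O(\tau)$ half-integer offsets in $(-\tau, \tau)$ then yields, at every round, a level of multiplicity $\Omega(m/\tau)$, so each split there raises $\sum_j y_j^2$ by $\Omega(m/\tau)$. The total increase over $T$ rounds is $\Omega(Tm/\tau)$, while $\sum_j y_j^2$ is bounded by $O(m\tau^2)$ (from the remaining cups) plus $O(m\tau^2)$ of decreases absorbed by the promoted cups, contradicting $T = \Omega(\tau^3)$ for a large enough hidden constant. Hence at least $m/4$ cups reach offset $\tau$, which become $S_{i+1}$.

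I expect the main obstacle to be bookkeeping rather than the potential argument: one must (i) maintain the symmetric tail at fills $\le -i\tau$ round by round, so that split moves on $S_i$ are never disrupted by higher cups outside $S_i$, and (ii) confirm that at each round the ``ballast'' set $T$ is exactly the set greedy empties, so that the net effect really is one stone-game move. The first point is handled by pairing every split at offset $+s$ with a mirror split at $-s$ (which preserves global symmetry and only doubles the round count by a constant); the second is handled by defining $T$ to include \emph{every} cup at fill strictly greater than $s$, whether or not it sits in $S_i$, which makes the correspondence with greedy exact. Summing $O(\tau^3)$ rounds over the $L$ phases and combining with the trivial $O(k)$ bound for the small-$k$ regime yields the total bound $O(k + k^3/\log^2(n/k))$, and after phase $L$ the set $S_L$ has reached fill $L\tau = \Omega(k)$, as required.
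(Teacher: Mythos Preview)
Your approach is correct and essentially identical to the paper's: reduce to the greedy emptier and negative-fill game, use the split move (Lemma \ref{lem:basic_movement}), and iterate a $\Phi = \sum x_i^2$ argument over $\Theta(\log(n/k))$ phases, each promoting a quarter of the focused cups by $\tau$ in $O(\tau^3)$ rounds. Your pigeonhole version of the in-phase argument is equivalent to the paper's divisibility-invariant version in Lemma \ref{lem:advanced_movement}.

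Two slips to fix. First, ``fill a single cup every round'' does not achieve backlog $k$ in $k$ rounds against a greedy emptier (the emptier just undoes it); the paper handles $k \le c\log n$ by the doubling construction---with $2^k \le n$ cups at $0$, send $2^{k-1}$ to fill $1/2$, then $2^{k-2}$ of those to fill $1$, and so on, reaching fill $k/2$ in $k$ rounds. Second, the symmetry invariant you actually need (and use in the phase argument) is \emph{local}, about the base level $i\tau$ of the current phase, not global about $0$; once $i \ge 1$ the global symmetry is broken by the very existence of $S_i$, and it is also unnecessary since $S_L$ sits at positive fill $L\tau$ by construction. Relatedly, the claim ``no cup outside $S_i$ has fill above $(i-1)\tau$'' is false (leftover cups from phase $i$ can sit anywhere in $((i-1)\tau - \tau,\, i\tau)$), but as you note in point (ii), this is harmless because the ballast set $T$ in each split already contains every cup above the split level.
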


We remark that due to the optimality of greedy emptying, which we proved in the previous section, we focus in this section on designing filling strategies that are effective against the greedy emptier.

In Subsection \ref{subsec:lower_warmup}, we give a simple proof for obtaining $\Omega(n)$ backlog in $O(n^3)$ rounds. In Subsection \ref{subsec:lower}, we generalize Subsection \ref{subsec:lower_warmup} and develop the full proof of Theorem \ref{thm:main_lower}.

\subsection{Warmup: Achieving a $\Theta(n)$ Backlog Quickly Against a Greedy Emptier} \label{subsec:lower_warmup}

In this subsection, we show that in only $n^3$ rounds, a filler can achieve a backlog of $\frac{n-1}{2}$ against a greedy emptier.
While the following argument is simpler than the more refined lower bound in Subsection \ref{subsec:lower}, it conveys much of the intuition behind the general lower bound.
Here, we only consider the standard variable-processor cup game, though in subsection \ref{subsec:lower} we will consider both the standard and negative-fill games.

\begin{theorem}
    Against a greedy emptier in the standard variable-processor cup game, the filler can achieve a backlog of $\frac{n-1}{2}$ in only $n^3$ rounds.
\end{theorem}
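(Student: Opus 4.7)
The plan is to introduce a simple ``splitting move'' for the filler and then track the sum-of-squares potential $\Phi(x_1,\ldots,x_n) = \sum_i x_i^2$ to bound the number of rounds. Given the current state sorted as $x_1 \ge x_2 \ge \cdots \ge x_n$, suppose cups $i, i+1, \ldots, j$ share a common fill $k$ with $j-i+1 \ge 2$, and write $q = \lfloor (j-i+1)/2 \rfloor$. The filler sets $p = (i-1) + q$, pours $1$ unit into each of cups $1, \ldots, i-1$ and $1/2$ unit into each of $2q$ of the chosen cups at fill $k$. After filling, cups $1, \ldots, i-1$ sit at fill $\ge k + 3/2$ (any prior fill strictly exceeding $k$ is at least $k + 1/2$) while the $2q$ chosen cups sit at $k + 1/2$, so the greedy emptier necessarily empties all of cups $1, \ldots, i-1$ (returning them to their original fills) together with $q$ of the chosen cups. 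The net effect of the round is that $q$ of the chosen cups rise to $k + 1/2$ and $q$ fall to $\max(0, k - 1/2)$, with all other cups untouched. Since the filler only adds multiples of $1/2$ and the emptier only subtracts integers (with clamping at $0$), the invariant that all fills are half-integers is preserved throughout.

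Next, I would observe that each splitting move increases $\Phi$ by at least $1/4$. When $k \ge 1/2$, the contribution from the $2q$ chosen cups changes from $2qk^2$ to $q(k+1/2)^2 + q(k-1/2)^2$, a net gain of exactly $q/2 \ge 1/2$. When $k = 0$, the $q$ cups that would otherwise go to $-1/2$ are clamped to $0$, and the gain is $q/4 \ge 1/4$. The filler's strategy is simply to perform a splitting move on any available even-sized group of equal-fill cups at every round for which such a group exists.

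Finally, I would conclude by a dichotomy. Either at some round no two cups share a fill, in which case the $n$ distinct non-negative half-integer fills force $\max_i x_i \ge (n-1)/2$ by pigeonhole; or the filler manages $T$ consecutive splitting moves while the backlog stays strictly below $(n-1)/2$, in which case $T/4 \le \Phi < n \cdot ((n-1)/2)^2 = n(n-1)^2/4 \le n^3/4$, forcing $T < n^3$. Either way, backlog at least $(n-1)/2$ is attained by the end of round $n^3$. The only subtlety I see is verifying the greedy emptier's behavior at the boundary $k = 0$, where the clamping at zero could in principle change which cups are ``fullest''; but the clamping only helps the filler (it prevents $\Phi$ from decreasing), and the ordering guarantee $k + 3/2 > k + 1/2$ between the top $i-1$ cups and the chosen group still forces greedy to empty from the $i-1$ cups first, so the analysis goes through cleanly.
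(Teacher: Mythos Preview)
Your proposal is correct and follows essentially the same approach as the paper: a splitting move that preserves half-integer fills, the sum-of-squares potential $\Phi$ increasing by at least $1/4$ per round, and the same dichotomy (distinct half-integer fills force backlog $\ge (n-1)/2$ by pigeonhole, otherwise $\Phi$ grows until the bound is forced). The only cosmetic difference is that you split $2q$ equal-fill cups at once while the paper always splits exactly two (taking the smallest $p$ with $x_p = x_{p+1}$); this changes nothing in the argument.
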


\begin{proof}
    The filler follows the following simple strategy, which will guarantee that the fills at the end of each round are half integers (i.e., integer multiples of $\frac{1}{2}$). Suppose the cups after some round, in sorted order, are $x_1 \ge x_2 \ge \cdots \ge x_n$, which are all half-integers. 
    
    If $x_i > x_{i+1}$ for all $1 \le i \le n-1,$ then we must have that $x_i \ge x_{i+1} + \frac{1}{2}$ for all $i \le n-1,$ so $x_1 \ge \frac{n-1}{2}$. In this case, we have already achieved a backlog of $\frac{n-1}{2}.$
    
    Otherwise, we consider the potential function $\Phi = \sum_{i = 1}^{n} x_i^2,$ i.e., the sum of the squares of the backlogs. Consider the smallest $p$ such that $x_p = x_{p+1}$. We set the number of processors for the round to be $p$, we place $1$ unit of water into each of cups $x_1, \dots, x_{p-1}$ and we place $1/2$ unit of water into each of cups $x_p, x_{p+1}$. Then, the greedy emptier will empty $1$ unit from each of the first $p-1$ cups and will choose to empty $1$ unit from either the $p^{\text{th}}$ cup or the $(p+1)^{\text{th}}$ cup (WLOG, they choose the $(p + 1)$-th cup). This means that, over the course of the entire round, the first $p-1$ cups are unchanged, the $p^{\text{th}}$ cup goes up by $\frac{1}{2}$, and $(p+1)^{\text{th}}$ cup goes down by $\frac{1}{2}$. The only exception is if $x_p = x_{p+1} = 0$ at the beginning, in which case one of these two cups will go up to $1/2$ and the other remains at $0$. Clearly, all of the fills remain half-integers at the end of each round. If we replace $x_p$ and $x_{p+1}=x_p$ with $x_p+\frac{1}{2}$ and $x_p-\frac{1}{2}$, then the sum of the squares of the backlogs increases by $1/2$. Otherwise, if we replace $x_p = x_{p+1} = 0$ with $0$ and $1/2$ in some order, then the sum of the squares of the backlogs increases by $1/4.$ Thus, at the end of each round, unless one of the backlogs was already $\frac{n-1}{2}$ or greater, $\Phi$ increases by at least $\frac{1}{4}.$
    
    Therefore, after $n^3$ rounds, either at some point we will have achieved $x_1 > x_2 > \cdots > x_n$ and thus a backlog of $\frac{n-1}{2}$, or we have that $\sum_{i = 1}^{n} x_i^2 \ge \frac{n^3}{4}$, which would mean that $\max x_i \ge \frac{n}{2}$, since all of the $x_i$'s are nonnegative. As a result, there exists a filling strategy that can guarantee a $\frac{n-1}{2}$ backlog against a greedy emptier in at most $n^3$ rounds.
\end{proof}

\subsection{The General Lower Bound}\label{subsec:lower}

In this subsection, we prove the lower bound on the backlog for the variable-processor cup game after $t$ rounds, showing that the optimal backlog is $\Omega(b(t))$ for $b(t)$ defined as in Equation \eqref{trade-off-curve}. Equivalently, we show that for any integer $k \le n$, a filler can achieve backlog $\Omega(k)$ in $O\left(k + \frac{k^3}{\log^2 (n/k)}\right)$ rounds against any emptying strategy. We remark that if we are playing $t$ rounds of the game and the filler has achieved backlog $b$ by round $t' < t$, the filler can ensure the backlog stays at $b$ by setting $p = n$ and filling every cup for steps $t'+1, \dots, t.$ Thus, we just need to show the filler can obtain this backlog \emph{within} this many rounds.

Throughout this subsection, we assume that we are playing the negative-fill game, and that the emptier is always greedy. We remove both of these assumptions at the end in Theorem \ref{thm:main_lower}.

\begin{lemma} \label{lem:basic_movement}
    Assume that the emptier is always greedy, and that at the beginning of some round, the fills are all (possibly negative) half-integers. For some integer $k \in \BZ$ and positive integer $q \in \BN,$ suppose there are at least $2q$ cups having fill exactly $k/2$. Then, the filler can ensure that at the end of the round (i.e., after both the filler and emptier move), exactly $q$ of the cups of fill exactly $k/2$ have increased to $(k+1)/2,$ exactly $q$ of the cups of fill exactly $k/2$ have decreased to $(k-1)/2,$ and all remaining cups are unchanged.
\end{lemma}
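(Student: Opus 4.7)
The plan is to exhibit an explicit filler move that forces the claimed outcome no matter how the greedy emptier breaks ties. Let $m$ denote the number of cups whose fill is strictly greater than $k/2$ at the start of the round; since all fills are half-integers, each such cup has fill at least $(k+1)/2$. The filler's move will be the following: set $p := m + q$, place exactly $1$ unit of water into each of the $m$ overfull cups, place exactly $1/2$ unit of water into each of the $2q$ designated cups at fill $k/2$, and leave every other cup untouched.

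First I would verify that this is a legal filler move. Each $a_i \in \{0, 1/2, 1\} \subseteq [0,1]$, the amounts sum to $m + 2q \cdot \tfrac{1}{2} = m + q = p$, and $1 \le p \le m + 2q \le n$, where the upper bound uses the fact that the $m$ overfull cups and the $2q$ designated cups are distinct, and the lower bound uses $q \ge 1$. Immediately after the filler's move, the $m$ overfull cups have fill at least $(k+1)/2 + 1 = (k+3)/2$, the $2q$ designated cups have fill exactly $(k+1)/2$, and every untouched cup has fill at most $k/2 < (k+1)/2$.

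Next I would analyze the greedy emptier's forced response. Because the top $m$ cups are strictly fuller than every other cup, the greedy emptier must empty from all $m$ of them; it must then choose $q$ further cups from the tied block of $2q$ designated cups at fill $(k+1)/2$, since those are strictly fuller than anything else remaining. After the emptier moves, each of the $m$ overfull cups has gained $1$ and lost $1$, so it is restored to its original fill. Among the $2q$ designated cups, the $q$ selected by the emptier end at $(k+1)/2 - 1 = (k-1)/2$ while the other $q$ remain at $(k+1)/2$; every other cup is unchanged. This is exactly the outcome asserted by the lemma.

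The only delicate point is the tie-breaking step, where the greedy emptier may freely pick any $q$ of the $2q$ designated cups. However, the lemma's conclusion is symmetric across those $2q$ cups, only asserting that $q$ go up and $q$ go down, so the statement holds regardless of the tie-breaking choice. I do not anticipate any substantive obstacle here; the content of the lemma is essentially the observation that the variable-processor game lets the filler simultaneously pin every strictly-overfull cup in place (by fully compensating it) while forcing a perfect half-integer split within a chosen block of cups at the current level.
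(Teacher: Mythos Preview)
Your proof is correct and follows essentially the same route as the paper's: the paper likewise fills each of the cups strictly above $k/2$ with a full unit and each of $2q$ chosen cups at level $k/2$ with half a unit, then argues that the greedy emptier is forced to restore the overfull cups and split the chosen block $q$-and-$q$. Your write-up is slightly more explicit about legality of the move ($1\le p\le n$) and about the harmless nature of tie-breaking, but the idea is identical.
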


\begin{proof}
    Suppose the cups are sorted in order $x_1 \ge x_2 \ge \cdots \ge x_n$, such that $x_i = x_{i+1} = \dots = x_{i+2q-1} = k/2$. Also, suppose $i$ is the smallest integer such that $x_i = k/2,$ so either $x_{i-1} > k/2$ or $i = 1$. Then, the filler will set $p = i-1+q$ and fill the first $i-1$ cups with $1$ unit of water and the cups $x_i, \dots, x_{i+2q-1}$ each with $1/2$ unit of water. Then, if $i > 1,$ the emptier is forced to empty $1$ unit of water from each of the first $i-1$ cups. In addition, we have that the cups $i, i+1, \dots, i+2q-1$ all have fills $(k+1)/2$, which is at least $1/2$ unit of water more than all later cups, which means that the emptier will remove $1$ unit of water from exactly $q$ of the cups $i, i+1,\dots, i+2q-1.$ Therefore, all cups $1, \dots, i-1$ are unchanged (since $1$ unit of water is added and then removed) and all cups $i+2q, \dots, n$ are unchanged (since water is never added nor removed). Finally, among the cups $i, i+1, \dots, i+2q-1$, exactly $q$ of them will end up at $(k-1)/2$ and exactly $q$ of them will end up at $(k + 1)/ 2$. 
\end{proof}

\begin{lemma} \label{lem:advanced_movement}
    Suppose that $k, m$ are positive integers such that $4 k|m$, and suppose there are at least $m$ cups that currently have fill $0$ (where $n \ge m$ is the total number of cups). Then, against a greedy emptier, the filler can ensure that at least $m/4$ cups will have fill exactly $k/2$ after $O(k^3)$ rounds.
\end{lemma}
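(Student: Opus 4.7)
The plan is to iterate Lemma~\ref{lem:basic_movement} on the $m$ cups of interest, using a potential-function argument analogous to the warm-up of Subsection~\ref{subsec:lower_warmup}, together with a symmetry invariant that will eventually force cups to accumulate at $+k/2$ specifically (rather than at $-k/2$).

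The filler's strategy is to apply Lemma~\ref{lem:basic_movement} only at heights $h \in [-(k-1)/2,(k-1)/2]$, always choosing the most populous such height among the $m$ designated cups and using Lemma~\ref{lem:basic_movement}'s flexibility to hand-pick precisely which $2q$ cups at that height participate (so the remaining $n-m$ cups are never touched). The invariant I would maintain is that, at the end of every completed ``logical step,'' the multiset of fills of the $m$ cups is symmetric about $0$. This is trivially true at the start; an application at $h=0$ preserves it in a single round, while an application at $h \neq 0$ breaks symmetry and so is paired with an application at $-h$ in the following round (valid by symmetry, with the same $q$). The restriction $|h|\le (k-1)/2$ confines all $m$ cups to fills in $[-k/2,k/2]$ throughout the game.

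The potential I would track is $\Phi := \sum_{i \in C} x_i^2$ summed only over the $m$ designated cups $C$. A direct calculation gives that each application of Lemma~\ref{lem:basic_movement} at height $h$ with parameter $q$ changes $\Phi$ by exactly $q(h+1/2)^2 + q(h-1/2)^2 - 2qh^2 = q/2$. Next I would lower-bound $q$ by pigeonhole: if fewer than $m/4$ of our cups sit at $+k/2$, symmetry gives fewer than $m/4$ at $-k/2$, so more than $m/2$ of them are distributed among the $2k-1$ half-integer heights in $\{-(k-1)/2,\ldots,(k-1)/2\}$, and some such height must contain at least $\max\bigl(2,\,\lceil m/(4k)\rceil\bigr)$ of them. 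Hence the filler always has a legal application with $q \ge \max\bigl(1,\, m/(16k)\bigr)$, giving an amortized gain of $\Omega\bigl(\max(1,\, m/k)\bigr)$ in $\Phi$ per round.

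To finish, I would upper-bound the maximum of $\Phi$ under the failure condition: with fewer than $m/4$ cups at each of $\pm k/2$ and all cups in $[-k/2,k/2]$, the extremal configuration places $m/2$ cups at $\pm k/2$ and the rest at $\pm(k-1)/2$, giving $\Phi \le \frac{m}{8}(2k^2 - 2k + 1) = O(mk^2)$. Dividing by the per-round growth gives $O\bigl(mk^2 / \max(1,\,m/k)\bigr) = O(k^3)$ rounds in either regime (for $m \ge 16k$ the gain $m/k$ dominates; for $m < 16k$ the divisibility $4k \mid m$ forces $m \le 12k$, so even the weak gain $1/2$ absorbs the $O(mk^2) \le O(k^3)$ budget). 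Once $\Phi$ exceeds the threshold, symmetry forces at least $m/4$ cups to lie at exactly $+k/2$. The main technical obstacle I expect is coordinating the symmetry invariant with the uncontrolled $n-m$ other cups and with the fact that Lemma~\ref{lem:basic_movement} acts on a single height per round; both are handled by Lemma~\ref{lem:basic_movement}'s ability to cherry-pick the $2q$ participating cups and by the two-round ``pairing'' device that restores symmetry after any non-zero-height application.
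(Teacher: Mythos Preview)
Your proposal is correct and follows essentially the same approach as the paper: both arguments confine the $m$ designated cups to $[-k/2,k/2]$, maintain symmetry about $0$ via the two-round pairing device, and use the potential $\Phi=\sum x_i^2$ to bound the running time by $O(k^3)$. The one bookkeeping difference is that the paper additionally maintains the invariant that the count at every height is a multiple of $m/(4k)$ (by always moving exactly $q=m/(4k)$ cups per application), which lets it terminate via ``no interior height has $\ge m/(2k)$ cups'' rather than via your $\Phi$-threshold contrapositive; your pigeonhole-based variant sidesteps that invariant at the cost of slightly messier constants, but the argument is the same in substance.
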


\begin{proof}
    WLOG assume that the first $m$ cups have fills $x_1, x_2, \dots, x_m$, and at the start, $x_1 = x_2 = \cdots = x_m = 0$. (We do not assume the cups are sorted in order of fill.) We only ever modify the first $m$ cups, and after each step, we will maintain the following invariants:
\begin{enumerate}
    \item All of the fills are half-integers.
    \item For each integer $j$, the number of cups of fill $j/2$ equals the number of cups of fill $-j/2$. 
    \item For each integer $j$, the number of cups of fill $j/2$ is always a multiple of $\frac{m}{4k}$.
    \item No cup has fill greater than $k/2$ or less than $-k/2$.
\end{enumerate}
    
    Trivially, all $4$ of these invariants are true at the beginning, since all the fills are $0$ and $\frac{m}{4k}|m$.
    Our procedure is the following. Suppose that there exists some integer $j$ such that $-k < j < k$ and the number of cups of fill $j/2$ is at least $\frac{m}{2k}$. Then, we use Lemma \ref{lem:basic_movement} to move $\frac{m}{4k}$ of these cups to fill $(j+1)/2$ and move $\frac{m}{4k}$ of these cups to fill $(j-1)/2$. In addition, if $j \neq 0$, then we know the number of cups of fill $-j/2$ is at least $\frac{m}{2k}$. So, in the next step, we move $\frac{m}{4k}$ of these cups to fill $(-j+1)/2$ and move $\frac{m}{4k}$ of these cups to fill $(-j-1)/2$. When $j \neq 0$, we perform these two steps consecutively as a pair. We keep repeating these types of steps (and pairs of steps) until we can no longer do so. It is clear that the first three invariants are preserved, and the last is preserved since we only modify cups that have fill between $-(k-1)/2$ and $(k-1)/2$, and we change their fills by at most $1/2$ per round.
    
    Now, we note that at each step, the potential function $\Phi = \sum_{i = 1}^{m} x_i^2$ increases by $\frac{m}{8k}$, since $$\frac{m}{4k} \cdot \left(\frac{j+1}{2}\right)^2 + \frac{m}{4k} \cdot \left(\frac{j-1}{2}\right)^2 - \frac{m}{2k} \cdot \left(\frac{j}{2}\right)^2 = \frac{m}{8k}.$$ This also means that if we do a pair of steps (i.e., modifying cups of fill $j/2$ and $-j/2$), the potential function $\Phi$ increases by $\frac{m}{4k}.$
    
    Now, after $4k^3$ steps (and pairs of steps), $\Phi$ is at least $4k^3 \cdot \frac{m}{8k} = \frac{1}{2} \cdot m k^2$. But this is impossible, since all of the fills of cups $1$ through $m$ are between $k/2$ and $-k/2$, so the maximum possible value of $\Phi$ is $\frac{1}{4} \cdot m k^2.$ This means that before reaching $4k^3$ of steps or pairs of steps, we must not be able to do any more steps. That is, for all $j$ such that $-k < j < k,$ there are less than $\frac{m}{2k}$ cups of fill exactly $j/2$, and because of our third invariant, this means there are at most $\frac{m}{4k}$ cups of fill exactly $j/2$. Therefore, the number of cups of fill between $-k+1$ and $k-1$ is at most $(2k-1) \cdot \frac{m}{4k} \le \frac{m}{2}.$ So, at least $\frac{m}{2}$ cups must have fill $\pm \frac{k}{2},$ and by the second invariant, this means at least $\frac{m}{4}$ cups must have fill $\frac{k}{2}.$ Reaching this stage required at most $4k^3$ steps and pairs of steps, which means the filler can succeed in at most $8k^3$ steps.
\end{proof}

\begin{corollary} \label{cor:force_upward}
    Suppose $k, m$ are positive integers such that $4k|m$, and suppose there are at least $m$ cups that currently have fill exactly $t$ for some real number $t$. Then, after $O(k^3)$ time steps, the filler can force at least $m/4$ of these cups to have fill exactly $t+\frac{k}{2}$, against a greedy emptier.
\end{corollary}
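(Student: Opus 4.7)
The plan is to derive the corollary by replaying the proof of Lemma \ref{lem:advanced_movement} verbatim, but with every coordinate shifted by $t$. The entire argument of Lemma \ref{lem:advanced_movement} (and of Lemma \ref{lem:basic_movement} used inside it) depends only on the sorted order of cup fills, not on any specific numerical values. Since we are in the negative-fill game throughout this subsection, the greedy emptier's choice of which cups to empty is determined purely by this sorted order, which is invariant under translating every fill by the same constant.

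First I would re-state Lemma \ref{lem:basic_movement} with ``fill $k/2$'' replaced everywhere by ``fill $t + k/2$''. The filler's one-round strategy --- set $p = i - 1 + q$, where $i$ is the smallest sorted index with $x_i = t + k/2$, place $1$ unit into each of cups $1, \ldots, i-1$, and place $1/2$ unit into $2q$ chosen cups at level $t + k/2$ --- carries over word for word. After the filler moves, cups $1, \ldots, i-1$ have fills at least $t + k/2 + 1$, strictly above the $2q$ cups now sitting at $t + (k+1)/2$, so a greedy emptier removes $1$ unit from each of the first $i-1$ and from exactly $q$ of the cups at $t + (k+1)/2$, yielding the desired ``$q$ up, $q$ down'' split.

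Next I would re-run the proof of Lemma \ref{lem:advanced_movement} with every instance of ``$j/2$'' replaced by $t + j/2$, using the shifted potential $\Phi = \sum_{i=1}^{m}(x_i - t)^2$. The four invariants become: all distinguished cups have fills in $t + \tfrac{1}{2}\BZ$; for each integer $j$, the number of distinguished cups at fill $t + j/2$ equals the number at $t - j/2$; each such count is divisible by $m/(4k)$; and no distinguished cup leaves $[t - k/2,\, t + k/2]$. The per-step increase in $\Phi$ is still at least $m/(8k)$ while $\Phi \le \tfrac{1}{4} m k^2$, so within $O(k^3)$ rounds the process terminates at a state in which at least $m/4$ of the distinguished cups sit at fill $t + k/2$ (by invariant (ii) applied to $j = k$).

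The one piece of bookkeeping to be careful about is the presence of non-distinguished cups: the filler must still be able to invoke the shifted Lemma \ref{lem:basic_movement} on $2q$ cups drawn from the distinguished group, even when non-distinguished cups share the fill $t + j/2$ or sit at arbitrary fills above $t + k/2$. Since states are unordered multisets, the filler is free to break ties in sorted order in its favor and place its $1/2$-unit deposits on $2q$ of the distinguished cups at level $t + j/2$. Any non-distinguished cup above this level lies in the ``first $i-1$'' region and is restored to its original fill by the filler's $+1$ and the emptier's $-1$, and any non-distinguished cup strictly below is never touched, so the invariants on the distinguished group are unaffected and the $O(k^3)$ bound goes through.
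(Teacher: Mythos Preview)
Your proposal is correct and takes essentially the same approach as the paper, which simply says ``apply the same argument as in Lemma~\ref{lem:advanced_movement}, where all of the fills are shifted up by the same number $t$.'' You have spelled out the translation-invariance argument in more detail than the paper does, and your final paragraph about the bookkeeping for non-distinguished cups (ties at the current level, cups strictly above or below) correctly handles a point the paper leaves implicit.
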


\begin{proof}
    We can apply same argument as in Lemma \ref{lem:advanced_movement}, where all of the fills are shifted up by the same number $t$. So, after at most $8k^3$ time steps, at least $m/4$ of the cups will have fill exactly $t + \frac{k}{2}$.
\end{proof}

From here, we are able to conclude with our main result of this section.

\thmlower*
\begin{proof}
    For now we will assume that the emptier is greedy, and that we are playing the negative-fill cup game. We will remove both assumptions at the end of the proof. 
    
    First, suppose $k \ge c \log n$. In this case, let $k' := \left\lceil \frac{k}{c \log (n/k)} \right\rceil \ge 2$. Also, let $n'$ be such that $n \ge n' > \frac{n}{4}$ and $n' = k' \cdot 4^r$ for some integer $r \ge 1$.
    
    Since there are at least $n' = k' \cdot 4^r$ cups of fill $0$ at the beginning of the game, we can apply Corollary \ref{cor:force_upward} with $t = 0$ and $m = n'$ to make sure there are at least $k' \cdot 4^{r-1}$ cups of fill $k'$ after $O(k'^3)$ time steps, since $4k'|n'.$ Then, we can again apply Corollary \ref{cor:force_upward} with $t = k'$ and $m = n'/4$ to make sure there are at least $k' \cdot 4^{r-2}$ cups of fill $2k'$ after an additional $O(k'^3)$ time steps. We can repeat this $r$ times until we have at least $k'$ cups of fill $r \cdot k'$ after a total of $O(r \cdot k'^3)$ time steps. But since $r = \log_4 (n'/k') = \Theta(\log (n/k \cdot \log (n/k))) = \Theta(\log (n/k)),$ this means that the filler can achieve a maximum backlog of $r \cdot k' = \Theta(k)$ using only $O(r \cdot k'^3) = O(k^3/\log^2 (n/k))$ time steps.
    
    Next, suppose $k < c \log n$. In this case, set $n' = 2^k$: if $c < 1$ then $n' < n$. By Lemma \ref{lem:basic_movement}, we can send $2^{k-1}$ cups to fill $1$ during the first round, then $2^{k-2}$ of those cups to fill $2$ during the second round, and so on until we have $1$ cup at fill $k$ after $k$ rounds.
    
    Up until now we have assumed that the filler was operating against a greedy emptying algorithm in the negative-fill cup game. However, by Theorem \ref{thm:greedyopt}, if the emptier uses a different emptying algorithm, the filler could always modify their strategy to get equal or greater backlog in the same amount of time. So, regardless of the emptying strategy, the filler can force a maximum backlog of $\Theta(k)$ using $O\left(k + \frac{k^3}{\log^2 (n/k)}\right)$ time steps, assuming we are playing the negative-fill cup game. 
    Finally, recall that the maximum backlog that the filler can ensure in the negative-fill game is at most the maximum backlog that the filler can ensure in the standard game for any fixed number of time steps.
    Thus, if  the filler can force a maximum backlog of $\Theta(k)$ using $O\left(k + \frac{k^3}{\log^2 (n/k)}\right)$ time steps in the negative-fill game, then
    the filler can force the same backlog using the same number of time steps in the standard game, which completes the theorem.
\end{proof}

Finally, to conclude the section, we use the optimality of greedy emptying to resolve an open question of \cite{variable} concerning whether a filler
who is limited in the speed at which they can change $p$ can still force a large backlog.
\begin{proposition}
Consider the variable-processor cup game, starting with all empty cups (and with negative fills either allowed for disallowed).
Suppose the filler is restricted to only change $p$ once every $n^c$ steps, for some constant $c \ge 0$, and to only change $p$ by $\pm 1$ at a time. 
Then the filler can still force a backlog of $\Omega(n)$ in polynomial time.
\label{prop:limitedfiller}
\end{proposition}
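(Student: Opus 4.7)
The plan is to reduce to Theorem \ref{thm:main_lower} by interleaving its moves with long stretches during which the state of the cups is kept frozen, using those stretches to slowly migrate $p$ from one value to the next. By Theorem \ref{thm:greedyopt}, whose inductive proof extends verbatim to any restricted class of filler moves (one simply augments the state to record the current $p$ and the number of rounds since its last change, and the weak-monopolization lemmas operate on individual moves), it suffices to exhibit a restricted filling strategy that forces backlog $\Omega(n)$ against the greedy emptier in polynomial time.

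The core ingredient is what I will call a padding round at a fixed $p$: the filler places one unit of water into each of the $p$ currently fullest cups (with ties broken arbitrarily). After the fill, these $p$ cups are strictly fuller than the remaining $n-p$, so the greedy emptier is forced to remove one unit from each, returning the state to exactly what it was. This works both with and without negative fills: in the non-negative case each filled cup has fill at least $1$ before the emptying, so the emptier really does remove a full unit. A padding round therefore neither helps nor hurts the filler, yet it consumes one of the $n^c$ rounds that must elapse between successive $p$-changes.

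With padding in hand, I would simulate the unrestricted filler of Theorem \ref{thm:main_lower} move by move. Let $p_1, \ldots, p_T$ with $T = O(n^3)$ be the processor counts that this filler $F^*$ uses against the greedy emptier to force backlog $\Omega(n)$. Starting from $p = 1$, I process the $i$-th useful move in two phases: first perform $|p_i - p_{\text{current}}|$ consecutive blocks of $n^c$ padding rounds each, shifting $p$ by $\pm 1$ between blocks until $p = p_i$; then spend one more block at $p = p_i$ in which a single round executes $F^*$'s $i$-th move and the remaining $n^c - 1$ rounds are padding. Since every padding round preserves the state exactly, the useful rounds unfold identically to the unrestricted game and the final backlog is still $\Omega(n)$. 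The total number of rounds is at most $T \cdot (n+1) \cdot n^c = O(n^{c+4})$, which is polynomial in $n$.

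The only points requiring verification beyond routine bookkeeping are the state-preservation of the padding round, sketched above, and the extension of Theorem \ref{thm:greedyopt} to the restricted class of fillers; neither is a genuine obstacle, since the latter is obtained by reading the existing proof with the state augmented by the current $p$ and the elapsed time since its last change.
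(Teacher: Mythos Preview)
Your proposal is correct and follows essentially the same approach as the paper: observe that greedy emptying remains optimal in the change-limited game, use padding rounds (placing one unit into each of the $p$ fullest cups so that the greedy emptier exactly undoes it) to freeze the state while slowly migrating $p$, and thereby simulate the unrestricted $O(n^3)$-round strategy of Theorem~\ref{thm:main_lower} in $O(n^{c+4})$ rounds. The only cosmetic difference is your justification of greedy optimality via state augmentation, whereas the paper simply notes that the proof of Theorem~\ref{thm:greedyopt} never alters the value of $p$ within a step and hence transfers directly; both arguments are valid and amount to the same thing.
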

\begin{proof}
Call this version of the game the \emph{change-limited variable-processor cup game}. Notice that the proof of optimality for greedy emptying never changes the value of $ p $ used in any given step. Thus the same proof implies that greedy emptying is optimal in the change-limited variable-processor cup game. Henceforth we will assume that the emptier is greedy.

Next, observe that, no matter the value of $ p $, the filler can always ``skip'' a step in the following way: the filler simply places $ 1 $ unit of water into each of the $ p $ fullest cups, forcing the greedy emptier to remove water from those cups, so that the step has no net effect.

We have already proven that, in the standard variable-processor cup game, the filler can cause backlog $\Omega (n) $ in polynomial time. The filler can use the same strategy here, but with the following modification: between any two consecutive steps that use $p$ and $q$ processors, respectively, the filler spends $|p - q| n^c$ steps changing the number of processors from $p$ to $q$. That is, for each $i \in [p, q]$, the filler spends $n^c$ steps with $i$ processors, and renders each of those steps to have no net effect. The result is that, in polynomial time, the filler can force backlog $\Omega(n)$, as desired.
\end{proof}

\section{Upper Bounding Backlog} \label{sec:upper}

In this section, we prove that for any integer $t \ge 1$, the greedy emptying strategy can guarantee that the backlog does not exceed $O(b(t))$ against \emph{any} filling strategy, after $t$ rounds of the variable-processor cup game, where $b(t)$ is defined as in Equation \eqref{trade-off-curve} in the introduction (Section \ref{sec:intro}). The result can be phrased compactly as follows:

\begin{restatable}{theorem}{thmupper}
\label{thm:main_upper}
    Let $1 \le k \le O(n).$ Then, assuming the emptier follows a greedy emptying strategy in the variable-processor cup game, the filler needs $\Omega\left(k + \frac{k^3}{\log^2 (n/k)}\right)$ rounds to achieve backlog $k$.
\end{restatable}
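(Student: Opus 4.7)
The plan is to reduce to a purely combinatorial abstraction that I call the \emph{stone game}: $n$ stones sit on the integer line, all starting at $0$, and one move takes some position $k$ holding at least $2q$ stones (for some $q\ge 1$) and ships $q$ of them to $k+1$ and $q$ to $k-1$. The overall strategy has two pieces. First, I will prove that after $T$ steps, the furthest-from-zero stone in the stone game upper bounds the backlog after $T$ rounds of the variable-processor cup game played against a greedy emptier. Second, I will prove a time lower bound for the stone game itself that matches the statement of Theorem~\ref{thm:main_upper}.

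To bound the stone game I will simultaneously track
\begin{equation*}
\Phi(x) = \sum_{i} x_i^{2}, \qquad \Psi(x) = \sum_{i<j} |x_i-x_j|.
\end{equation*}
The central comparison is that a stone-game move with parameter $q$ increases $\Phi$ by exactly $2q$ but increases $\Psi$ by at least $2q^{2}$, since each of the $q$ up-moved stones now lies at distance $2$ from each of the $q$ down-moved ones. A ``no large gaps'' structure lemma (a stone at height $h$ can only have arisen from $2q$ stones at $h-1$, so there is always a stone at $h-1$ or $h-2$) shows that once the furthest stone reaches position $k$, at least $\Omega(k)$ stones sit above $k/2$, forcing $\Phi \ge \Omega(k^{3})$ at the end of the game. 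For the extremal case $k = \Theta(n)$, one also has $\Psi \le O(n^{3})$ throughout, and writing $\overline q$ for the average value of $q$ per step, the bounds $\Psi \ge 2T\overline q^{2}$ and $\Phi \le 2T\overline q$ combine to force $\overline q = O(1)$, hence $T \ge \Omega(n^{3})$.

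For $k \ll n$ the direct argument degrades (since $\Psi$ could be as large as $\Theta(k^{2} n)$, only yielding $T \ge \Omega(k^{4}/n)$), and to recover the optimal $\Omega(k^{3}/\log^{2}(n/k))$ bound I will introduce a checkpointed variant of the stone game: for a parameter $\ell = \Theta(k/\log(n/k))$, any stone that has ever crossed a multiple $a\ell$ is henceforth forbidden from dropping below $a\ell$. I will first show via a coupling/monotonicity lemma (if $X$ dominates $Y$ coordinatewise, every move on $Y$ can be mirrored on $X$ while preserving the inequality) that adding checkpoints only makes the player's task easier, so lower bounding the checkpointed game suffices. I will then split play into subgames between consecutive checkpoints, apply localized potentials $\Phi_a,\Psi_a$ inside each, and use a pigeonhole step to argue that a constant fraction of the $\Theta(\log(n/k))$ intervals are ``efficient'' in the sense that at least half of their entering stones exit to the next checkpoint. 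Each such efficient subgame contributes $\Omega(\ell^{3})$ steps, and summing across intervals gives the claimed bound.

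The last ingredient is the cup-game-to-stone-game reduction. I will prove by induction on rounds that, for a suitable simulation, the sorted multiset of stone positions \emph{majorizes} the sorted multiset of cup fills throughout the game; since majorization forces $x_{1}\ge y_{1}$, this transfers stone-game time lower bounds into cup-game backlog upper bounds. The core single-round lemma is that if $X$ majorizes $Y$ and $Y$ evolves to $Y'$ by one round of filling plus greedy emptying in the variable-processor cup game, then one can choose a stone-game move on $X$ producing some $X'$ that still majorizes $Y'$. I expect this single-round step to be the main obstacle of the whole proof: it is highly casework-driven, the choice of stone move (including the parameter $q$) depends on the filler's move and the current gap structure, and it crucially uses the freedom to pick the number of processors differently on the two sides --- the analogous statement actually fails in the fixed $p$-processor cup game.
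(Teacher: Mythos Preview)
Your proposal is essentially the paper's own proof: stone game abstraction, the pair of potentials $\Phi,\Psi$ with the $2q$ versus $2q^{2}$ comparison, the no-large-gaps lemma, the checkpointed variant with level-by-level potentials and a pigeonhole over $\Theta(\log(n/k))$ intervals, and the majorization-based reduction from the cup game to the stone game. Two small technical refinements are worth flagging. First, the paper's $\Psi$ carries an extra $n\sum_i|x_i|$ term; at a checkpoint one only sends $q$ stones up (the other $q$ stay put), and for that asymmetric move your bare $\sum_{i<j}|x_i-x_j|$ can actually \emph{decrease}, so you will need something like this extra term when you define $\Psi_a$. Second, the paper does not prove your single-round majorization lemma directly for the standard game: it first passes to the negative-fill game (at the cost of a factor of $2$ in backlog), and then splits the step into two pieces --- a cup-round on $X$ that preserves majorization over $Y'$ (whose Case-4 analysis crucially uses the negation symmetry available only in the negative-fill game), followed by a separate lemma showing any cup round on an even-integer state is majorized by a single stone move. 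You should anticipate needing both detours.
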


Our approach to proving the theorem will be as follows.
In Subsection \ref{subsec:l_variant}, we introduce a combinatorial variant of the cup game, which we analyze via potential function arguments. In Subsection \ref{subsec:majorization}, we introduce and prove important results on majorization and domination. Finally, in Subsection \ref{subsec:upper_finish}, we use the results from \ref{subsec:majorization} to justify why this combinatorial variant captures the standard game, and in doing so we prove Theorem \ref{thm:main_upper}.

\subsection{The $\ell$-Variant Cup Game} \label{subsec:l_variant}

Before analyzing the performance of greedy on the standard variable-processor cup game, we consider the following variants of the cup game in which the filler can only make a certain highly structured type of move.

\paragraph{The stone-variant of the cup game.} We define the \emph{stone-variant cup game} as follows (note that this game is equivalent to the ``stone game'' in Section \ref{sec:technical}, but is phrased as a game on cups rather than stones). 

The stone-variant cup game is a variation on the negative-fill variable-processor cup game. 
At the start of each round, the fills are nonnegative integers $x_1 \ge \cdots \ge x_n$, and initially, the fills are all $0$. During every round, the player of the game chooses an integer $k \ge 0$ and an integer $q \ge 1$ such that at least $2q$ of the $x_i$'s are exactly $k$ (if no such $k$ and $q$ exist, then the game is over). Then, among these $x_i$'s, we raise exactly $q$ of them by $1$ and lower exactly $q$ of them by $1$. Finally, at the end of each round (for notational convenience), we re-sort the $x_i$s to be in descending order.  The player's goal is to maximize backlog, that is, the value of $x_1$. 

Let us take a moment to remark on how the stone-variant cup game relates to the negative-fill variable-processor cup game. Recall the proof of Lemma \ref{lem:basic_movement} that, in the negative-fill variable-processor cup game, it is possible for the filler to perform a move with the following net effect: some number $2q$ of cups that all have the same fills $k$ are replaced with $q$ cups that have fills $k + 1/2$ and $ q $ cups that have fills $ k -1/2 $. This is equivalent to performing a move in the stone-variant game, except that fills in the stone-variant game are re-normalized to be twice as large. Thus, the stone-variant game can be viewed as a variation on the negative-fill variable-processor cup game, where the filler's behavior is limited to a specific type of move. 

\paragraph{The $\ell$-variant of the cup game. }
We define the $\ell$-variant of the cup game to be similar to the stone-variant, but in addition, there are infinitely equally spaced checkpoints $0, \ell, 2\ell, \cdots$, where $\ell$ is some positive integer. The $\ell$-variant of the game adds the constraint that: once a cup reaches a checkpoint, it never goes below that checkpoint. In other words, if during some round $k = a \cdot \ell$ for some integer $a$, then we raise $q$ of the $x_i$'s from $k$ to $k+1$ but rather than lowering $q$ other $x_i$'s from $k$ to $k-1,$ we keep them at $k$. (Note that that the checkpoint at $0$ ensures that all cups have non-negative fills.)

\paragraph{How we will use the games. }Although the filler in the two above games has been restricted (compared to the filler in the original game), we will show in subsequent subsections how to reduce the standard variable-processor cup game to the stone-variant, and then how to reduce the stone-variant to the $\ell$-variant. In other words, we show that restricting the filler to make $\ell$-variant-style moves does not make the game (asymptotically) harder for the filler.

In this subsection, we focus on the task of bounding the backlog $\ell$-variant of the game.  As described in Section \ref{sec:technical}, our main insight in analyzing this game is to create two potential functions and compare their growth. In the following propositions, we define two slightly simplified potential functions $\Phi, \Psi$, and analyze how certain operations similar to those in the $\ell$-variant cup game affect $\Phi$ and $\Psi$.

\begin{proposition} \label{prop:potential_1}
    Suppose that $x_1 \ge x_2 \ge \cdots \ge x_n$ are integers, and $q \ge 1$ is an integer such that for some subset $S \subset [n]$ of size $2q$, all of the $x_i$'s for $i \in S$ are all equal. Suppose that for $q$ values of $i \in S$ we increase $x_i$ by $1$, and for the other $q$ values of $i \in S$ we decrease $x_i$ by $1$. Then, the potential function $\Phi = \sum_{i = 1}^{n} x_i^2$ increases by exactly $2q$ and the potential function $\Psi = n \cdot \sum_i |x_i| + \sum_{i < j} |x_i-x_j|$ increases by at least $2q^2.$
\end{proposition}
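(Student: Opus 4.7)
The plan is to compute both changes directly by partitioning the cups according to which ones are modified. Let $k$ denote the common value of $x_i$ for $i \in S$, and split $S = U \sqcup D$, where $|U| = |D| = q$ and the cups in $U$ go from $k$ to $k+1$ while those in $D$ go from $k$ to $k-1$. Cups outside $S$ are unchanged.

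For $\Phi$, only cups in $S$ contribute to the change. Their total squared-fill changes from $2qk^2$ to $q(k+1)^2 + q(k-1)^2 = 2qk^2 + 2q$, so $\Delta\Phi = 2q$ exactly.

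For $\Psi$, I would split the change into the two summands. The key elementary inequality I would use repeatedly is that, for every real $a$, $|a+1| + |a-1| \ge 2|a|$, which follows from the triangle inequality (or a three-case check on the sign of $a$). First, the term $n\sum_i |x_i|$ only sees changes from $S$, and the change equals $nq\bigl(|k+1|+|k-1|-2|k|\bigr) \ge 0$. For the pair-sum $\sum_{i<j}|x_i-x_j|$, I would split pairs $(i,j)$ into three groups: both indices outside $S$ (contribute $0$); exactly one index in $S$ (for each fixed $j \notin S$ the contribution is $q\bigl(|k+1-x_j|+|k-1-x_j|-2|k-x_j|\bigr) \ge 0$ by the same inequality applied to $a = k - x_j$); and both indices in $S$ (pairs inside $U$ or inside $D$ stay at distance $0$, while each of the $q^2$ pairs with one endpoint in $U$ and one in $D$ goes from distance $0$ to distance $2$, contributing $2q^2$). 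Summing the three groups and adding the non-negative contribution from $n\sum_i|x_i|$ yields $\Delta\Psi \ge 2q^2$.

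There is essentially no obstacle here beyond bookkeeping: the entire proof is a direct case analysis once one isolates the inequality $|a+1|+|a-1|\ge 2|a|$ and notices that the only strictly positive contribution that we must harvest is the $2q^2$ coming from the $U\times D$ cross-pairs within $S$. The shape of the argument is what matters for later sections, since it is exactly this comparison $\Delta\Psi \gtrsim (\Delta\Phi)^2/2$ that drives the growth-rate comparison used to bound the length of the $\ell$-variant game.
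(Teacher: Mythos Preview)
Your proof is correct and follows essentially the same approach as the paper: both compute $\Delta\Phi$ directly and handle $\Delta\Psi$ by partitioning pairs into (both outside $S$), (one in $S$), and (both in $S$), observing that only the $U\times D$ cross-pairs contribute the required $2q^2$ while everything else is nonnegative via $|a+1|+|a-1|\ge 2|a|$. The paper additionally invokes a WLOG that $S$ is a contiguous block $[r:s]$ and splits the one-in-$S$ case according to whether $x_i$ equals $k$ or not, but your uniform use of the triangle inequality handles all of these at once, so the arguments are equivalent.
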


\begin{proof}
    The fact that $\Phi = \sum x_i^2$ increases by $2q$ is straightforward, since for any $x$, $q \cdot (x+1)^2 + q \cdot (x-1)^2 - 2q \cdot x^2 = 2q,$ and since the $x_i$'s we modified are all initially at the same value.
    
    Now, we prove that $\Psi$ increases by at least $2q^2$. First, we note that since the $x_i$'s are sorted in descending order, we can assume WLOG that $S = [r:s]$ for some $1 \le r \le s \le n$ with $s-r+1 = 2q$. Moreover, we can assume WLOG that we decrease $x_r, \dots, x_{r+q-1}$ by $1$ and increase $x_{r+q}, \dots, x_{s}$ by $1$. Then, we note that if $i, j$ are both not in the set $[r:s],$ that $|x_i-x_j|$ remains constant. Also, if $i$ is not in the set $[r:s],$ then $\sum_{j \in [r:s]} |x_i-x_j|$ does not change if either $x_i \ge x_r+1$ or $x_i \le x_r-1$, and otherwise increases by $2q$ if $x_i = x_r$. Hence, this sum does not decrease for any $i$. Finally, if we just restrict to $i < j$ with $i, j \in [r:s]$, originally each $|x_i-x_j|$ is $0$, but after increasing $q$ of the $x_i$'s by $1$ and decreasing $q$ of the $x_i$'s by $1$, the sum of these $|x_i-x_j|$'s becomes exactly $2q^2$. In total, this means that
\[\sum_{i < j} |x_i-x_j| = \sum_{\substack{i < j \\ i, j \not\in [r:s]}} |x_i-x_j| +  \sum_{i \not\in [r:s]} \sum_{j \in [r:s]} |x_i-x_j| + \sum_{\substack{i < j \\ i, j \in [r:s]}} |x_i-x_j|\]
    increases by at least $0 + 0 + 2q^2 = 2q^2$. Finally, we have that $\sum_i |x_i|$ does not decrease, since for any integer $x$, $q \cdot |x+1| + q \cdot |x-1| \ge 2q \cdot |x|$. This completes the proof.
\end{proof}

\begin{proposition} \label{prop:potential_2}
    Suppose that $x_1 \ge x_2 \ge \cdots \ge x_n \ge 0$ are integers, such that for some subset $S \subset [n]$ of size $q$, all of the $x_i$'s for $i \in S$ are all $0$. Suppose that we increase $x_i$ by $1$ for all $i \in S$. Then, the potential function $\Phi = \sum_{i = 1}^{n} x_i^2$ increases by exactly $q$ and the potential function $\Psi = n \cdot \sum_i |x_i| + \sum_{i < j} |x_i-x_j|$ increases by at least $q^2.$
\end{proposition}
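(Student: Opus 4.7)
The plan is to handle $\Phi$ and $\Psi$ separately. The claim for $\Phi = \sum_i x_i^2$ is immediate: every $x_i$ with $i \in S$ moves from $0$ to $1$ and no other coordinate changes, so the increase is exactly $q \cdot (1^2 - 0^2) = q$.

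For $\Psi$, I would split the two summands. The first summand $n \sum_i |x_i|$ increases by exactly $nq$, since each of the $q$ cups in $S$ moves from $0$ to $1$ while all other fills remain nonnegative. For the second summand $\sum_{i<j} |x_i - x_j|$, I would partition the index pairs into three classes. Pairs with both indices in $S$ have distance $0$ both before and after (all elements of $S$ move uniformly from $0$ to $1$); pairs with neither index in $S$ are entirely unchanged. For cross pairs, let $A = \#\{j \notin S : x_j = 0\}$ and $B = \#\{j \notin S : x_j \ge 1\}$, so that $A + B = n - q$. For each $i \in S$ and each $j \notin S$ with $x_j = 0$, the distance rises from $0$ to $1$, a change of $+1$; for each $i \in S$ and each $j \notin S$ with $x_j \ge 1$, the distance $x_j$ becomes $x_j - 1$, a change of exactly $-1$. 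Hence $\sum_{i<j}|x_i - x_j|$ changes by $qA - qB$.

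Putting the pieces together, the total change in $\Psi$ is
$nq + qA - qB = q(n + A - B) = q(q + 2A) \ge q^2$,
using $n = q + A + B$ in the second equality and $A \ge 0$ in the inequality. The only step that requires any real bookkeeping is the classification of cross pairs, but the structure of the move (every $i \in S$ has $x_i = 0$ before and $x_i = 1$ after, with all other fills nonnegative integers) forces each cross-pair contribution to be a clean $\pm 1$, so no real obstacle arises.
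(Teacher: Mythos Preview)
Your proof is correct and follows essentially the same approach as the paper's: both split $\Psi$ into its two summands, note that $n\sum_i|x_i|$ increases by exactly $nq$, and then partition the pairs in $\sum_{i<j}|x_i-x_j|$ into within-$S$, outside-$S$, and cross pairs. The only difference is cosmetic: the paper bounds each cross-pair change by $\ge -1$ to get $\Psi$ increasing by at least $nq - (n-q)q = q^2$, whereas you compute the cross-pair contribution exactly as $qA - qB$ and obtain the slightly sharper $q(q+2A) \ge q^2$.
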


\begin{proof}
    Showing that $\Phi$ increases by $q$ is straightforward, since each $x_i$ for $i \in S$ increases from $0$ to $1$.
    
    Now, we show that $\Psi$ increases by at least $q^2.$ First, note that since the $x_i$'s are sorted in descending order, we can assume WLOG that $S = [r:n]$ for $r = n-q+1$. Now, note that $n \cdot \sum_i |x_i|$ increases by exactly $n \cdot q$ since each of $x_r, \dots, x_n$ increase from $0$ to $1$. In addition, for any $i < j,$ $|x_i-x_j|$ remains constant if $i, j < r$ or $i, j \ge r$, and does not decrease by more than $1$ if $i < r \le j$. There are exactly $(r-1) \cdot (n-r+1) = (n-q) \cdot q$ pairs $(i, j)$ satisfying $i < r \le j$, so $\Psi$ increases by at least $n \cdot q - (n-q) \cdot q \ge q^2$.
\end{proof}

Now, we return to the $\ell$-variant cup game. We think of $\textbf{x}^{(t)} = (x_1^{(t)}, \dots, x_n^{(t)})$ as the set of fills after time step $t$, where $\textbf{x}^{(0)} = (0, 0, \dots, 0)$. Suppose that there exists a sequence $\textbf{x}^{(0)}, \textbf{x}^{(1)}, \dots, \textbf{x}^{(T)}$ such that after $T$ time steps, there are precisely $n_a$ cups that have reached the checkpoint $a \cdot \ell$ for each $a \ge 0$. (In other words, $x_i^{(T)} \ge a \cdot \ell$ if and only if $i \le n_a$.) Note that $n_0 = n$, and since a cup never goes below a checkpoint after reaching it, if $i > n_a$ then $x_i^{(t)} < a \cdot \ell$ for all $0 \le t \le T.$

Now, we will consider the following potential functions that are modifications of the potential functions $\Phi, \Psi$ that we used in Propositions \ref{prop:potential_1} and \ref{prop:potential_2}. For any nonnegative integer $a$, we define $f_{a, \ell}(x) := \max(0, \min(x-a \cdot \ell, \ell)).$ In other words,
\[f_{a,\ell}(x) = \begin{cases} 0 & \text{if } x < a \cdot \ell \\ x-a \cdot \ell & \text{if } a \cdot \ell \le x \le (a+1) \cdot \ell \\ \ell & \text{if } x > (a+1) \cdot \ell .\end{cases}\]
For any integers $x_1, \dots, x_n$ and $0 \le n_a \le n$, define $S := S_{n_a}(x_1, \dots, x_n)$ to be the set of the largest $n_a$ indices of the $x_i$'s (breaking ties arbitrarily). In other words, $S = S_{n_a}(x_1, \dots, x_n) \subset [n]$ is a set of size $n_a$, such that for all $i \in S$ and $j \not\in S$, $x_i \ge x_j$.
For $S = S_{n_a}(x_1, \dots, x_n)$, let the potential function $\Phi_a$ be
\[\Phi_a := \Phi_{a, \ell, n_a}(x_1, \dots, x_n) = \sum_{i \in S} f_{a, \ell}(x_i)^2\]
and let the potential function $\Psi_a$ be
\[\Psi_a := \Psi_{a, \ell, n_a}(x_1, \dots, x_n) = n_a \cdot \sum_{i \in S} |f_{a, \ell}(x_i)| + \sum_{\substack{i < j \\ i, j \in S}} \left|f_{a, \ell}(x_i) - f_{a, \ell}(x_j)\right|.\]
Note that $\Phi_a$ and $\Psi_a$ are defined for each nonnegative integer $a$, and that the tiebreaking order to determine $S$ does not affect the values $\Phi_a$ or $\Psi_a$. In addition, note that the numbers $n_a$ only depend on the final state $\textbf{x}^{(T)}$, whereas the potential functions are defined for each $\textbf{x}^{(t)}$ for all $0 \le t \le T$.

\begin{lemma} \label{lem:inequality}
    Let $k \le n$, $h = \Theta\left(\log\left(\frac{n}{k} + 1\right)\right)$ such that $h \ge 2$ and $h|k$, and $\ell = \frac{k}{h}$. Suppose that in the $\ell$-variant cup game, it is possible that after $T$ time steps, there are exactly $n_a$ cups that have reached the $a^{\text{th}}$ checkpoint or above, for integers $n_0, n_1, n_2, \dots$ Moreover, suppose that $n_h \ge k$. Then, $T \ge \Omega(h \cdot \ell^3)$.
\end{lemma}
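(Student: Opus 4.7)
The plan is to decompose time into rounds according to which slab between consecutive checkpoints the operation's height $k$ lies in, and then lower-bound the number of rounds spent in each ``good'' slab via a Cauchy--Schwarz argument comparing the growth rates of $\Phi_a$ and $\Psi_a$. For each $a \in \{0,1,\ldots,h-1\}$, let $T_a$ denote the number of rounds whose chosen height $k$ lies in $[a\ell,(a+1)\ell-1]$; every round contributes to exactly one such $T_a$ (the checkpoint operation at height $(a+1)\ell$ is naturally assigned to slab $a+1$), so $\sum_a T_a \le T$. The key structural observation is that $f_{a,\ell}(x_i)$ changes only when the operation occurs at heights in $[a\ell,(a+1)\ell-1]$, so only rounds counted in $T_a$ can move $\Phi_a$ or $\Psi_a$.

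Next I would isolate good slabs via an averaging argument. Because $n_0 = n$ and $n_h \ge k$, the ratios $r_a = n_{a+1}/n_a$ satisfy $\sum_{a=0}^{h-1} \log(1/r_a) \le \log(n/k)$. Choosing the constant hidden in $h = \Theta(\log(n/k+1))$ to be sufficiently small, a Markov-style argument forces at least $h/2$ indices $a$ to satisfy $n_{a+1} \ge c'\, n_a$ for some universal constant $c' > 0$; call these slabs good. For each good slab, Propositions \ref{prop:potential_1} and \ref{prop:potential_2}, applied to the shifted sequence $(f_{a,\ell}(x_i^{(t)}))_{i}$, imply that a round in slab $a$ that moves $q$ cups increases $\Phi_a$ by at most $2q$ and increases $\Psi_a$ by at least $q^2$. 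Summing over the $T_a$ rounds and invoking Cauchy--Schwarz gives
\[
\Phi_a^{(T)} \le 2\sum_t q_t \le 2\sqrt{T_a \cdot \textstyle\sum_t q_t^2} \le 2\sqrt{T_a \cdot \Psi_a^{(T)}},
\]
so $T_a \ge (\Phi_a^{(T)})^2/(4\Psi_a^{(T)})$. The definition of $f_{a,\ell}$ gives $\Phi_a^{(T)} \ge n_{a+1}\ell^2$ (the $n_{a+1}$ cups past checkpoint $(a+1)\ell$ each contribute $\ell^2$), while $\Psi_a^{(T)} = O(n_a^2 \ell)$ follows from the bounds $|f_{a,\ell}(x_i)| \le \ell$ and $|S| = n_a$. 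Substituting yields $T_a = \Omega((n_{a+1}/n_a)^2 \ell^3) = \Omega(\ell^3)$ for good $a$, and summing over the $\ge h/2$ good slabs gives $T \ge \Omega(h\ell^3)$ as desired.

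The main technical obstacle I anticipate is verifying the per-round potential bounds rigorously, since the set $S = S_{n_a}(\mathbf{x}^{(t)})$ in the definition of $\Psi_a$ shifts across time as the ranking of cups evolves. The checkpoint property of the $\ell$-variant resolves this cleanly: once a cup reaches $a\ell$ it never falls below, so at every time $t$ any index $i$ with $f_{a,\ell}(x_i^{(t)}) > 0$ lies in the fixed set $S_a^* := \{i : x_i^{(T)} \ge a\ell\}$ of size $n_a$. One checks that replacing $S_{n_a}(\mathbf{x}^{(t)})$ by $S_a^*$ throughout leaves the value of $\Psi_a^{(t)}$ unchanged (both include every cup with $f > 0$, and cups with $f = 0$ contribute nothing beyond the first term), which lets Propositions \ref{prop:potential_1} and \ref{prop:potential_2} apply to a fixed sequence. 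A small amount of care is also required at checkpoint-style rounds where the first term of $\Psi_a$ contributes $n_a q$ to the growth; here a direct calculation using the constraint that $\ge 2q$ cups are at height $a\ell$ shows the net growth is still $\ge q^2$ (in fact $\ge 3q^2$), as required for the Cauchy--Schwarz step.
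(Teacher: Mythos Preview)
Your proposal is correct and follows essentially the same approach as the paper: decompose rounds by slab, use a ratio/Markov argument to find $\ge h/2$ good slabs with $n_{a+1} \ge c\,n_a$, and for each good slab compare the final values $\Phi_a^{(T)} \ge n_{a+1}\ell^2$ and $\Psi_a^{(T)} = O(n_a^2\ell)$ via Cauchy--Schwarz on the per-round increments to conclude $T_a = \Omega(\ell^3)$. The only cosmetic differences are that the paper sets $q_t$ equal to the $\Phi_a$-increment (so $\Psi_a$ grows by $\ge q_t^2/2$), and that you are more explicit about why the time-varying index set $S_{n_a}(\mathbf{x}^{(t)})$ causes no trouble---a point the paper passes over. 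One small wording issue: you do not need to ``choose'' the hidden constant in $h = \Theta(\log(n/k+1))$; for any such constant the averaging gives $\ge h/2$ slabs with $n_{a+1}/n_a \ge e^{-2C}$ for the corresponding $C$, which is all you use.
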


\begin{proof}
    By definition of the $n_a$'s, we have that $n = n_0 \ge n_1 \ge \cdots \ge n_h,$ and we are assuming $n_h \ge k.$ Because $\frac{n_0}{n_h} \le \frac{n}{k} \le e^{C \cdot h}$ for some fixed constant $C > 0$, we must have that for at least $\frac{h}{2}$ choices of $0 \le a \le h-1,$ that $\frac{n_a}{n_{a+1}} \le e^{2C},$ or equivalently, $\frac{n_{a+1}}{n_a} \ge e^{-2C} =: c$ for some fixed $c > 0$.
    
    Let $T_a$ represent the set of time steps $t \le T$
    during which cups with fill between $a \cdot \ell$ and $(a+1) \cdot \ell-1$, inclusive, at the beginning of the time step are modified. Then, for any time step $t \in T_a$, the potential function $\Phi_a$ increases by some $q_t$ and the potential function $\Psi_a$ increases by at least $q_t^2/2$, by Propositions \ref{prop:potential_1} and \ref{prop:potential_2}. Otherwise, if $t \not\in T_a$, then $\Phi_a$ and $\Psi_a$ are unaffected: to see why, we justify that for all $i$, $f_{a, \ell}(x_i)$ is unaffected at time step $t$ if $t \not\in T_a$. Note that we are only modifying cups of fill either at least $(a+1) \cdot \ell$ or at most $a \cdot \ell-1$. If we modify a cup of fill $x_i \ge (a+1) \cdot \ell$, the cup will not go below the $(a+1)^{\text{th}}$ checkpoint, so $f_{a, \ell}(x_i)$ will remain $\ell$. Likewise, if we modify a cup of fill $x_i \le a \cdot \ell-1$, then $x_i \le a \cdot \ell$ even if we increase the fill by $1$, so $f_{a, \ell}(x_i) = 0$ still. Finally, note that sorting the cups based on $x_i$ has no effect on $\Phi_a$ or $\Psi_a$. We also remark that each time $t \le T$ is in exactly one set $T_a$ for $a \ge 0$, since every cup we modify at any time step $t$ had the same fill at the start of time step $t$.
    
    Now, fix some $a$ such that $\frac{n_{a+1}}{n_a} \ge c.$ At the beginning of the game, we have that $f_{a, \ell}(x_i) = 0$ for all $i$, which means $\Phi_a = \Psi_a = 0$. Moreover, at the end of the $T$ time steps, $f_{a, \ell}(x_i) \in [0, \ell]$ for all of the top $n_a$ values of $x_i$, with at least $n_{a+1} \ge c \cdot n_a$ of them being $\ell$. Therefore, at the end of the game, $\Phi_a \ge n_{a+1} \cdot \ell^2 \ge c n_a \cdot \ell^2$ but $\Psi_a \le n_a^2 \cdot \ell + {n_a \choose 2} \cdot \ell \le 2 n_a^2 \cdot \ell$. In addition, recall that for each $t \in T_a$, $\Phi_a$ increases by some $q_t$, but $\Psi_a$ increases by at least $q_t^2/2.$ Therefore,
\[\sum_{t \in T_a} q_t = \Phi_a \ge cn_a \cdot \ell^2\]
    but
\[\sum_{t \in T_a} q_t^2 \le 2\Psi_a \le 4n_a^2 \cdot \ell.\]
    Now, using the Cauchy-Schwarz inequality, we have that
\[|T_a| \ge \frac{\left(\sum_{t \in T_a} q_t\right)^2}{\sum_{t \in T_a} q_t^2} \ge \frac{(c n_a \cdot \ell^2)^2}{4 n_a^2 \cdot \ell} = \frac{c^2}{4} \cdot \ell^3.\]

    Finally, since the $T_a$'s are disjoint sets, and since this inequality is true for at least $\frac{h}{2}$ values of $a$, we have that $T \ge \sum_a |T_a| \ge \frac{c^2}{8} \cdot h \cdot \ell^3,$ as desired.
\end{proof}

Note that Lemma \ref{lem:inequality} had an assumption that $n_h \ge k$, or that at least $k$ cups have reached the checkpoint at $h \cdot \ell$. The following proposition will be useful in removing that assumption.

\begin{proposition} \label{prop:no_gaps}
    In the $\ell$-variant cup game, if there exists a cup of fill $k$ for some $k \ge 3$, then there must exist a cup of fill either $k-1$ or $k-2$.
\end{proposition}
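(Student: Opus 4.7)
The plan is to argue by induction on the number of rounds played that the invariant is preserved: after every round, whenever some cup has fill $k \ge 3$, some other cup has fill $k-1$ or $k-2$. The base case is immediate since all fills start at $0$. For the inductive step, consider a round in which the player picks a level $m$ with at least $2q$ cups; the round then moves $q$ of those cups up to $m+1$ and $q$ of them down to $m-1$ (or leaves them at $m$ if $m$ is a checkpoint). For any cup at fill $k \ge 3$ after the round, I would split into three cases by where it came from: (i) it was already at $k$ and did not move; (ii) it moved up from $k-1$; or (iii) it moved down from $k+1$.

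Cases (ii) and (iii) should be the straightforward ones. In case (ii), $m = k-1$, so either $q \ge 1$ cups end up at $k-2$, or, if $k-1$ is a checkpoint, $q$ cups remain at $k-1$; in either subcase the witness is immediate. In case (iii), $m = k+1$ and at least $2q$ cups were at $k+1$ before; applying the inductive hypothesis to $k+1 \ge 4$ produces a cup at $k$ or $k-1$ before the round. If it was at $k-1$ we are done immediately, and if it was at $k$ I would apply the inductive hypothesis one more time to that cup to obtain a cup at $k-1$ or $k-2$ before the round. Any such witness is untouched by the round, since $m = k+1$.

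Case (i) is the case with the most bookkeeping, and I would expect it to be the main (mild) obstacle. The inductive hypothesis supplies a cup at $k-1$ or $k-2$ before the round, and I would verify it survives. The only values of $m$ for which cups at $k-1$ or $k-2$ can be disturbed are $m = k-1$ and $m = k-2$; for every other $m$ the pre-existing witness remains untouched. If $m = k-1$, the analysis from case (ii) reappears: after the round either $q \ge 1$ cups have moved down to $k-2$, or $k-1$ is a checkpoint and $q$ cups remain at $k-1$. If $m = k-2$, symmetrically there are $q \ge 1$ cups at $k-1$ after the round, or $q$ cups remain at $k-2$ if $k-2$ is a checkpoint. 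In every subcase a witness at $k-1$ or $k-2$ persists, completing the induction. The only thing worth being careful about is the checkpoint behavior: whenever a checkpoint at $k-1$ or $k-2$ blocks the expected downward motion, the affected cups simply stay at that checkpoint, which still provides a valid witness.
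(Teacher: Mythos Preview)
Your proof is correct and follows essentially the same approach as the paper: both are inductions on time establishing that a witness at level $k-1$ or $k-2$ always accompanies any cup at level $k\ge 3$. The paper frames this as a minimal-counterexample argument (take the first time $t$ and level $k$ with a violation, show that witnesses at $k-1$ or $k-2$ from time $t-1$ would persist, then derive a violation already at time $t-1$), which lets it avoid your explicit three-way split on where the cup at $k$ came from; your direct induction with cases (i)--(iii) unpacks the same reasoning more explicitly. The only substantive extra step in your version is case (iii), where you chain the inductive hypothesis twice (from $k+1$ down to $k$ and then from $k$ down to $k-1$ or $k-2$), which is valid and is precisely what the paper's minimality trick handles implicitly.
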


\begin{proof}
    Suppose not, so there was a first time $t$ such that for some integer $k$, there was a cup of fill $k$ but no cups of fill $k-1, k-2$. Suppose at the previous time step there was a cup of fill $k-2$. Then, at the current time step, either some of these cups still have fill $k-2$, or at least one of the cups increased to have fill $k-1$. Suppose at the previous time step, some of the cups had fill $k-1$. Then, at the current time step, either some of the cups still have fill $k-1$, or we increased half of these cups to fill $k$ and decreased the other half to fill $k-2$. It is possible that $k-1$ was a checkpoint, in which case the decreased cups would remain at fill $k-1$. Regardless, if there was a cup of fill either $k-1$ or $k-2$ at the previous time step, there will remain at least one such cup.
    
    Therefore, in the previous time step there were no cups of fill $k - 1$ or $k - 2$. On the other hand, there must have been at least one cup of fill $k$ or greater, since in the current time step there is a cup with fill $k$. Thus, in time step $t - 1$ there is some fill $k' \ge k$ such that there was a cup with fill $k'$ but no cups with fill $k' - 1$ or $k' - 2$. This contradicts the assumption that $t$ is the smallest time such that for some integer $k$, there is a cup of fill $k$ but no cups of fill $k-1, k-2$.
\end{proof}

Combining Proposition \ref{prop:no_gaps} with Lemma \ref{lem:inequality}, we have the following result, which essentially provides us with tight bounds for the $\ell$-variant cup game.

\begin{theorem} \label{thm:inequality}
    Let $k \le n$, $h = \Theta\left(\log\left(\frac{n}{k} + 1\right)\right)$ such that $2 \le h \le k$, and $\ell = \left\lfloor\frac{k}{h}\right\rfloor$. Suppose that in the $\ell$-variant cup game, it is possible that after $T$ time steps, there is a cup of fill $4 k$ or greater. Then, $T \ge \Omega(h \cdot \ell^3) = \Omega(k^3/h^2)$.
\end{theorem}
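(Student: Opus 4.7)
The plan is to reduce Theorem \ref{thm:inequality} to Lemma \ref{lem:inequality}, with Proposition \ref{prop:no_gaps} serving as the bridge that verifies the lemma's hypothesis on $n_h$. First, I would set $k' := h \cdot \ell = h \cdot \lfloor k/h \rfloor \le k$, so that $h \mid k'$ and $\ell = k'/h$, matching the divisibility hypothesis of Lemma \ref{lem:inequality}. Since $k' \ge k/2$ whenever $k \ge 2h$, the condition $h = \Theta(\log(n/k'+1))$ is automatic from $h = \Theta(\log(n/k+1))$. It then suffices to verify $n_h \ge k'$, where $n_h$ counts cups above the checkpoint at height $h \cdot \ell = k'$.

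To verify this, I iterate Proposition \ref{prop:no_gaps} downward starting from the cup of fill $\ge 4k$ guaranteed by hypothesis. This produces a sequence of distinct cups with fills $f_0 > f_1 > f_2 > \cdots$ satisfying $f_0 \ge 4k$ and $f_{i+1} \in \{f_i - 1, f_i - 2\}$, hence $f_i \ge 4k - 2i$. The cups are automatically distinct because their fills strictly decrease, and the iteration remains valid as long as $f_i \ge 3$. Taking $i$ up through $\lfloor 3k/2 \rfloor$, all of the corresponding cups satisfy $f_i \ge k \ge h \cdot \ell$, yielding $\lfloor 3k/2 \rfloor + 1 > k \ge k'$ distinct cups sitting above the top checkpoint. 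Hence $n_h \ge k'$, as required.

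With the hypothesis confirmed, Lemma \ref{lem:inequality} applied to $k'$ gives $T \ge \Omega(h \cdot \ell^3)$. In the main regime $k \ge 2h$ we have $\ell \ge k/(2h)$, so $h \cdot \ell^3 \ge k^3/(8h^2) = \Omega(k^3/h^2)$. The degenerate regime $h > k/2$ is handled trivially: the target $k^3/h^2$ is only $O(k)$, while in the $\ell$-variant game any single cup's fill grows by at most $1$ per round, so reaching fill $4k$ from the all-zero start requires $T \ge 4k = \Omega(k) = \Omega(k^3/h^2)$. The main obstacle is the iterative use of Proposition \ref{prop:no_gaps} in the middle step: a priori a single tall cup at fill $4k$ could conceivably coexist with an otherwise near-empty configuration, in which case the potential-function machinery behind Lemma \ref{lem:inequality} would give no bound at all; the ``no-gaps'' proposition is exactly what rules this out and turns one tall cup into a dense stack of cups above the final checkpoint.
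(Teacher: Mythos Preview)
Your proof is correct and follows essentially the same approach as the paper: iterate Proposition~\ref{prop:no_gaps} downward from the cup at fill $\ge 4k$ to produce at least $k$ cups with fill $\ge k \ge h\ell$, conclude $n_h \ge k$, and invoke Lemma~\ref{lem:inequality}. The paper's own proof is terser---it neither introduces $k' = h\ell$ to explicitly match the divisibility hypothesis of the lemma nor separates out the $h > k/2$ edge case---but your added care on these points is a refinement of the same argument, not a different route.
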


\begin{proof}
    By Proposition \ref{prop:no_gaps}, if there is a cup of fill $4 k$ or greater, then there must be at least $k$ cups of fill $k$ or greater, meaning that there are at least $k$ cups which have reached the $h^{\text{th}}$ checkpoint or higher, since $k \ge h \cdot \ell$. Therefore, we can apply Lemma \ref{lem:inequality} to conclude our proof.
\end{proof}

\subsection{Majorization and Domination} \label{subsec:majorization}

In this subsection, we develop the necessary toolkit that will allow us to relate the $\ell$-variant cup game to the standard variable-processor cup game. First, we define majorization and domination, which relate to comparing two real-valued sequences, and we state some basic results about them. We then prove more complicated results which will be important in relating the two games.

For any sequence of fills $x_1, \dots, x_n \in \BR$ of the $n$ cups, we abbreviate the state of the cups as $\{x_i\}$. We will explicitly specify whenever we assume $x_1 \ge x_2 \ge \cdots \ge x_n$, in which case we say that the sequence $\{x_i\}$ is \defnn{sorted}. We also do not require $x_1, \dots, x_n$ to be nonnegative in this subsection.

\begin{definition} \label{def:majorization}
    Suppose that we have two sequences $x_1 \ge x_2 \ge \cdots \ge x_n$ and $y_1 \ge y_2 \ge \cdots \ge y_n$, where $\sum_{i = 1}^{n} x_i = \sum_{i = 1}^{n} y_i.$ 
    Then, we say that $\{x_i\}$ \defnn{majorizes} $\{y_i\}$ if for every $m \le n,$ $\sum_{i = 1}^{m} x_i \ge \sum_{i = 1}^{m} y_i$.
    In general, if the sequences $\{x_i\}$ and $\{y_i\}$ are not necessarily sorted, we say that $\{x_i\}$ majorizes $\{y_i\}$ if this is true after sorting $\{x_i\}$ and $\{y_i\}$.
\end{definition}

\begin{remark}
    Note that majorization satisfies transitivity, so for any real number $r$, majorization creates a partial order on the set of length $n$ sequences (up to permutation of terms) that add to $r$.
\end{remark}

We note two basic results about majorization.

\begin{proposition} \label{prop:ez2}
    Suppose that $\{a_1, \dots, a_k\}$ majorizes $\{b_1, \dots, b_k\}$ (where $\sum_{i = 1}^{k} a_i = \sum_{i = 1}^{k} b_i$). Then, for any sequence $\{c_1, \dots, c_r\},$ $\{a_1, \dots, a_k, c_1, \dots, c_r\}$ majorizes $\{b_1, \dots, b_k, c_1, \dots, c_r\}$.
\end{proposition}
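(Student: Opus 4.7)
The plan is to reduce the claim to the max-characterization of top-$m$ partial sums. Without loss of generality, sort $\{a_i\}$ and $\{b_i\}$ in decreasing order, and let $c_{(1)} \ge c_{(2)} \ge \cdots \ge c_{(r)}$ be the $c_i$'s sorted in decreasing order. Write $\{a'_j\}$ and $\{b'_j\}$ for the sorted merged sequences $\{a_i\} \cup \{c_i\}$ and $\{b_i\} \cup \{c_i\}$, respectively. The total-sum condition $\sum_j a'_j = \sum_j b'_j$ is immediate from $\sum_i a_i = \sum_i b_i$, so the only real content is the partial-sum inequality for each $m \le k+r$.

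The first step is a standard exchange argument establishing that
\[\sum_{j=1}^m a'_j \;=\; \max_{\substack{p+q=m\\ 0 \le p \le k,\; 0 \le q \le r}}\!\left(\sum_{j=1}^p a_j + \sum_{j=1}^q c_{(j)}\right),\]
and likewise for $\{b'_j\}$. The reason is that any top-$m$ sub-multiset of $\{a_i\} \cup \{c_i\}$ splits into a piece of some size $p$ from the $a$-side and a piece of size $q=m-p$ from the $c$-side; within each piece, replacing any chosen element by an unchosen one of weakly larger value never decreases the total, so the maximum sum is attained by the top $p$ from one side and the top $q$ from the other.

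The second step is to combine this identity with the hypothesis that $\{a_i\}$ majorizes $\{b_i\}$: for every $0 \le p \le k$, $\sum_{j=1}^p a_j \ge \sum_{j=1}^p b_j$ (using $\sum a_i = \sum b_i$ at $p=k$). Let $(p^\ast, q^\ast)$ be a split attaining the maximum on the $b'$-side. Then
\[\sum_{j=1}^m a'_j \;\ge\; \sum_{j=1}^{p^\ast} a_j + \sum_{j=1}^{q^\ast} c_{(j)} \;\ge\; \sum_{j=1}^{p^\ast} b_j + \sum_{j=1}^{q^\ast} c_{(j)} \;=\; \sum_{j=1}^m b'_j,\]
since the common term $\sum_{j=1}^{q^\ast} c_{(j)}$ cancels and majorization handles the rest. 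This gives the required partial-sum inequality for every $m$, so $\{a'_j\}$ majorizes $\{b'_j\}$.

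I do not expect any serious obstacle: the whole argument is a clean consequence of the max-characterization together with the definition of majorization, and the key move is noting that the common summand $\sum_{j=1}^{q^\ast} c_{(j)}$ allows us to pass from a per-split comparison to a max-over-splits comparison by evaluating both sides at the optimizer of the weaker side. The only step worth writing out carefully is the exchange argument for the max-characterization, but this is standard.
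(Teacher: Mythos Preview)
Your proof is correct and takes essentially the same approach as the paper: both argue that the top-$m$ elements of the $b$-side merged sequence split as the top $p^\ast$ from the $b$'s plus the top $q^\ast$ from the $c$'s, apply the majorization hypothesis to replace $\sum_{j\le p^\ast} b_j$ by $\sum_{j\le p^\ast} a_j$, and observe that the resulting sum is a lower bound for the top-$m$ sum on the $a$-side. The only difference is that you state the max-over-splits characterization explicitly, whereas the paper leaves it implicit.
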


\begin{proposition} \label{prop:perturbation}
    If $\{x_i\}$ and $\{y_i\}$ are sorted, and $\{x_i\}$ majorizes $\{y_i\}$, then we can convert $\{y_i\}$ into $\{x_i\}$ via a finite series of perturbations of increasing some $y_j$ by $\eps$ and decreasing $y_k$ by $\eps$ for some $0 < \eps < 1$ and $j > k$. (The perturbations do not all need to use the same value of $\eps$.) Moreover, after each such perturbation, the sequence is still sorted.
\end{proposition}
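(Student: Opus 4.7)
My plan is a Muirhead-style induction via ``reverse Robin Hood'' perturbations: at each step I transfer some amount $\eps \in (0, 1)$ of mass from one coordinate of $\{y_i\}$ to another, chosen to preserve both sortedness and the majorization $\{x_i\} \succeq \{y_i\}$, while strictly reducing the progress measure $M := \sum_l |x_l - y_l|$.

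To set up a single perturbation, suppose $\{y_i\} \ne \{x_i\}$ and write $S_x^l, S_y^l$ for the partial sums. Let $i$ be the smallest differing index; since the partial sums agree for indices $< i$ and $S_x^i \ge S_y^i$, we have $x_i > y_i$. Let $l_0 \ge i$ be the smallest index with $S_x^{l_0} = S_y^{l_0}$ (it exists because the total sums agree), so $S_x^l > S_y^l$ strictly on $[i, l_0 - 1]$. In the block $[i, l_0]$ the sequences have equal sums and $x$ still majorizes $y$; let $j \in [i, l_0]$ be the largest differing index. A mirror-image argument applied to reverse partial sums on this block gives $y_j > x_j$. I then perform the perturbation $y_i \mapsto y_i + \eps$ and $y_j \mapsto y_j - \eps$, with
\[
\eps := \min\!\bigl(x_i - y_i,\ y_j - x_j,\ \min_{l \in [i, j - 1]}(S_x^l - S_y^l),\ y_j - y_{j+1},\ \tfrac{1}{2}\bigr),
\]
where the fourth term is dropped if $j = n$. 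Each of the first four terms is strictly positive: the third by minimality of $l_0$; the fourth because if $j < l_0$ then $y_{j+1} = x_{j+1}$ and $y_j > x_j \ge x_{j+1} = y_{j+1}$, while the case $j = l_0 < n$ with $y_{l_0} = y_{l_0+1}$ would force $x_{l_0} \ge x_{l_0+1} = y_{l_0+1} = y_{l_0}$, contradicting $x_{l_0} < y_{l_0}$. Thus $\eps \in (0, 1)$. Sortedness is preserved because $y_i + \eps \le x_i \le x_{i-1} = y_{i-1}$ and $y_j - \eps \ge y_{j+1}$; majorization is preserved because only the partial sums at positions $l \in [i, j - 1]$ move (each by $+\eps \le S_x^l - S_y^l$).

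To argue termination, inspect which term in the minimum binds. If the first or second term binds, one coordinate of $y$ becomes equal to the corresponding coordinate of $x$, shrinking the discrete set of differing indices (at most $n$ such events). If the third binds, a new equality $S_x^{l^*} = S_y^{l^*}$ is created, refining the block partition. If the fourth binds, $y_j$ and $y_{j+1}$ merge into an equal-value run, after which the largest-differing-index argument still applies (with $j$ simply re-chosen). If the cap $\tfrac{1}{2}$ binds, then $M$ drops by exactly $1$, and since $M$ begins finite this occurs only finitely often. Combining these four cases of bounded progress gives finite termination. The main obstacle is the cap $\eps < 1$: a textbook Muirhead proof without the cap finishes in at most $n - 1$ reverse T-transforms, but the cap may generate many additional ``fractional'' steps; the argument still closes because $M$ decreases by a uniform positive amount whenever the cap binds, and the other three binding cases each drive a discretely-decreasing quantity.
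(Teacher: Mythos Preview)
Your approach is sound in outline and genuinely different from the paper's, but there are two small slips worth flagging. First, in arguing that the fourth term $y_j - y_{j+1}$ is positive when $j = l_0 < n$, you assert $x_{l_0+1} = y_{l_0+1}$; this equality is unjustified (the index $l_0+1$ lies outside your block, so nothing forces it). What \emph{is} true is $x_{l_0+1} \ge y_{l_0+1}$, which follows from $S_x^{l_0} = S_y^{l_0}$ together with majorization at $l_0+1$; with $\ge$ in place of $=$ your contradiction chain still closes. Second, your termination clause for the fourth-term-binding case (``$y_j$ and $y_{j+1}$ merge \ldots\ with $j$ simply re-chosen'') is not actually a progress argument. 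Fortunately the case is vacuous: since $j = l_0$ (your block argument forces this) and $x_j \ge x_{j+1} \ge y_{j+1}$, one has $y_j - y_{j+1} \ge y_j - x_j$, so the fourth term is always at least the second and never binds uniquely. Once you drop the redundant second and fourth terms, the minimum is over $\{x_i - y_i,\ \min_l(S_x^l - S_y^l),\ \tfrac12\}$, and your remaining progress measures (the partial-sum equality set grows, or $M$ drops by $1$) cleanly give termination.

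For comparison, the paper's proof is considerably lighter: it inducts on $n$, stripping off the first or last coordinate whenever $x_1 = y_1$ or $x_n = y_n$, and otherwise sets $\alpha = \min(x_1 - y_1,\, y_n - x_n)$ and performs $1 + \lfloor\alpha\rfloor$ identical transfers of size $\eps = \alpha/(1+\lfloor\alpha\rfloor) < 1$ from position $n$ to position $1$. Because only the extreme indices move, sortedness is automatic and no block or partial-sum analysis is needed; the cap $\eps < 1$ is handled by a single division rather than a fifth term in a minimum. Your block-based argument is the more ``Muirhead-native'' route and would be preferable if one needed sharper control on which coordinates move, but for this proposition the paper's endpoint-only induction is shorter.
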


\begin{definition} \label{def:domination}
    Given two sequences $x_1 \ge x_2 \ge \cdots \ge x_n$ and $y_1 \ge y_2 \ge \cdots \ge y_n,$ we say that $\{x_i\}$ \defnn{dominates} $\{y_i\}$ if $x_i \ge y_i$ for all $i$.
    In general, we say that $\{x_i\}$ dominates $\{y_i\}$ if this is true after sorting $\{x_i\}$ and $\{y_i\}$.
\end{definition}

\begin{remark}
    Unlike majorization, where we only say that $\{x_i\}$ majorizes $\{y_i\}$ if $\sum_{i=1}^n x_i = \sum_{i=1}^n y_i$, we only have that $\sum_{i=1}^n x_i = \sum_{i=1}^n y_i$ and $\{x_i\}$ dominates $\{y_i\}$ if the sequences $\{x_i\}$ and $\{y_i\}$ are the same (after sorting).
\end{remark}

We note the following well-known proposition about domination.

\begin{proposition} \label{prop:super_dom_ez}
    Suppose that $x_1, \dots, x_n$ and $y_1, \dots, y_n$ are real-valued, not necessarily sorted, sequences, but where $x_i \ge y_i$ for all $i$. Then, after sorting, $\{x_i\}$ dominates $\{y_i\}.$
\end{proposition}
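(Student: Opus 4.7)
The plan is to prove coordinatewise domination in the sorted order by a direct pigeonhole-style argument on the sorted indices. Let $x_{(1)} \ge x_{(2)} \ge \cdots \ge x_{(n)}$ denote the sorted rearrangement of $\{x_i\}$ and similarly $y_{(1)} \ge \cdots \ge y_{(n)}$ for $\{y_i\}$. The goal is to show $x_{(i)} \ge y_{(i)}$ for every $1 \le i \le n$.

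Fix an index $i$. First I would identify a set $J \subseteq [n]$ of size $i$ consisting of indices that attain the top $i$ values of the original sequence $\{y_j\}$, so that $y_j \ge y_{(i)}$ for every $j \in J$. Then, using the coordinatewise hypothesis $x_j \ge y_j$, I conclude that $x_j \ge y_{(i)}$ for every $j \in J$. Thus the original sequence $\{x_j\}$ has at least $i$ entries with value at least $y_{(i)}$, which immediately forces the $i$-th largest value of $\{x_j\}$, namely $x_{(i)}$, to satisfy $x_{(i)} \ge y_{(i)}$. Since $i$ was arbitrary, the sorted sequence $\{x_{(i)}\}$ dominates $\{y_{(i)}\}$, which by Definition \ref{def:domination} is exactly the conclusion.

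This argument is essentially one paragraph of careful bookkeeping, and there is no real obstacle: the only subtlety is handling ties in the $y$'s when selecting the set $J$, which is resolved simply by picking any $i$ indices with the largest $y$-values (breaking ties arbitrarily). No induction or exchange argument is needed.
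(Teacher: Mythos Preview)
Your proof is correct and takes essentially the same pigeonhole approach as the paper: the paper phrases it as a proof by contradiction (if $x_{(j)} < y_{(j)}$, the $j$ indices carrying the top $y$-values and the $n-j+1$ indices carrying the bottom $x$-values must overlap at some $m$ with $y_m > x_m$), while you give the direct version of the same counting argument.
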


For completeness, we prove Propositions \ref{prop:ez2}, \ref{prop:perturbation}, and \ref{prop:super_dom_ez} in Appendix \ref{sec:appendix_ez}.

Next, we prove the following proposition relating to adding fills.

\begin{proposition} \label{prop:ez}
    If the cups have fills $x_1 \ge x_2 \ge \cdots \ge x_n$ before the start of a time step, and if the filler fills $a_i$ to each $x_i$ during the time step, where $0 \le a_i \le 1$, then we can always assume without loss of generality that $a_1+x_1 \ge a_2+x_2 \ge \cdots \ge a_n+x_n$.
\end{proposition}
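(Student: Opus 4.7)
The plan is to exploit the fact, highlighted in Section~\ref{sec:prelim}, that a state of the game is determined only up to permutation of the cups, i.e.\ by the multiset of fills. Concretely, I will show that for any valid choice $(a_1, \dots, a_n)$ by the filler there is another valid choice $(a_1', \dots, a_n')$ (each $a_i' \in [0,1]$ and $\sum_i a_i' = \sum_i a_i$) that yields exactly the same multiset of post-fill values but with $x_1 + a_1' \ge \cdots \ge x_n + a_n'$. Since the two moves produce equivalent states, they are interchangeable for purposes of the rest of the game.

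To construct $(a_i')$, let $M$ denote the multiset of post-fill values $x_1 + a_1, \ldots, x_n + a_n$ arising from the filler's intended move, and let $y_1 \ge y_2 \ge \cdots \ge y_n$ be its elements in non-increasing order. Define $a_i' := y_i - x_i$. Then $\sum_i a_i' = \sum_i y_i - \sum_i x_i = \sum_i a_i$, and $x_i + a_i' = y_i$, so the resulting sorted post-fill sequence is exactly $y_1, \ldots, y_n$; in particular the post-fill multiset is unchanged.

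The only thing left to check is the pointwise constraint $0 \le a_i' \le 1$, equivalently $x_i \le y_i \le x_i + 1$ for every $i$. Both inequalities follow from elementary counting. For the lower bound, each index $j \in \{1, \ldots, i\}$ contributes an element $x_j + a_j \ge x_j \ge x_i$ to $M$, so $M$ has at least $i$ elements that are $\ge x_i$, which forces its $i$-th largest element $y_i$ to be $\ge x_i$. For the upper bound, each index $j \in \{i, \ldots, n\}$ contributes an element $x_j + a_j \le x_j + 1 \le x_i + 1$ to $M$, so $M$ has at least $n - i + 1$ elements that are $\le x_i + 1$, forcing $y_i \le x_i + 1$. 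I do not foresee any real obstacle here; the whole argument is a clean bookkeeping exercise leveraging the multiset symmetry of the game.
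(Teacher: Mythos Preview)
Your proof is correct. Both your argument and the paper's exploit the same underlying fact---that only the multiset of post-fill values matters---but the executions differ. The paper proceeds by bubble sort: whenever $x_i + a_i \le x_{i+1} + a_{i+1}$ for adjacent indices, it swaps the two fills and checks locally, via the chain $x_{i+1} \le x_i \le a_i + x_i \le a_{i+1} + x_{i+1} \le x_{i+1} + 1$, that the new increments stay in $[0,1]$; iterating yields the sorted order. You instead sort globally in one step and verify the constraint $0 \le a_i' \le 1$ by a rank-counting argument on the multiset. Your route is slightly more direct and avoids the iteration; the paper's route is more algorithmic and makes the ``swap two cups'' picture explicit. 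Neither gains anything essential over the other for this proposition.
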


\begin{proof}
    Suppose that $a_i+x_i \le a_{i+1}+x_{i+1}$. Then, since $x_i \ge x_{i+1}$, this means that $x_{i+1} \le x_i \le a_i+x_i \le a_{i+1}+x_{i+1} \le 1 + x_{i+1}$ since fills are always between $0$ and $1$. Instead, we can replace $a_{i+1}$ with $(a_i+x_i)-x_{i+1}$ and $a_i$ with $(a_{i+1}+x_{i+1})-x_i$, which are both in the range $[0, 1]$, and the fills of the $i^{\text{th}}$ and $(i+1)^{\text{th}}$ cups will just be switched. We can repeat this bubble-sorting procedure until we actually have $a_1+x_1 \ge a_2+x_2 \ge \cdots \ge a_n+x_n,$ and we do not affect the fills of the cups up to permutation. Moreover, all fills are still between $0$ and $1$ for that round.
\end{proof}

Hence, we assume that for any round, both before and after the filler moves, the cups are sorted in non-increasing order. Next, the (greedy) emptier removes $1$ unit of water from each of the first $p$ cups, and then sorts the cups in non-increasing order.

Our main technical result to reduce from the standard variable-processor cup game to the stone-variant cup game is the following lemma. The lemma roughly implies that if some sequence $\{x_i\}$ majorizes another sequence $\{y_i\}$, then the filler would always be at an advantage in the future (in the negative-fill cup game) if the fills were $\{x_i\}$ instead of $\{y_i\}$.

\begin{lemma} \label{lem:majorization}
    Suppose that $x_1 \ge x_2 \ge \cdots \ge x_n$ and $y_1 \ge y_2 \ge \cdots \ge y_n$ are reals, where $\{x_i\}$ majorizes $\{y_i\}$, and that there exists some filling procedure on $\{y_i\}$ such that after greedy emptying (allowing for negative fills), $\{y_i\}$ becomes some sequence $\{y_i'\}$ ($\{y_i'\}$ is not necessarily sorted). Then, there exists a filling procedure on $\{x_i\}$ (with a possibly different value of $p$) such that after greedy emptying (allowing for negative fills), $\{x_i\}$ becomes $\{x_i'\}$ and $\{x_i'\}$ majorizes $\{y_i'\}$.
\end{lemma}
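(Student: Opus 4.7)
The strategy is to reduce the statement to a single ``elementary perturbation'' and then argue by casework. By Proposition~\ref{prop:perturbation}, I can find a chain $\{y_i\} = z^{(0)}, z^{(1)}, \dots, z^{(m)} = \{x_i\}$ such that each $z^{(r+1)}$ is obtained from $z^{(r)}$ by increasing one coordinate by some $\eps > 0$ and decreasing another by $\eps$, with each $z^{(r+1)}$ still sorted and majorizing $z^{(r)}$. Chaining the single-perturbation version of the lemma across the $m$ steps, and using transitivity of majorization, reduces the proof to the case where $\{x_i\}$ differs from $\{y_i\}$ at only two indices $j < k$, with $x_j = y_j + \eps$, $x_k = y_k - \eps$, and $\eps$ small enough that $\{x_i\}$ remains sorted.

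For this elementary case, I would proceed as follows. By Proposition~\ref{prop:ez} we may assume the given filling on $\{y_i\}$ uses $p$ processors with fills $b_1, \dots, b_n \in [0,1]$ chosen so that $\{y_i + b_i\}$ is already sorted in non-increasing order. Greedy emptying then removes $1$ unit from each of coordinates $1, \dots, p$, producing $y_i'$ equal to $y_i + b_i - 1$ for $i \le p$ and $y_i + b_i$ otherwise. The natural first attempt on $\{x_i\}$ is to reuse the same $p$ and the same fills $b_i$. Since $\{x_i\}$ agrees with $\{y_i\}$ everywhere except that coordinate $j$ is shifted up by $\eps$ and coordinate $k$ down by $\eps$, the sequence $\{x_i + b_i\}$ is obtained from $\{y_i + b_i\}$ by the same local modification. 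The subtlety is that, after resorting, coordinate $j$ may migrate to an earlier position and coordinate $k$ to a later one, possibly crossing the emptying boundary at rank $p$. To compensate, I would choose the new processor count $p' \in \{p-1, p, p+1\}$ and perturb at most two of the fills $b_j, b_k$ by at most $\eps$ so that every fill stays in $[0,1]$ and the total remains integral. The freedom to change the number of processors is essential: as flagged in Subsection~\ref{subsec:upper_overview}, the analogous claim fails in the fixed-$p$ game.

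The main obstacle is the casework for the elementary step. The cases are indexed by which of $j$ and $k$ lie above, below, or exactly at rank $p$ after the $\eps$-perturbation rearranges the sorted order, and by whether $b_j$ or $b_k$ is already pinned at $0$ or $1$ (so that we cannot push it in the necessary direction). In each case the goal is to show that the sorted $\{x_i'\}$ majorizes the sorted $\{y_i'\}$, which amounts to checking that the partial sums of the former dominate those of the latter. Because the two multisets differ by $O(\eps)$ at only a bounded number of coordinates, this reduces to verifying that a single unit of $\eps$-mass has been transferred \emph{forward} (toward smaller indices in the sorted order), which is precisely what the chosen $p'$ and adjusted fills are designed to arrange.
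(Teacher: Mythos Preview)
Your high-level strategy---reduce via Proposition~\ref{prop:perturbation} to a single elementary perturbation at indices $j < k$, then handle that case by adjusting $p$ and the fills---matches the paper exactly. However, your proposed construction for the elementary step has an internal inconsistency that is not merely cosmetic. You claim to take $p' \in \{p-1, p, p+1\}$ while ``perturb[ing] at most two of the fills $b_j, b_k$ by at most $\eps$ so that \ldots the total remains integral.'' But if $p' \ne p$, the total fill must change by exactly one unit, which two $\eps$-sized perturbations cannot accomplish. Correspondingly, your final paragraph's claim that the verification reduces to tracking ``a single unit of $\eps$-mass \ldots transferred forward'' underestimates what actually happens: in the critical cases, full unit shifts appear alongside the $\eps$-shifts, and one must verify majorization on triples or quadruples of affected coordinates (via Proposition~\ref{prop:ez2}), not just on a pair.

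The idea you are missing is this. The paper first reduces (by shifting an amount $\eps' \le \eps$ of fill between cups $j$ and $k$) to the boundary situation $a_j = 0$ or $a_k = 1$. In the case $a_j = 0$ with $j \le p$, one sets the fill on cup $j$ to a full $1$---a unit change, not an $\eps$-nudge---and takes $p' = p + 1$; the extra processor then empties one additional cup near rank $p+1$ or $p+2$, and one checks across five subcases that the resulting three or four altered coordinates of $\{x_i'\}$ majorize the corresponding coordinates of $\{y_i'\}$. The case $a_k = 1$ is handled by a negation symmetry (sending fills $a_i$ to $1 - a_i$ and $p$ to $n-p$), which reduces it to the previous case and yields $p' = p - 1$. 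So your range $p' \in \{p-1, p, p+1\}$ is correct and your casework axes are the right ones, but the actual move on the fills is a unit jump, and the majorization verification is correspondingly more involved than an $\eps$-mass transfer.
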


\begin{proof}
    By Proposition \ref{prop:perturbation} and the transitivity of majorization, we can assume that the only difference between $\{x_i\}$ and $\{y_i\}$ is that there exist some $j < k$ such that $x_j = y_j+\eps$ and $x_k=y_k-\eps,$ and for all other $i$, $x_j = y_j$, and moreover, the sequences $\{x_i\}$ and $\{y_i\}$ are sorted.
    
    By Proposition \ref{prop:ez}, we may assume in the filling procedure for $\{y_i\}$ we add $a_i$ to each $y_i$, so that $\sum a_i = p$ (where $1 \le p \le n$ is an integer) and $a_1+y_1 \ge \dots \ge a_n+y_n,$ and then remove $1$ from each of the first $p$ of the $(a_i+y_i)$'s (as they are the largest). 
    
    First, if $p = n$, it is clear that the filler fills every cup with $1$ unit, and then the emptier empties every cup, so $\{y_i'\} = \{y_i\}$. So, we can just do the same filling procedure on $\{x_i\}$, and after the emptier moves, we have $\{x_i'\} = \{x_i\}$ which majorizes $\{y_i\}$. So, from now on we assume $1 \le p \le n-1$. We have $4$ main cases for what can happen.
    
    \textbf{Case 1:} $a_j > 0$ and $a_k < 1$. In this case, let $\eps' = \min(a_j, 1-a_k, \eps).$ If we had $x_j = y_j+\eps'$ and $x_k = y_k-\eps'$ instead, then we could just perform the following filling procedure on $\{x_i\}$ to get $\{x_i'\}$: fill the $j^{\text{th}}$ cup with $a_j-\eps'$, the $k^{\text{th}}$ cup with $a_k+\eps'$, and the remaining cups with $a_i$ for each $i$ (note that, by the definition of $\eps'$, this is guaranteed to be a valid filling procedure, placing between $0$ and $1$ units of water in each cup). Then, in fact we would have $\{x_i'\} = \{y_i'\},$ since even before the greedy emptying (but after the filling), the fills would be the same. If $\eps = \eps'$, this completes the case. Otherwise, either $a_j = \eps'$ or $1-a_k = \eps'$, in which case we let $\{z_i\}$ be the same as $\{y_i\}$, except $z_j = y_j+\eps'$ and $z_k = y_k-\eps'$. We have shown there is a filling procedure on $\{z_i\}$ that obtains the same results as the filling procedure on $\{y_i\}$, so it suffices to prove the lemma for $\{x_i\}$ and $\{z_i\}$ (with the filler placing water into $\{z_i\}$ so that after greedy emptying we get $\{y_i'\}$) instead of for $\{x_i\}$ and $\{y_i\}$. But in this case, since $\eps' \in \{a_j, 1-a_k\}$, the filler on $\{z_i\}$ either does not place any fill in the $j^{\text{th}}$ cup (if $\eps' = a_j$) or places a full $1$ unit of fill into the $k^{\text{th}}$ cup (if $\eps' = 1 - a_k$), so we have successfully reduced to the case where either $a_j = 0$ or $a_k = 1.$
    
    \textbf{Case 2:} $a_j = 0$ and $p < j$. In this case, for the $\{x_i\}$'s we do the same filling procedure of adding $a_i$ to each $x_i$. Therefore, since $a_1+y_1 \ge a_2+y_2 \ge \dots \ge a_n+y_n$ and since $x_j = y_j+\eps, x_k = y_k-\eps$, the largest $p$ values of $\{x_i+a_i\}$ are either the first $p$ values, or the first $p-1$ values and the $j^{\text{th}}$ value. But since $a_j = 0$, we have that $x_j+a_j = x_j,$ and since both the $\{y_i\}$'s and $\{x_i\}$'s are sorted in nonincreasing order, this means that $x_j+a_j = x_j \le x_p \le x_p+a_p$, so the largest $p$ values of $\{x_i+a_i\}$ are just the first $p$ values. So, the greedy emptier empties the first $p$ cups, and $y_i' = x_i'$ for all $i$, except $x_j' = y_j+a_j+\eps$ and $x_k' = y_k+a_k-\eps$. Since $y_j+a_j \ge y_k+a_k,$ this implies that $\{x_j', x_k'\}$ majorizes $\{y_j', y_j'\}$, so by Proposition \ref{prop:ez2}, $\{x_i'\}$ majorizes $\{y_i'\}.$
    
    \textbf{Case 3:} $a_j = 0$ and $j \le p$. In this case, for the $\{x_i\}$'s we fill $1$ unit for cup $j$ and $a_i$ units for all $i \neq j$ (so we now have filled $p+1$ total units). But now, we empty the fullest $p+1$ cups (instead of the fullest $p$ cups). Note that, whereas the transformation of $\{y_i\}$ to $\{y_i'\}$ uses a $p$-processor time step of the cup game, the transformation from $\{x_i\}$ to $\{x_i'\}$ uses a $p + 1$-processor time step of the cup game.
    We have $5$ possible subcases:
    
\begin{enumerate}[label=\textbf{\alph*})]
    \item $j \le p < k-1$. In this case, after filling the $\{x_i\}$ cups, the $p+1$ fullest cups are just the first $p+1$ cups. So, the only cups that are of different fills when changing from $\{y_i'\}$ to $\{x_i'\}$ are 
\begin{alignat*}{5}
    y_j' &= y_j-1 && \hspace{0.25cm} \text{becomes} \hspace{0.25cm} && x_j' &&= y_j+\eps && \hspace{0.5cm} (\text{since } a_j = 0 \text{ and } x_j = y_j+\eps) \\
    y_{p+1}' &= y_{p+1}+a_{p+1} && \hspace{0.25cm} \text{becomes} \hspace{0.25cm} && x_{p+1}' &&= y_{p+1}+a_{p+1}-1 \\
    y_k' &= y_k+a_k && \hspace{0.25cm} \text{becomes} \hspace{0.25cm} && x_k' &&= y_k+a_k-\eps && \hspace{0.5cm} (\text{since } x_k = y_k-\eps)
\end{alignat*}
    Note that $y_j'+y_{p+1}'+y_k' = x_j'+x_{p+1}'+x_k'$, and since $y_j \ge y_{p+1}+a_{p+1} \ge y_k+a_k$, we have that $\max(y_j', y_{p+1}', y_k') \le y_j \le x_j' \le \max(x_j', x_{p+1}', x_k')$, and $\min(y_j', y_{p+1}', y_k') = \min(y_j-1, y_k+a_k) \ge \min(y_{p+1}+a_{p+1}-1, y_k+a_k-\eps) \ge \min(x_j', x_{p+1}', x_k').$ Therefore, $\{x_j', x_{p+1}', x_k'\}$ majorizes $\{y_j', y_{p+1}', y_k'\}$, so by Proposition \ref{prop:ez2}, $\{x_i'\}$ majorizes $\{y_i'\}$.
    
    \item $p = k-1$ and either $y_k+a_k-\eps \ge y_{k+1}+a_{k+1}$ or $k = n$. In this case, after filling the $\{x_i\}$ cups, the $p+1$ fullest cups are still the first $p+1$ cups. So, the only cups that are of different fills when changing from $\{y_i'\}$ to $\{x_i'\}$ are
\begin{alignat*}{5}
    y_j' &= y_j-1 && \hspace{0.25cm} \text{becomes} \hspace{0.25cm} && x_j' &&= y_j+\eps &&\hspace{0.5cm} (\text{since } a_j = 0) \\
    y_k' &= y_k+a_k && \hspace{0.25cm} \text{becomes} \hspace{0.25cm} && x_k' &&= y_k+a_k-\eps-1 &&\hspace{0.5cm} (\text{since } k = p+1).
\end{alignat*}
    We have that $y_j'+y_k' = x_j'+x_k'$, but $x_j' \ge y_j = y_j + a_j \ge \max(y_j', y_k + a_k) = \max(y_j', y_k'),$ so $\{x_j', x_k'\}$ majorizes $\{y_j', y_k'\}.$ Therefore, by Proposition \ref{prop:ez2}, $\{x_i'\}$ majorizes $\{y_i'\}$.

    \item $p = k-1$ and $y_k+a_k-\eps < y_{k+1}+a_{k+1}$. In this case, after filling the $\{x_i\}$ cups, the $p+1$ fullest cups are the first $p$ cups and the $(k+1)^{\text{st}}$ cup. So, the only cups that are of different fills when changing from $\{y_i'\}$ to $\{x_i'\}$ are
\begin{alignat*}{5}
    y_j' &= y_j-1 && \hspace{0.25cm} \text{becomes} \hspace{0.25cm} && x_j' &&= y_j+\eps &&\hspace{0.5cm} (\text{since } a_j = 0) \\
    y_k' &= y_k+a_k && \hspace{0.25cm} \text{becomes} \hspace{0.25cm} && x_k' &&= y_k+a_k-\eps && \\
    y_{k+1}' &= y_{k+1}+a_{k+1} && \hspace{0.25cm} \text{becomes} \hspace{0.25cm} && x_{k+1}' &&= y_{k+1}+a_{k+1}-1 &&
\end{alignat*}
    Note that $y_j'+y_k'+y_{k+1}' = x_j'+x_k'+x_{k+1}'$. Also, $\max(y_j', y_k', y_{k+1}') \le y_j +a_j = y_j \le x_j' \le \max(x_j', x_k', x_{k+1}')$, and $\min(y_j', y_k', y_{k+1}') \ge \min(y_j-1, y_{k+1}+a_{k+1}) \ge x_{k+1}' \ge \min(x_j', x_k', x_{k+1}')$. Therefore, $\{x_j', x_k', x_{k+1}'\}$ majorizes $\{y_j', y_k', y_{k+1}'\}$, so by Proposition \ref{prop:ez2}, $\{x_i'\}$ majorizes $\{y_i'\}$.

    \item $p \ge k$ and either $y_k+a_k-\eps \ge y_{p+2}+a_{p+2}$ or $p = n-1$. In this case, after filling the $\{x_i\}$ cups, the $p+1$ fullest cups are the first $p+1$ cups. So, the only cups that are of different fills when changing from $\{y_i'\}$ to $\{x_i'\}$ are
\begin{alignat*}{5}
    y_j' &= y_j-1 && \hspace{0.25cm} \text{becomes} \hspace{0.25cm} && x_j' &&= y_j+\eps &&\hspace{0.5cm} (\text{since } a_j = 0) \\
    y_k' &= y_k+a_k-1 && \hspace{0.25cm} \text{becomes} \hspace{0.25cm} && x_k' &&= y_k+a_k-1-\eps && \\
    y_{p+1}' &= y_{p+1}+a_{p+1} && \hspace{0.25cm} \text{becomes} \hspace{0.25cm} && x_{p+1}' &&= y_{p+1}+a_{p+1}-1 &&
\end{alignat*}

    Note that $y_j'+y_k'+y_{p+1}' = x_j'+x_k'+x_{p+1}'.$ Also, since $y_j \ge y_k+a_k \ge y_{p+1}+a_{p+1},$ we have that $\max(x_j', x_k', x_{p+1}') = y_j+\eps > \max(y_j', y_k', y_{p+1}')$. Also, since $x_k' \le y_k' \le y_j'$ and $x_{p+1}' \le y_{p+1}'$, we have that $\min(x_j', x_k', x_{p+1}') \le \min(y_j', y_k', y_{p+1}')$. Therefore, $\{x_j', x_k', x_{p+1}'\}$ majorizes $\{y_j', y_k', y_{p+1}'\}$, so by Proposition \ref{prop:ez2}, $\{x_i'\}$ majorizes $\{y_i'\}$.

    \item $p \ge k$ and $y_k+a_k-\eps < y_{p+2}+a_{p+2}$. In this case, after filling the $\{x_i\}$ cups, the $p+1$ fullest cups are the first $p+2$ cups, except the $k^{\text{th}}$ cup. So, the only cups that are of different fills when changing from $\{y_i'\}$ to $\{x_i'\}$ are
\begin{alignat*}{5}
    y_j' &= y_j-1 && \hspace{0.25cm} \text{becomes} \hspace{0.25cm} && x_j' &&= y_j+\eps &&\hspace{0.5cm} (\text{since } a_j = 0) \\
    y_k' &= y_k+a_k-1 && \hspace{0.25cm} \text{becomes} \hspace{0.25cm} && x_k' &&= y_k+a_k-\eps && \\
    y_{p+1}' &= y_{p+1}+a_{p+1} && \hspace{0.25cm} \text{becomes} \hspace{0.25cm} && x_{p+1}' &&= y_{p+1}+a_{p+1}-1 && \\
    y_{p+2}' &= y_{p+2}+a_{p+2} && \hspace{0.25cm} \text{becomes} \hspace{0.25cm} && x_{p+2}' &&= y_{p+2}+a_{p+2}-1 &&
\end{alignat*}

    We have that $y_j'+y_k'+y_{p+1}'+y_{p+2}' = x_j'+x_k'+x_{p+1}'+x_{p+2}'$. Also, since $y_j \ge y_k+a_k \ge y_{p+1}+a_{p+1} \ge y_{p+2}+a_{p+2}$, we have that $x_j' \ge x_k' \ge x_{p+1}' \ge x_{p+2}'$. However, it is simple to see that $x_j' \ge \max(y_j', y_k', y_{p+1}', y_{p+2}')$, that $x_{p+2}' \le \min(y_j', y_k', y_{p+1}', y_{p+2}')$, and that $x_j'+x_k' = y_j+(y_k+a_k)$ which is greater than or equal to the sum of any two of $y_j', y_k', y_{p+1}', y_{p+2}'$. Therefore, $\{x_j', x_k', x_{p+1}', x_{p+2}'\}$ majorizes $\{y_j', y_k', y_{p+1}', y_{p+2}'\},$ so by Proposition \ref{prop:ez2}, $\{x_i'\}$ majorizes $\{y_i'\}$.
    
\end{enumerate}
    \textbf{Case 4:} $a_k = 1$.
We show that this case can be derived from Cases 2 and 3. The basic idea is that we proceed in three steps: first we make use of symmetry within the negative-fill variable-processor cup game in order to show that if we began with fills $\{-y_1, -y_2, \ldots, -y_n\}$, it would be possible to perform a round in the game that results in states $\{-y_1', -y_2', \ldots, -y_n'\}$; we then apply Cases 2 and 3 to the states $\{-x_1, -x_2, \ldots, -x_n\}$ and $\{-y_1, -y_2, \ldots, -y_n\}$ in order to deduce that it is possible to transform $\{-x_1, -x_2, \ldots, -x_n\}$ in a single round into some $\{-x_1', -x_2', \ldots, -x_n'\}$ that majorizes $\{-y_1', -y_2', \ldots, -y_n'\}$; finally, we again make use of symmetry to argue that it is also possible to transform $\{x_1, x_2, \ldots, x_n\}$ into $\{x_1', x_2', \ldots, x_n'\}$ in a single round, and that $\{x_1', x_2', \ldots, x_n'\}$ majorizes $\{y_1', y_2', \ldots, y_n'\}$. 
    
    First, note that by the definition of majorization, $\{x_1, \dots, x_n\}$ majorizes $\{y_1, \dots, y_n\}$ if and only if $\{-x_n, \dots, -x_1\}$ majorizes $\{-y_n, \dots, -y_n\}$.
    Therefore, since $\{x_i\}$ and $\{y_i\}$ are sorted, we have that $\{-x_n, \dots, -x_1\}$ majorizes $\{-y_n, \dots, -y_n\}$ with $-x_k = -y_k+\eps$ and $-x_j = -y_j-\eps$.
    
    Now, instead of adding $a_i$ to each $y_i$ (where $a_k = 1$), we consider what happens if we add $1 - a_{i}$ to $-y_i$ for each $i$. This results in fills $1-(y_n+a_n) \ge 1-(y_{n-1}+a_{n-1}) \ge \cdots \ge 1-(y_1+a_1).$ Then, since $\sum_{i = 1}^{n} (1-a_i) = n-p$, the greedy emptier subtracts $1$ from the $n-p$ fullest cups to get the sequence $\{-(y_n-a_n), \dots, -(y_{p+1}-a_{p+1}), 1-(y_p+a_p), \dots, 1-(y_1+a_1)\}$. This sequence precisely equals $\{-y_n', -y_{n-1}', \dots, -y_1'\}$, where $y_i' = y_i+a_i-1$ for $1 \le i \le p$ and $y_i' = y_i+a_i$ otherwise. (Note that $\{y_1', \dots, y_n'\}$ is not necessarily sorted.) In other words, it is possible to perform a round on the $\{-y_i\}$s that brings us directly to the $\{-y_i'\}$s.
    
    Since $a_k = 1$, we have that $1-a_k = 0$, and moreover, $-y_k$ comes before $-y_j$ in the non-increasing sorted order. Thus, we may use either Case $2$ or Case $3$ on the sequences $\{-x_n, \dots, -x_1\}$ and $\{-y_n, \dots, -y_1\}$ to obtain that there exists a sequence of fills $1-b_{n}, \dots, 1-b_1$ for $0 \le b_1, \dots, b_n \le 1$, such that $\sum_{i = 1}^{n} (1-b_i) = n-p'$ for some $p'$ possibly different from $p$, and such that when $1-b_i$ is added to $-x_i$ for all $i$ and then $1$ is subtracted from the fullest $n-p+1$ cups, the resulting sequence majorizes $\{-y_n', \dots, -y_1'\}$. Notice, however, that this resulting sequence is just $\{1-(b_n+x_n), \dots, 1-(b_1+x_1)\}$ but with the largest $n-p'$ values lowered by $1$; or equivalently, it is $\{-(b_n+x_n), \dots, -(b_1+x_1)\}$ but with the smallest $p'$ values increased by $1$. So, returning to the original sequence $\{x_1, \dots, x_n\}$, we have shown that there exist  $0 \le b_1, \dots, b_n \le 1$,  where $\sum_{i = 1}^{n} b_i = p'$, such that, if the filler added $b_i$ to each $x_i$, and the emptier subtracted $1$ from the largest $p'$ values, we would get a sequence $\{x_1', \dots, x_n'\}$, not necessarily sorted, such $\{-x_n', \dots, -x_1'\}$ majorizes $\{-y_n', \dots, -y_1'\}$ after sorting. Since majorization is preserved by negation, it follows that $\{x_1', \dots, x_n'\}$ majorizes $\{y_1', \dots, y_n'\}$, as desired.
\end{proof}






We next prove a similar result to Lemma \ref{lem:majorization}, though for domination and specifically for a time step of the stone-variant cup game.

\begin{lemma} \label{lem:domination}
    Suppose that $\{x_i\}$ and $\{y_i\}$ are sorted and integer valued, and $\{x_i\}$ dominates $\{y_i\}$. Suppose we perform a round of the stone-variant cup game on $\{y_i\}$ to get some sequence $\{y_i'\}.$ Then, one can perform a round of the stone-variant cup game on $\{x_i\}$ to get a sequence $\{x_i'\}$ that dominates $\{y_i'\}$.
\end{lemma}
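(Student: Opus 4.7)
The plan is to prove the lemma by induction on the total excess $\Delta := \sum_i (x_i - y_i) \ge 0$. The base case $\Delta = 0$ forces $\{x_i\} = \{y_i\}$, so we simply mimic the $\{y_i\}$-move verbatim and take $\{x_i'\} = \{y_i'\}$. For the inductive step, I would first reduce to a single-unit-lift subclaim: if $\{z_i^+\}$ is obtained from a sorted integer sequence $\{z_i\}$ by raising exactly one entry by $1$, then any valid stone-move on $\{z_i\}$ producing $\{z_i'\}$ can be matched by a stone-move on $\{z_i^+\}$ producing some $\{z_i^{+\prime}\}$ that dominates $\{z_i'\}$. Granted the subclaim, the lemma follows by decomposing the dominance $\{y_i\} \le \{x_i\}$ into a chain $\{y_i\} = \{w_i^{(0)}\} \le \{w_i^{(1)}\} \le \cdots \le \{w_i^{(\Delta)}\} = \{x_i\}$ of $\Delta$ unit lifts (with freedom in the order of lifts), applying the subclaim once per link, and chaining the dominations via transitivity.

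To prove the subclaim, I would translate everything into tail-counts $T_Z(t) := |\{i : z_i \ge t\}|$: domination of $\{z_i^+\}$ over $\{z_i\}$ is equivalent to $T_{z^+}(t) \ge T_{z}(t)$ for every $t$, and a stone-move at $(k, q)$ changes the tail-count only by $-q$ at $t = k$ and by $+q$ at $t = k+1$. Let $v$ be the value lifted and let $(k, q)$ be the move parameters on $\{z_i\}$. If $v \ne k$, the $k$-multiplicity in $\{z_i^+\}$ equals that in $\{z_i\}$, so the identical move $(k, q)$ applies and the resulting $\{z_i^{+\prime}\}$ differs from $\{z_i'\}$ only by one copy of $v$ being replaced by $v+1$; Proposition~\ref{prop:super_dom_ez} preserves domination after re-sorting. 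If $v = k$ and $\{z_i\}$ has strictly more than $2q$ copies of $k$, the same move still goes through on $\{z_i^+\}$. The remaining subcase is $v = k$ with exactly $2q$ copies of $k$ in $\{z_i\}$, so that $\{z_i^+\}$ has only $2q - 1$ copies of $k$ but one additional $k+1$; for $q \ge 2$ I would apply the move $(k, q-1)$ on $\{z_i^+\}$ and verify directly that the tail-count difference $T_{z^{+\prime}}(t) - T_{z'}(t)$ is nonnegative at every $t$, which collapses to a single comparison at $t \in \{k, k+1\}$ using the slack $E(t) := T_{z^+}(t) - T_{z}(t) \ge 0$.

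The main obstacle is precisely the degenerate subcase $v = k$, $q = 1$, with exactly two copies of $k$ in $\{z_i\}$: here the unit lift destroys the unique plateau used by the $\{z_i\}$-move, so no naive replay is available on $\{z_i^+\}$. The remedy I have in mind is twofold. First, when the extra $k+1$ in $\{z_i^+\}$ coincides with a pre-existing $k+1$ in $\{z_i\}$, a plateau of size at least $2$ appears at $k+1$, and the move $(k+1, 1)$ on $\{z_i^+\}$ gives the required domination (checked by a tail-count computation). Second, when no such plateau appears, I would use the flexibility in the outer induction to choose a different unit lift to perform first, deferring the destructive lift until after the matched move; a short argument shows that such a reordering is always possible unless $\{z_i\}$ and $\{x_i\}$ differ only at the relevant pair of $k$-entries, in which case a direct verification shows that $\{x_i\}$ already dominates $\{y_i'\}$ and the required move on $\{x_i\}$ can be produced explicitly.
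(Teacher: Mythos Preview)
Your inductive approach via unit lifts is a legitimate alternative to the paper's direct construction, but the degenerate subcase contains a genuine error. Your first remedy --- when $v=k$, $q=1$, there are exactly two copies of $k$ in $\{z_i\}$, and a pre-existing $k+1$ exists --- proposes the move $(k+1,1)$ on $\{z_i^+\}$. This fails: take $\{z_i\} = (3,2,2)$ with move $(2,1)$, so $\{z_i'\} = (3,3,1)$ and $\{z_i^+\} = (3,3,2)$; applying $(3,1)$ to $\{z_i^+\}$ yields $(4,2,2)$, which does \emph{not} dominate $(3,3,1)$ since $2<3$ at the second coordinate. In tail-count language, your move produces $T_{z^{+\prime}}(k+1) = T_{z'}(k+1) - 1$. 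Your second remedy (reorder the lifts in the outer chain) then has to carry the full weight, but as written it is circular: it appeals to the outer induction, whose inductive step was supposed to follow from the subclaim. And the final fallback, that a move on $\{x_i\}$ ``can be produced explicitly,'' is also false in general: if $\{y_i\} = (5,5,1)$, $\{x_i\} = (6,5,1)$, and the $\{y_i\}$-move is $(5,1)$, then $\{x_i\}$ has no repeated value and admits no stone move whatsoever.

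The clean fix --- which the paper's own proof also relies on --- is to allow the null move on $\{x_i\}$. In every instance of your degenerate subcase, $\{z_i^+\}$ already dominates $\{z_i'\}$: the unit lift raises the tail count at level $k+1$ by one, exactly matching the stone move there, and strictly beats it at level $k$. With that amendment your chaining goes through (once a null move appears, all later links can also be null). The paper sidesteps the whole induction by a direct one-shot construction: if the $\{y_i\}$-move acts at value $k$ on the index block $[p{:}q]$ with up-block $[p{:}p']$ and down-block $[q'{:}q]$, let $r$ be the first index with $x_r \le k$; domination forces $x_i = k$ on $[r{:}q]$, so one simply replays the move on $\{x_i\}$ with the up-block shrunk to $[r{:}p']$ and the down-block shrunk correspondingly, or does nothing when $r > p'$. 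This is shorter and never encounters the degenerate case.
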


\begin{proof}
    Suppose that $y_1 \ge y_2 \ge \cdots \ge y_n,$ where for some integer $k$, $[p:q] = \{p, p+1, \dots, q\}$ is the set of indices $i$ such that $y_i = k$. Assume that for the stone-variant cup game on $\{y_i\}$, we increase $y_i$ for $i \in [p:p']$ by $1$, and decrease $y_i$ for $i \in [q':q]$ by $1$, where $p' < q'$ and $p'-p = q-q' \ge 0.$ We can always make the assumption WLOG that any given time step of the stone-variant cup game proceeds in this fashion. In addition, note that the resulting sequence $\{y_i'\}$ is already sorted.
    
    Now, we have that $x_1 \ge x_2 \ge \cdots \ge x_n,$ and $x_i \ge y_i$ for all $i$. Now, let $r$ be the first index such that $x_r \le k$ (if $x_n > k,$ then set $r = n+1$). Now, if $r > p',$ then $x_i \ge k+1 \ge y_i'$ for all $i \in [p:p']$, and $x_i \ge y_i \ge y_i'$ for all $i \not\in [p:p']$, so we can do nothing to the sequence $\{x_i\}$ and we get $\{x_i'\}$ that dominates $\{y_i'\}$. Alternatively, we have that $p \le r \le p'$. By domination, we must have that $x_i = k$ for all $i \in [r:q]$. So, if we increase $x_i$ for all $i \in [r:p']$ by $1$ and decrease $x_i$ for all $i \in [s:q]$ by $1$, where $s$ is chosen so that $p'-r = q-s,$ then we claim $\{x_i'\}$ dominates $\{y_i'\}$. First, since $p'-p = q-q'$ and $p' \ge r \ge p,$ this means that $q \ge s \ge q',$ so $[r:p']$ and $[s:q]$ are disjoint nonempty sets. Next, note that $x_i = x_i' \ge y_i = y_i'$ for all $i \not\in [p:q]$. Next, for $p \le i < r,$ we have $x_i = x_i' \ge k+1 = y_i'$, and for $r \le i \le p',$ we have $x_i' = x_i+1 = k+1 = y_i'$. For $p' < i < s,$ we have $x_i' = k \ge y_i'$, and for $s \le i \le q,$ we have $x_i' = x_i-1 = k-1 = y_i'$. Therefore, $\{x_i'\}$ dominates $\{y_i'\}$.
\end{proof}

The final result in this section roughly tells us that to ensure majorization, a stone-variant-like time step is often better than anything the filler can hope to achieve in the standard variable-processor cup game.

\begin{lemma} \label{lem:toy_beats_standard}
    Suppose that $x_1 \ge x_2 \ge \cdots \ge x_n$ are all even integers, and let $\{x_i'\}$ be the fills after some filling and greedy emptying procedure. Then, there exists some integer $k \in \{x_1, \dots, x_n\}$ such that we can increase some (possibly $0$) of the $x_i$'s that equal $k$ to $k+2$, decrease an equal number of the $x_i$'s that equal $k$ to $k-2$, and the new sequence majorizes the sorted order of $\{x_i'\}$.
\end{lemma}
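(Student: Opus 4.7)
The plan is to reduce each filling-plus-greedy-emptying round on the even-integer state $X$ to a single stone-game move by leveraging the ``rigidity'' of the sort that comes from $x_i$ being spaced by multiples of $2$. By Proposition~\ref{prop:ez}, I may assume the filler produces $y_i := x_i + a_i$ with $y_1 \ge y_2 \ge \cdots \ge y_n$ already sorted, so that greedy emptying (in the negative-fill convention) produces $x_i' = y_i - 1$ for $i \le p$ and $x_i' = y_i$ for $i > p$. I would split on whether there is a ``gap'' or a ``tie'' at the boundary between indices $p$ and $p+1$.

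\textbf{Strict-gap case} ($p = n$ or $x_p > x_{p+1}$): because the $x_i$ are even integers and $a_i \in [0,1]$, the gap $x_p - x_{p+1} \ge 2$ forces $y_p - 1 \ge x_p - 1 \ge x_{p+1} + 1 \ge y_{p+1}$, so the post-empty sequence is already sorted in decreasing order without any rearrangement. Here I would take $q = 0$, i.e.\ $X^* = X$, and verify majorization directly: for $m \le p$, $\sum_{i=1}^m (x_i - x_i') = m - \sum_{i=1}^m a_i \ge 0$, and for $m > p$, $\sum_{i=1}^m (x_i - x_i') = p - \sum_{i=1}^m a_i \ge 0$, both by virtue of $\sum_i a_i = p$ and $a_i \in [0,1]$.

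\textbf{Tie case} ($x_p = x_{p+1} = k$): let $[r,s]$ denote the block of indices $i$ with $x_i = k$, set $u := p - r + 1$ and $v := s - p$ (so $u, v \ge 1$ and $u+v = s-r+1$), and define $q := \min(u,v)$. I would apply the stone-style move at level $k$ with this $q$, leaving cups outside $[r,s]$ untouched and replacing the block by $q$ copies of $k+2$, $q$ copies of $k-2$, and $u+v-2q$ copies of $k$. Because $y$ is sorted and $x_i$'s are even, the block cups in the sorted $\tilde{X'}$ all lie in $[k-1, k+1]$, the cups above the block lie at values $\ge k+1$, and the cups below at values $\le k-1$, so the three segments of $\tilde X'$ do not interleave. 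The partial-sum check for $m$ outside $[r-1, s]$ mirrors the strict-gap calculation using $\sum_{i=1}^m a_i \le \min(m,p)$, while for $m$ inside the block the stone move's $\pm 2$ spread majorizes any configuration supported in $[k-1, k+1]$ with the same block-sum.

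\textbf{The main obstacle} is the partial-sum bookkeeping in the tie case: one must verify that the $u+v-2q$ residual cups at level $k$ in $X^*$ still dominate the corresponding mid-rank block entries of $\tilde X'$ (which can be either $k + b_i$ or $k + b_j - 1$ in interleaved fashion), and that the choice $q = \min(u,v)$ simultaneously absorbs the ``upward leakage'' of $v$ non-emptied block cups and the ``downward leakage'' of $u$ emptied block cups. This relies crucially on the even-integer hypothesis, which forbids any cup outside the block from landing in the interval $(k-1, k+1)$ and thus confines all the interesting action to the block itself.
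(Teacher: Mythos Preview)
Your overall strategy is right, and the strict-gap case is handled correctly. The tie case, however, has a real gap. Your stated justification there rests on two assertions that are both false as written.

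First, the block sums are \emph{not} equal. With your notation, the block part of $X^*$ sums to $(u+v)k$, while the block part of the sorted $\tilde X'$ sums to $(u+v)k + \sum_{i=r}^{s} a_i - u$; these agree only when $\sum_{i<r} a_i + \sum_{i>s} a_i = r-1$, which is not forced. Second, even granting equal block sums, it is \emph{not} true that the $\pm 2$ spread with $q=\min(u,v)$ majorizes every configuration supported in $[k-1,k+1]$: with block size $6$ and $q=1$, the sequence $(k+2,k,k,k,k,k-2)$ fails to majorize $(k+1,k+1,k+1,k-1,k-1,k-1)$ at the third partial sum. What actually saves your choice of $q$ is the additional structure you have not invoked: the sorted block values split as exactly $v$ entries in $[k,k+1]$ followed by $u$ entries in $[k-1,k]$, and the deficits this creates in the middle are covered by the above-block slack $(r-1)-\sum_{i<r}a_i$ together with the global constraint $\sum_i a_i = p$. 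That argument can be carried out, but it is not the one you sketched.

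The paper avoids this difficulty by choosing a larger $q$, namely $\lfloor (s-r+1)/2\rfloor$ (half the block size), independent of where $p$ sits inside the block. With that choice the check becomes essentially termwise: for $m$ in the first half of the block every $x_i'' = k+2 \ge k+1 \ge z_i$, and for $m$ in the second half the symmetric argument from the bottom applies, so no appeal to $\sum a_i = p$ or to the $u$/$v$ split is needed inside the block. If you want to keep your smaller $q=\min(u,v)$, you will need to replace the ``same block-sum'' line with the slack-plus-structure argument above; otherwise, switching to the paper's $q$ makes the bookkeeping you flagged as the main obstacle disappear.
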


\begin{proof}
    Consider any filling procedure with fills $a_1, \dots, a_n$, followed by greedy emptying, on $\{x_i\}$. Let $k = x_p$.
    First, note that after filling and greedy emptying, $x_i' = x_i+a_i-1$ for $i \le p$ and $x_i' = x_i+a_i$ for $i > p$. Therefore, $|x_i'-x_i| \le 1$ for all $i$. In addition, if $x_i < k,$ then $i > p$ so $x_i' \ge x_i$, and if $x_i > k,$ then $i < p$ so $x_i' \le x_i$.
    
    Now, suppose that the set of $i$ such that $x_i = k$ is $[q:r] = \{q, q+1, \dots, r\}$ (note that $q \le p \le r$). Now, let $q' = \lfloor \frac{q+r-1}{2} \rfloor$ and $r' = \lceil \frac{q+r+1}{2}\rceil.$ Note that $[q:q']$ and $[r':r]$ have the same size. Also, if $[q:r]$ has even size, then $q \not\equiv r \bmod 2,$ so $r' = q'+1$, and if $[q:r]$ has odd size, then $r' = q'+2$.
    
    Now, let $x_i''$ be the sequence where $x_i'' = x_i$ if $x_i \not\equiv k$, and otherwise, $x_i'' = x_i+2$ for $q \le i \le q'$, $x_i'' = x_i-2$ for $r' \le i \le r,$ and $x_i'' = x_i$ for $q' < i < r'$. Note that $\{x_i''\}$ is sorted. In addition, since $\{x_i\}$ are all even integers, this means that even if $\{x_i'\}$ is not sorted, we have that $x_1', \dots, x_{q-1}' \ge x_q', \dots, x_r' \ge x_{r+1}', \dots, x_n'$.  Now, let $z_i$ be the sorted (in non-increasing order) of $x_i'$. Since $x_i \ge x_i'$ for all $1 \le i \le q-1,$ we have that $\{x_i\}_{i = 1}^{q-1}$ dominates $\{x_i'\}_{i = 1}^{q-1}$ by Proposition \ref{prop:super_dom_ez}, so $x_i \ge z_i$ for all $1 \le i \le q-1.$ Likewise, $\{x_i'\}_{r+1}^{n}$ dominates $\{x_i\}_{r+1}^{n}$, so $x_i \le z_i$ for all $r+1 \le i \le n$. Finally, we have that $z_i \in [k-1, k+1] = [x_i-1, x_i+1]$ for all $q \le i \le r.$
    
    We wish to show that $\{x_i''\}$ majorizes $\{z_i\}$. To do so, first we note that since $|[q:q']| = |[r':r]|,$ $\sum_{i = 1}^{n} x_i'' = \sum_{i = 1}^{n} x_i$, and we already know that $\sum_{i = 1}^{n} x_i = \sum_{i = 1}^{n} x_i'.$ Therefore, $\sum_{i = 1}^{n} x_i'' = \sum_{i = 1}^{n} z_i$. 
    
    It suffices to prove that for all $1 \le s \le n-1,$ we have $\sum_{i = 1}^{s} x_i'' \ge \sum_{i = 1}^{s} z_i$.
    If $s < q,$ then $$\sum_{i = 1}^{s} x_i'' = \sum_{i = 1}^{s} x_i \ge \sum_{i = 1}^{s} z_i.$$
    If $s \ge r,$ then $$\sum_{i = s+1}^{n} x_i'' = \sum_{i = s+1}^{n} x_i \le \sum_{i = s+1}^{n} z_i,$$ so $\sum_{i = 1}^{s} x_i'' \ge \sum_{i = 1}^{s} z_i$. 
    If $q \le s \le q',$ then $$\sum_{i = 1}^{s} x_i'' = \sum_{i = 1}^{q-1} x_i + \sum_{i = q}^{s} (x_i+2) \ge \sum_{i = 1}^{q-1} z_i + \sum_{i = q}^{s} z_i = \sum_{i = 1}^{s} z_i.$$ 
    If $r'-1 \le s \le r-1,$ then $$\sum_{i = s+1}^{n} x_i'' = \sum_{i = s+1}^{r} x_i'' + \sum_{i = r+1}^{n} x_i'' = \sum_{i = s+1}^{r} (x_i-2) +  \sum_{i = r+1}^{n} x_i \le \sum_{i = s+1}^{r} z_i \sum_{i = r+1}^{n} z_i = \sum_{i = s+1}^{n} z_i,$$ so $\sum_{i = 1}^{s} x_i'' \ge \sum_{i = 1}^{s} x_i'.$
    Since either $r'-1 = q'$ or $r'-1 = q'+1,$ we have covered all cases of $s$. Therefore, $\{x_i''\}$ majorizes $\{z_i\}$.
\end{proof}

\subsection{Finishing the Proof} \label{subsec:upper_finish}

In this section, we prove the main upper bound by combining our results on the $\ell$-variant cup game from Subsection \ref{subsec:l_variant} with our results on majorization and domination from Subsection \ref{subsec:majorization} that allow us to relate the optimal backlog in the $\ell$-variant cup game to the optimal backlog in the negative-fill variable-processor cup game. 

While it is well-known that the standard game is harder than the negative-fill game for the filler, it turns out that against a greedy emptier, the converse is also (asymptotically) true. Namely, we have the following lemma, the proof of which is deferred to Appendix \ref{sec:appendix_less_ez}:

\begin{lemma}
    For any $t \ge 1$, suppose the filler can guarantee that the backlog exceeds some $b \ge 0$ after $t$ rounds of the standard game, where the emptier is promised to be greedy. Then, the filler can guarantee that the backlog exceeds $b/2$ after $t$ rounds of the negative-fill game, where again the emptier is promised to be greedy.
\end{lemma}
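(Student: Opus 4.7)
The plan is to take $F' := F$ (with a possible ``do-nothing'' tail), couple the negative-fill game with a shadow standard game, and argue that the backlog in the negative-fill game must exceed $b/2$ at or before time $t$. By Theorem \ref{thm:greedyopt} applied to both games, the hypothesis ``$F$ forces backlog $> b$ against greedy in standard'' upgrades to ``$F$ forces backlog $> b$ against \emph{every} standard-game emptier,'' and it suffices to show the conclusion against every negative-fill emptier. So fix an arbitrary negative-fill emptier $E$, run $F$ in both games simultaneously, and let the shadow standard emptier empty exactly the cups that $E$ empties in the negative-fill game. Induction on the round number then gives $X_i^{(s)} \ge Y_i^{(s)}$ pointwise for every $s$, with the gap $W_i^{(s)} := X_i^{(s)} - Y_i^{(s)} \ge 0$ non-decreasing in $s$ (it grows precisely when the standard game clamps at zero). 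Since $F$ forces backlog $> b$ in the shadow game, pick $i^*$ with $X_{i^*}^{(t)} > b$.

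If the waste $W_{i^*}^{(t)} \le b/2$ then $Y_{i^*}^{(t)} > b/2$ and we are done immediately. The remaining case, $W_{i^*}^{(t)} > b/2$, is the main obstacle: it means cup $i^*$ underwent many clamping events, each corresponding to a round where $E$ emptied $i^*$ when $X_{i^*} < 1$ and hence drove $Y_{i^*}$ sharply more negative. Between two consecutive such events the filler must rebuild $X_{i^*}$ from near $0$, and after the last event the filler must raise $X_{i^*}$ all the way to $> b$, putting rigid structural constraints on the timing of the waste events. Combining these with the conservation law $\sum_j Y_j^{(s)} = 0$ at every $s$, I would argue that at some intermediate round $s^* \le t$, some \emph{other} cup must already carry negative-fill mass exceeding $b/2$: each waste event at $i^*$ subtracts $1$ from $\sum_j Y_j^-$, which by conservation must be balanced by positive fill accumulating on some cup, and tracking where this positive fill concentrates identifies $s^*$. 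The filler's actual strategy $F'$ switches from $F$ to the do-nothing move of Proposition \ref{prop:limitedfiller} (fill each of the $p$ fullest cups by $1$ so greedy removes it) as soon as some cup's fill exceeds $b/2$, preserving that backlog through time $t$.

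The hard part is making the intermediate-round argument quantitative. I expect the cleanest route is to track a potential function on the negative-fill state such as $\Phi^{(s)} = \max_j Y_j^{(s)}$ or $\sum_j (Y_j^{(s)})_+$, and relate its evolution in the coupled pair of games to the waste accumulation $\sum_j W_j^{(s)} = \sum_j X_j^{(s)}$. Assuming for contradiction that $\max_j Y_j^{(s)} \le b/2$ for all $s \le t$, one gets a uniform bound on $\sum_j (Y_j^{(s)})_+$, hence on the negative part as well, and thus on the rate at which $W_{i^*}$ can accumulate; pushing this through should force a contradiction with $X_{i^*}^{(t)} > b$ and $W_{i^*}^{(t)} > b/2$, yielding the desired $b/2$ backlog in the negative-fill game at some $s^* \le t$.
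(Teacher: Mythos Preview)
Your coupling setup is sound through the case split, but the ``hard case'' $W_{i^*}^{(t)} > b/2$ is not actually proved: you offer only a heuristic about waste events and conservation, and the proposed potential-function route (``pushing this through should force a contradiction'') is left unspecified. In fact the conservation argument as sketched does not work: $\sum_j Y_j^{(s)} = 0$ only tells you the \emph{total} positive fill balances the total negative fill, so a very negative $Y_{i^*}$ could be offset by many cups each carrying $O(b/n)$ positive fill, none exceeding $b/2$. Your claim that ``each waste event at $i^*$ subtracts $1$ from $\sum_j Y_j^-$'' is also not right --- waste events are a feature of the \emph{standard} game's clamping and do not alter $\sum_j Y_j$ at all.

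The paper's proof avoids this obstacle entirely with two ideas you are missing. First, instead of coupling from the all-zero start, it runs the standard game from the shifted start state where every cup has fill $b/2$; the filler still forces backlog $>b$, and now either (a) no cup ever hits $0$, in which case the standard and negative-fill dynamics coincide and the unshifted negative-fill game reaches backlog $>b/2$, or (b) some cup hits $0$, in which case the unshifted negative-fill game has some cup reaching fill $\le -b/2$. Second, case (b) is converted to backlog $\ge b/2$ via a symmetry property of the negative-fill game against greedy (Proposition~\ref{prop:binks:}): replacing fills $a_j$ by $1-a_j$ negates the entire trajectory, so ``some cup $\le -b/2$'' and ``some cup $\ge b/2$'' are equivalent for the filler. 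This symmetry is the key structural fact doing the work in the hard case, and your proposal does not invoke it.
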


Since the emptier is assumed to be greedy, from now on we may assume that the filler and emptier are competing in the negative-fill cup game. Thus, to prove Theorem \ref{thm:main_upper}, it suffices to instead prove the following theorem about the negative-fill game. 
\begin{theorem} \label{thm:main_rephrased}
    Let $1 \le k \le O(n).$ Then, assuming the emptier follows a greedy emptying strategy in the negative-fill variable-processor cup game, the filler needs $\Omega\left(k + \frac{k^3}{\log^2 (n/k)}\right)$ rounds to achieve backlog $k$.
\end{theorem}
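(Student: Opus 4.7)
The plan is to prove the two lower bounds $\Omega(k)$ and $\Omega(k^3/\log^2(n/k))$ separately and combine them. The $\Omega(k)$ bound is immediate: in each round, the filler adds at most $1$ to any single cup and the greedy emptier only subtracts water, so each cup's fill grows by at most $1$ per round, which means starting from all zeros the filler needs at least $k$ rounds to reach backlog $k$.

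For the main bound $\Omega(k^3/\log^2(n/k))$, I would follow a chain of reductions: negative-fill cup game $\to$ stone-variant cup game $\to$ $\ell$-variant cup game, and then invoke Theorem \ref{thm:inequality}. The first reduction is the crux. I would prove by induction on $t$ that for any $t$-round negative-fill cup-game play (against greedy) producing state $\{c_i^{(t)}\}$, there is a corresponding stone-variant play of length at most $t$ producing integer state $\{s_i^{(t)}\}$ such that the even-integer sequence $\{2 s_i^{(t)}\}$ majorizes $\{c_i^{(t)}\}$ (both sequences sum to $0$, so majorization is well-defined). In the inductive step, the cup game takes $\{c_i^{(t-1)}\}$ to $\{c_i^{(t)}\}$; Lemma \ref{lem:majorization} applied to $(\{2 s_i^{(t-1)}\},\{c_i^{(t-1)}\})$ yields a cup-game move on $\{2 s_i^{(t-1)}\}$ producing some $\{h_i'\}$ majorizing $\{c_i^{(t)}\}$, and then Lemma \ref{lem:toy_beats_standard} (whose even-integer hypothesis is met by $\{2 s_i^{(t-1)}\}$) replaces this move by a stone-variant-style pair move at some even height $k$ (changing pairs by $\pm 2$) producing $\{2 s_i^{(t)}\}$ majorizing $\{h_i'\}$, and hence $\{c_i^{(t)}\}$ by transitivity. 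Dividing heights by $2$ recovers a valid $\pm 1$ stone-variant move at the integer height $k/2$. Applied at round $T$: if the cup game has $\max c_i^{(T)} \ge k$, then $\max 2 s_i^{(T)} \ge k$, so the stone-variant has backlog at least $k/2$.

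For the second reduction, I would inductively build an $\ell$-variant play $\{s_i^{\ell,(t)}\}$ dominating the stone-variant play $\{s_i^{(t)}\}$ at every round. Lemma \ref{lem:domination} yields a stone-variant move on $\{s_i^{\ell,(t-1)}\}$ producing a state dominating $\{s_i^{(t)}\}$; if this move would lower stones across a checkpoint $a\ell$, the $\ell$-variant keeps those stones at $a\ell$ instead, which only raises the state and preserves domination (and is a valid $\ell$-variant move). Hence the $\ell$-variant also reaches backlog at least $k/2$ in at most $T$ rounds. Applying Theorem \ref{thm:inequality} with parameter $k' = \lfloor k/8 \rfloor$ (so that $4 k' \le k/2$) and $h = \Theta(\log(n/k + 1))$ gives $T \ge \Omega((k')^3/h^2) = \Omega(k^3/\log^2(n/k))$, provided $k$ exceeds a suitable constant multiple of $\log n$ so that the hypotheses $2 \le h \le k'$ of Theorem \ref{thm:inequality} are satisfied; for smaller $k$, the trivial $\Omega(k)$ bound already dominates $k^3/\log^2(n/k)$, so the combined lower bound $\Omega(k + k^3/\log^2(n/k))$ holds in all regimes.

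The hardest part will be the bookkeeping in the first reduction, and in particular maintaining the invariant that the comparison sequence is always even-integer-valued so that Lemma \ref{lem:toy_beats_standard} applies. Tracking $\{2 s_i^{(t)}\}$ rather than $\{s_i^{(t)}\}$ resolves this cleanly, since a $\pm 1$ stone-variant pair move on $\{s_i\}$ corresponds to a $\pm 2$ move on $\{2 s_i\}$, exactly the type of move Lemma \ref{lem:toy_beats_standard} produces; and because the height $k$ supplied by Lemma \ref{lem:toy_beats_standard} is drawn from the entries of $\{2 s_i^{(t-1)}\}$, it is always even, so the induced move on $\{s_i\}$ occurs at an integer position as required by the stone-variant.
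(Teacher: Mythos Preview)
Your proposal is correct and follows essentially the same route as the paper: the trivial $\Omega(k)$ bound, then the two-step reduction (negative-fill $\to$ stone-variant via Lemmas \ref{lem:majorization} and \ref{lem:toy_beats_standard} tracking $\{2s_i\}$ to maintain even-integer values, then stone-variant $\to$ $\ell$-variant via Lemma \ref{lem:domination} plus the checkpoint observation), followed by applying Theorem \ref{thm:inequality} with parameter $\Theta(k/8)$ in the regime $k \gtrsim \log n$ and falling back on the trivial bound otherwise. The only cosmetic difference is that the paper rounds its parameter down to a multiple of $h$ before invoking Theorem \ref{thm:inequality}, which is unnecessary since that theorem already uses $\ell = \lfloor k/h \rfloor$; your direct choice $k' = \lfloor k/8 \rfloor$ works equally well.
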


Further note that the filler trivially needs at least $k$ rounds to achieve backlog $k$. Thus our task reduces to proving a lower bound of $\Omega\left(\frac{k^3}{\log^2 (n/k)}\right)$ on the number of rounds that the filler needs.

We now perform a series of arguments to establish Theorem \ref{thm:main_rephrased}.

\paragraph{Relating the stone-variant to the negative-fill game.}
Let $\{y^{(0)}\}, \{y^{(1)}\}, \{y^{(2)}\}$ represent sequences of fills where $\{y^{(i)}\}$ is the (sorted) sequence of fills after $i$ rounds of some negative-fill cup game.
In other words, the fills start out as $\{y^{(0)}\} = \{0, 0, \dots, 0\}.$ At time step $i+1$, the filler converts the sequence $\{y^{(i)}\} = \{y_1^{(i)}, \dots, y_n^{(i)}\}$, where $y_1^{(i)} \ge y_2^{(i)} \ge \cdots \ge y_n^{(i)},$ to $\{z^{(i)}\} = \{z_1^{(i)}, \dots, z_n^{(i)}\}$, where $z_1^{(i)} \ge z_2^{(i)} \ge \cdots \ge z_n^{(i)}$, by adding a total of $p_i$ fill, and finally, the emptier removes $1$ from each of the first $p_i$ cups and sorts the cups to obtain $\{y^{(i+1)}\}$. 
We claim that we can simulate a stone-variant cup game that starts with $\{x^{(0)}\} = \{0, 0, \dots, 0\},$ but for each $i$, if the cups after $i$ time steps have fills $\{x^{(i)}\} = \{x_1^{(i)}, \dots, x_n^{(i)}\},$ then $2 \cdot \{x^{(i)}\}$ majorizes $\{y^{(i)}\}$, and $\{x^{(i)}\}$ is entirely integer-valued. We prove this by induction on $i$. 

Clearly, for $i = 0,$ we have $2 \cdot \{x^{(i)}\} = (0, 0, \dots, 0) = \{y^{(i)}\}$, so $2 \cdot \{x^{(0)}\}$ majorizes $\{y^{(0)}\}$. Now, suppose that we have simulated the stone-variant cup game so that $2 \cdot \{x^{(i)}\}$ majorizes $\{y^{(i)}\}$ and $x_1^{(i)}, \dots, x_n^{(i)}$ are all integers. Then, by Lemma \ref{lem:majorization}, for any $\{y^{(i+1)}\}$ achievable from a round of the negative-fill cup game on $\{y^{(i)}\}$, there exists a sequence of fills on $2 \cdot \{x^{(i)}\}$ that, after greedy emptying, produces some $\{x'\}$ which majorizes $\{y^{(i+1)}\}$. Moreover, by Lemma \ref{lem:toy_beats_standard}, since all terms in $2 \cdot \{x^{(i)}\}$ are even integers, there exists a round of the stone-variant cup game (scaled by 2) applied to $2 \cdot \{x^{(i)}\}$ which majorizes $\{x'\}$. Therefore, there exists a round of the stone-variant cup game applied to $\{x^{(i)}\}$ to produce some $\{x^{(i+1)}\}$ such that $2 \cdot \{x^{(i+1)}\}$ majorizes $\{y^{(i+1)}\}.$ This completes the induction.

The above argument, along with the fact that $2 \cdot \{x^{(i+1)}\}$ majorizing $\{y^{(i+1)}\}$ implies that $2 \cdot x_1^{(i+1)} \ge y_1^{(i+1)}$, directly implies the following lemma.

\begin{lemma} \label{lem:N_var_finish}
    Suppose that after $T$ time steps of the negative-fill cup game against a greedy emptier, it is possible to have a fill of $k$ or more in some cup. Then, after $T$ time steps of the stone-variant cup game, it is possible to have a fill of $k/2$ or more in some cup.
\end{lemma}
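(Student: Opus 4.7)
The plan is to prove this by constructing an explicit simulation: given any negative-fill game play with $T$ rounds reaching backlog $k$, I build a corresponding play of the stone-variant cup game of the same length whose fills majorize (a rescaling of) the negative-fill game fills at every time step. The target invariant, maintained by induction on the round index $i$, will be:
\begin{equation*}
\text{$x^{(i)}$ is integer-valued, and $2 \cdot x^{(i)}$ majorizes $y^{(i)}$.}
\end{equation*}
The base case $i=0$ is trivial since both sides are the all-zero sequence, and at the end ($i=T$) the majorization gives $2x_1^{(T)} \ge y_1^{(T)} \ge k$, yielding the desired backlog $k/2$ in the stone-variant game.

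For the inductive step, assume the invariant holds at time $i$. First I would apply Lemma \ref{lem:majorization} to the majorizing pair $2 \cdot x^{(i)} \succeq y^{(i)}$: since the filler has a move in the negative-fill game converting $y^{(i)}$ into $y^{(i+1)}$ (with greedy emptying), Lemma \ref{lem:majorization} produces a corresponding filler move on $2 \cdot x^{(i)}$ that, followed by greedy emptying, yields some intermediate sequence $\{x'\}$ majorizing $y^{(i+1)}$. Next, since all entries of $2 \cdot x^{(i)}$ are even integers by the induction hypothesis, I would apply Lemma \ref{lem:toy_beats_standard} to the pair consisting of $2 \cdot x^{(i)}$ and $\{x'\}$: the lemma produces a legal stone-variant-style move on $2 \cdot x^{(i)}$ (equivalently, a stone-variant move on $x^{(i)}$, rescaled by a factor of $2$) that produces some new integer-valued sequence $2 \cdot x^{(i+1)}$ which majorizes $\{x'\}$. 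By transitivity of majorization, $2 \cdot x^{(i+1)}$ majorizes $y^{(i+1)}$, and $x^{(i+1)}$ is integer-valued, completing the induction.

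The conclusion follows immediately: because majorization implies entrywise dominance of the largest element, $2 x_1^{(T)} \ge y_1^{(T)} \ge k$, so the stone-variant game reaches fill at least $k/2$ after $T$ rounds. The only non-routine piece of the argument is verifying that the two hypotheses feed into each other cleanly; in particular, Lemma \ref{lem:toy_beats_standard} requires the ``starting'' sequence to be integer-valued and even, which is exactly why the invariant is stated in terms of $2 \cdot x^{(i)}$ rather than $x^{(i)}$ directly. Beyond that bookkeeping, everything else is a direct application of previously established lemmas, so I expect no significant obstacle — the real technical content has already been absorbed into Lemmas \ref{lem:majorization} and \ref{lem:toy_beats_standard}.
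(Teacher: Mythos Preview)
Your proposal is correct and follows essentially the same approach as the paper: maintain by induction the invariant that $2\cdot x^{(i)}$ majorizes $y^{(i)}$ with $x^{(i)}$ integer-valued, using Lemma~\ref{lem:majorization} followed by Lemma~\ref{lem:toy_beats_standard} for the inductive step, and conclude from majorization of the largest entries. Your observation about why the factor of $2$ is needed (to feed even integers into Lemma~\ref{lem:toy_beats_standard}) matches the paper's reasoning exactly.
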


\paragraph{Relating the $\ell$-variant to the stone-variant.}
Now, we compare the stone-variant cup game to the $\ell$-variant cup game for a suitable integer $\ell \ge 1.$ Specifically, for any sequence $\{x^{(i)}\}$ of the stone-variant cup game obtainable in $i$ time steps, and any $\ell \ge 1,$ we show there is a corresponding $\ell$-variant cup game that produces a sequence $\{w^{(i)}\}$ in $i$ time steps, where $\{w^{(i)}\}$ dominates $\{x^{(i)}\}$. Again, we prove this by induction, where the base case of $i = 0$ is clear since $\{w^{(i)}\} = (0, 0, \dots, 0) = \{x^{(i)}\}$. Now, if time step $i$ of the stone-variant cup game converts $\{x^{(i)}\}$ to $\{x^{(i+1)}\}$, then by Lemma \ref{lem:domination}, there exists a time step of the stone-variant cup game that converts $\{w^{(i)}\}$ to some $\{w'\}$ that dominates $\{x^{(i+1)}\}$. Therefore, if we did the same time step but in the $\ell$-variant cup game, each fill would either remain the same or go up by an additional $1$ to ensure that checkpoints are preserved, so we would reach a state $\{w^{(i+1)}\}$ that dominates $\{x^{(i+1)}\}$. This completes the induction, and along with Lemma \ref{lem:N_var_finish}, directly implies the following lemma.

\begin{lemma} \label{lem:ell_var_finish}
    Suppose that after $T$ time steps of the negative-fill cup game against a greedy emptier, it is possible to have a fill of $k$ or more in some cup. Then, after $T$ time steps of the $\ell$-variant cup game, it is possible to have a fill of $k/2$ or more in some cup. 
\end{lemma}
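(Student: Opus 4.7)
The plan is to chain together two inductive simulation arguments, both of which are already set up by Lemmas \ref{lem:majorization}, \ref{lem:domination}, and \ref{lem:toy_beats_standard}. The statement we want is just the composition of (a) a reduction from the negative-fill game (against greedy) to the stone-variant game, and (b) a reduction from the stone-variant game to the $\ell$-variant game.

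For part (a), fix a negative-fill play $\{y^{(0)}\}, \{y^{(1)}\}, \dots, \{y^{(T)}\}$ against the greedy emptier that ends with some $y_j^{(T)} \ge k$. I would build, by induction on $i$, a stone-variant play $\{x^{(0)}\}, \dots, \{x^{(T)}\}$ with integer-valued entries such that $2 \cdot \{x^{(i)}\}$ majorizes $\{y^{(i)}\}$ for every $i$. The base case $i=0$ is trivial since both states are all-zero. For the inductive step, apply Lemma \ref{lem:majorization} to the pair $2 \cdot \{x^{(i)}\}$ and $\{y^{(i)}\}$: the filler/greedy-emptier round that sends $\{y^{(i)}\}$ to $\{y^{(i+1)}\}$ can be matched by some filler/greedy-emptier round on $2 \cdot \{x^{(i)}\}$ producing an intermediate state $\{x'\}$ that majorizes $\{y^{(i+1)}\}$. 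Now apply Lemma \ref{lem:toy_beats_standard}, which exactly says that, because $2 \cdot \{x^{(i)}\}$ consists of even integers, there is a stone-variant move (at scale $2$) taking $2 \cdot \{x^{(i)}\}$ to some state $2 \cdot \{x^{(i+1)}\}$ that majorizes $\{x'\}$, and hence by transitivity majorizes $\{y^{(i+1)}\}$. Since the stone move keeps entries even integers, $\{x^{(i+1)}\}$ is integer-valued and the induction goes through. Taking the first coordinate then gives $2 x_1^{(T)} \ge y_1^{(T)} \ge k$, so some stone-variant cup has fill at least $k/2$. This is Lemma \ref{lem:N_var_finish}.

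For part (b), I would likewise build, by induction on $i$, an $\ell$-variant play $\{w^{(0)}\}, \dots, \{w^{(T)}\}$ whose state dominates the corresponding stone-variant state $\{x^{(i)}\}$ coordinatewise. The base case is again trivial. For the step, Lemma \ref{lem:domination} tells me that any stone-variant move on $\{x^{(i)}\}$ can be matched by a stone-variant move on $\{w^{(i)}\}$ producing some $\{w'\}$ that dominates $\{x^{(i+1)}\}$. Performing the same choice of $(k,q)$ in the $\ell$-variant game on $\{w^{(i)}\}$ produces a state $\{w^{(i+1)}\}$ that differs from $\{w'\}$ only in that some coordinates which would have been decreased across a checkpoint are held at the checkpoint instead; in particular, every coordinate of $\{w^{(i+1)}\}$ is at least the corresponding coordinate of $\{w'\}$, and hence dominates $\{x^{(i+1)}\}$. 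Taking the largest coordinate at time $T$ gives an $\ell$-variant cup of fill at least $x_1^{(T)} \ge k/2$.

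Combining (a) and (b) yields Lemma \ref{lem:ell_var_finish}. The main obstacle is really just bookkeeping: verifying that Lemma \ref{lem:domination} is correctly applicable to $\{w^{(i)}\}$ (it is, since it is still a sorted integer-valued state), and, in part (b), carefully checking that turning a stone-variant move into an $\ell$-variant move by "holding at checkpoints" can only increase entries coordinatewise and therefore preserves domination. Both of these are routine given the lemmas already proved, so no new ideas are needed beyond stringing the two inductions together.
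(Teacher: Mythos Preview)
Your proposal is correct and follows essentially the same approach as the paper: the paper likewise proves this lemma by chaining the two inductive simulations you describe, first establishing the majorization relation $2\cdot\{x^{(i)}\}\succeq\{y^{(i)}\}$ via Lemmas~\ref{lem:majorization} and~\ref{lem:toy_beats_standard} (this is exactly Lemma~\ref{lem:N_var_finish}), and then establishing the domination relation $\{w^{(i)}\}\ge\{x^{(i)}\}$ via Lemma~\ref{lem:domination} together with the observation that executing the same move in the $\ell$-variant can only hold coordinates at checkpoints rather than decrease them. Your bookkeeping remarks match the paper's level of care.
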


To finish the proof of Theorem \ref{thm:main_rephrased}, first suppose that $k \ge 16 \log n$. Let $k' = k/8$, $h = \lceil \log\left(\frac{n}{k'}+1\right) \rceil$, and $k''$ be the largest multiple of $h$ that is at most $k'$. Note that $k' \ge 2 \log n$ but $h \le \lceil \log n \rceil < k',$ so $k'' = \Theta(k)$. Also, we have that $h = \Theta\left(\log \left(\frac{n}{k''}+1\right)\right)$, that $h \ge 2$, and that $h|k''$. Set $\ell = k'' / h$. By Lemma \ref{lem:ell_var_finish}, if the filler in the negative-fill cup game can achieve a fill of $k$ after $T$ time steps (against a greedy emptier), then it is also possible to achieve a fill of $k/2 \ge 4k''$ after $T$ time steps in the $\ell$-variant cup game. We can therefore apply Theorem \ref{thm:inequality} to deduce that $T \ge \Omega((k'')^3/h^2) = \Omega\left(\frac{k^3}{\log (n/k)^2}\right)$. Finally, since the backlog cannot increase by more than $1$ in a step, $T \ge k$, even if $k < 16 \log n$. This completes the proof of Theorem \ref{thm:main_rephrased}.

\section{Analyzing Greedy with Resource Augmentation} \label{sec:resource_augmentation}
In this section, we analyze the variable-processor cup game with $\eps $ resource augmentation, meaning that in each step, the emptier is permitted to remove up to $ 1+\eps $ units of water from each of $ p $ cups (rather than just $ 1 $ unit of water). Of course, so that the resource augmentation does not cause the average fill of cups to become negative, the emptier is prohibited from reducing the fill of any cup below $0$.

We prove that even a very small amount of resource augmentation significantly decreases backlog of the game:

\begin{theorem}
In the variable-processor cup game with resource augmentation $\eps \in [\omega(1/n), 1] $, the greedy emptying algorithm achieves backlog
$ O (\eps ^ { -1 }\log n) $.
\label{thm:augmentation}
\end{theorem}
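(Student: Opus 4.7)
My plan is to combine the probabilistic method with the fact (Corollary~\ref{cor:aug}) that greedy remains the optimal deterministic emptier in the resource-augmented game. Concretely, for any fixed adaptive filler $F$ and any time $t$, I will exhibit a \emph{randomized} emptier that achieves backlog $O(\eps^{-1}\log n)$ with positive probability at time $t$. A positive-probability guarantee implies the existence of a deterministic emptier (depending on $F$) with the same guarantee against $F$, and hence $\OPT(\mathbf{0},t)\le O(\eps^{-1}\log n)$. By Corollary~\ref{cor:aug}, $\GREEDY(\mathbf{0},t)=\OPT(\mathbf{0},t)$, which yields the theorem.

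The randomized algorithm is \emph{proportional emptying}: in a round where the filler deposits $q_j$ units into cup $j$ (with $\sum_j q_j=p$ and $q_j\in[0,1]$), the emptier draws a random size-$p$ subset $S\subseteq[n]$ satisfying $\Pr[j\in S]=q_j$ for every $j$. Such a distribution exists by a Birkhoff--von~Neumann / dependent-rounding argument, since $(q_1,\dots,q_n)$ lies in the base polytope of the rank-$p$ uniform matroid. Only the marginals matter for the analysis, so the joint structure is irrelevant.

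Fix a cup $j$ and let $X_t$ denote its fill, $q_t$ the filler's round-$t$ deposit in cup $j$, and $Y_t=\mathbf{1}[j\in S_t]$. Conditioned on all prior history together with the filler's round-$t$ move, $\BE[Y_t]=q_t$ and $|q_t-(1+\eps)Y_t|\le 1+\eps$. A direct calculation of the conditional moment-generating function of the increment $\delta_t:=q_t-(1+\eps)Y_t$ shows that, for $\lambda:=\eps/\bigl(2(1+\eps)^2\bigr)$,
\[
    \BE\!\left[e^{\lambda\delta_t}\,\bigl|\,\text{past}\right]\;\le\; e^{-(\lambda\eps/2)\,q_t}\;\le\;1.
\]
Consequently, for every fixed starting time $s$, the ``imaginary'' partial sums $\tilde X^{(s)}_t:=\sum_{s'=s+1}^t\delta_{s'}$ make $\exp(\lambda\tilde X^{(s)}_t)$ a non-negative supermartingale of initial value $1$, and Markov's inequality yields $\Pr[\tilde X^{(s)}_t>B]\le e^{-\lambda B}$ even though the filler is adaptive. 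Invoking the pathwise reflection identity $X_t=\max_{0\le s\le t}\tilde X^{(s)}_t$ for a lower-reflected random walk, together with union bounds over $s\in\{0,\dots,t\}$ and over the $n$ cups, gives $\Pr[\max_j X^{(j)}_t>B]\le n(t+1)\,e^{-\lambda B}$, which is strictly less than $1$ for $B=\Theta(\eps^{-1}\log(nt))=O(\eps^{-1}\log n)$ whenever $t=\poly(n)$.

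The main technical obstacle is the reflection at zero: $e^{\lambda X_t}$ is not itself a supermartingale, because in rounds with $X_{t-1}<1+\eps$ the clamping of negative fills to $0$ can inflate the potential. The reflection identity $X_t=\max_s\tilde X^{(s)}_t$ bypasses this cleanly for polynomial-length games, at the price of a $(t+1)$-factor in the union bound that is absorbed into the exponent. For the arbitrary (possibly super-polynomial) horizons treated in the full section, I expect to replace the crude union bound either by a refined potential of the form $\max\bigl(0,\,e^{\lambda X_t}-e^{\lambda(1+\eps)}\bigr)$---whose expected increment is non-positive outside an $O(1)$-fill ``boundary layer'' and can be charged to regeneration epochs---or, equivalently, by restarting the analysis at each time the cup first returns to $0$, thereby decoupling the tail bound from the total horizon $t$.
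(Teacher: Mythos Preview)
Your overall strategy---probabilistic method via proportional emptying, then invoke Corollary~\ref{cor:aug} to transfer the bound to greedy---is exactly the paper's approach, and your supermartingale/reflection analysis is a correct implementation of it for games of length $\poly(n)$. (One minor point: the intermediate bound $\BE[e^{\lambda\delta_t}\mid\text{past}]\le e^{-(\lambda\eps/2)q_t}$ is not quite right uniformly in $q_t$, but the weaker statement $\le 1$, which is all you use, does hold for your choice of $\lambda$.)

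The genuine gap is the arbitrary-horizon case, which you flag but do not resolve. Your proposed fixes---a truncated potential or a regeneration-at-zero argument---are plausible in spirit, but both run into the same obstruction: an adaptive filler can, after getting lucky and pushing cup $j$ to a large fill, simply stop placing water into cup $j$ forever, so that proportional emptying never touches it again. Under such a filler there are no regeneration epochs for cup $j$, and the boundary-layer charge in your truncated potential has nothing to absorb. The paper sidesteps this cleanly with a preprocessing trick you are missing: one first argues that it is without loss of generality (after rescaling $\eps$ by a constant) to assume the filler places at least $1/n^2$ units into \emph{every} cup on \emph{every} step. This forces $\mu\ge t/n^2$ in the union bound over starting times, which makes the sum over $t$ converge geometrically and removes the dependence on the horizon. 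With that device in hand, the rest of your argument (or the paper's Azuma-style version) goes through for all game lengths.
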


We prove the theorem using the probabilistic method. Rather than analyzing the greedy emptying algorithm directly, we instead construct a randomized emptying algorithm that, at any given moment, achieves backlog $ O (\eps ^ { -1 }\log n) $ with non-zero probability. (Importantly, the randomized algorithm is against an \emph{adaptive} filler, not an oblivious one.) The existence of such a randomized algorithm implies the existence of a deterministic emptying algorithm with the same guarantee. But by Corollary \ref{cor:aug}, we already know that the greedy emptying algorithm is the optimal deterministic emptying algorithm, and thus it must also achieve backlog $ O (\eps ^ { -1 }\log n) $.

Suppose $\omega(1 /n) \le \eps \le 1/2$. Throughout the rest of the section, we make an important WLOG assumption about the filler's behavior: we will assume that the filler always places at least $1/n^2$ units of water into each cup on each step. To see why this assumption is WLOG, up to constant factor changes in $\eps$ and in the backlog, let us describe how we can obtain the assumption as the composition of two more obviously WLOG assumptions. We can assume WLOG that the filler is only able to place $(1 - \eps)p$ units of water into cups in a given step (rather than $p$ units of water), with at most $(1 - \eps)$ units of water going to any given cup (rather than $1$ unit of water). In particular, this is equivalent to re-normalizing what $ 1 $ unit of water corresponds to by a factor of $(1 - \eps)$, and then changing $\eps$ by a constant factor. We can then further modify the filler to place an additional $\eps p / n$ units of water into each cup (on top of the water that the filler was originally going to place); this modification is also WLOG since it only helps the filler. The new filler places at most $p$ units of water into the system at each step, with at most $1$ unit of water going to any given cup, and with at least $\eps p / n \ge 1 / n^2$ units of water going to each cup. 

The assumption that the filler always places at least $1/n^2$ units of water into each cup on each step makes it so that a very simple randomized emptying algorithm achieves small backlog, even in arbitrarily long games. The randomized emptying algorithm takes an approach that we call  \defnn{proportional emptying}: in each step of the game, if the filler places some amount $ q_j $ of water into cup $ j $, then the emptier empties from cup $ j $ with probability exactly $ q_j $. (If the algorithm empties from a cup with less than $1 + \eps$ water, then the amount of water in the cup becomes $0$).

Note that, if we did not have our WLOG assumption about the filler, then one problem with proportional emptying would be that, in very long games, the filler could use the following strategy to achieve large backlog. The filler simply waits until they get lucky and have a cup $j$ with some large backlog, and then the filler never places any more water into that cup $ j $. This means that the emptier will never again empty from cup $ j $, and the backlog will stay large forever. This is why, in the analysis of proportional emptying, we need the assumption that the filler always places at least $1/n^2$ units of water into each cup on each step; of course, if the assumption weren't there, we could just modify proportional emptying to simulate the assumption (since the assumption is WLOG), but it is cleaner to just assume it.

The following lemma shows that it is possible to implement proportional emptying on any given step.

\begin{lemma}\label{lem:proportion}
Let $p \in \mathbb{N}$ and let $q_1, q_2, \ldots, q_n \in [0, 1]$ satisfy $\sum_j q_j = p$. It is possible to select distinct $x_1, x_2, \ldots, x_p \in [n]$ in a random fashion such that for each $j \in [n]$,
$$\Pr[j \in \{x_i\}] = q_j.$$
\end{lemma}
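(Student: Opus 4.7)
The plan is to argue that the vector $(q_1, \ldots, q_n)$ lies in the convex hull of the indicator vectors of the $p$-element subsets of $[n]$; once this is established, writing $(q_1, \ldots, q_n) = \sum_{S} \lambda_S \mathbf{1}_S$ as a convex combination and then sampling $S$ with probability $\lambda_S$ immediately yields the required marginal property, since $\Pr[j \in S] = \sum_{S \ni j} \lambda_S = q_j$. The set $\{x_1, \ldots, x_p\}$ is then taken to be this sampled $S$.

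For the convex-hull claim, the cleanest approach is a direct constructive argument of the ``dependent rounding / pipage rounding'' flavor. Starting from $(q_1, \ldots, q_n)$, I repeat the following step. While there exist two distinct indices $j, k$ with $q_j, q_k \in (0,1)$, I set
\[
\delta_+ = \min(1 - q_j,\, q_k), \qquad \delta_- = \min(q_j,\, 1 - q_k),
\]
(both of which are strictly positive, since $q_j, q_k \in (0,1)$), and I then randomly replace the pair $(q_j, q_k)$ by $(q_j + \delta_+,\, q_k - \delta_+)$ with probability $\delta_- / (\delta_+ + \delta_-)$, and by $(q_j - \delta_-,\, q_k + \delta_-)$ with probability $\delta_+ / (\delta_+ + \delta_-)$. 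By construction the coordinate sum stays equal to $p$, all entries remain in $[0,1]$, and the expected change to each of $q_j$ and $q_k$ is zero; all other coordinates are untouched. Crucially, after the step at least one of $q_j, q_k$ has become $0$ or $1$, so the number of fractional coordinates strictly decreases.

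After at most $n$ such steps the vector has become a (random) $0/1$ vector $\mathbf{1}_S$ with $|S|=p$. Because conditional expectations were preserved at every step, the tower property yields $\mathbb{E}[\mathbf{1}_S] = (q_1,\ldots,q_n)$ coordinatewise, i.e., $\Pr[j \in S] = q_j$, which is exactly what the lemma asserts.

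I do not expect a substantial obstacle here: the only things that really need to be checked are that the step is well-defined (both $\delta_\pm > 0$ whenever both entries are fractional, which is immediate from $q_j,q_k \in (0,1)$), and that the marginals are genuinely preserved in expectation, which is a one-line algebra check. Everything else is bookkeeping. Alternatively, one could give a non-constructive proof by observing that the polytope $\{x \in [0,1]^n : \sum_i x_i = p\}$ is a hypersimplex whose vertices are exactly the $0/1$ vectors with $p$ ones, so Carathéodory immediately gives the convex decomposition; I prefer the algorithmic version since it is self-contained and makes the sampling procedure explicit.
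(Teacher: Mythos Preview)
Your proof is correct, but it takes a genuinely different route from the paper's. The paper gives a one-shot ``systematic sampling'' construction: partition $[0,p)$ into consecutive intervals $Q_1,\ldots,Q_n$ of lengths $q_1,\ldots,q_n$, pick a uniform $u_1\in[0,p)$, set $u_i = u_1 + (i-1) \pmod p$ for $i=1,\ldots,p$, and let $x_i$ be the index of the interval containing $u_i$. Since each $|Q_j|\le 1$, no $Q_j$ can contain two of the equally-spaced $u_i$'s, so the $x_i$ are distinct; and since each $u_i$ is marginally uniform on $[0,p)$, the disjoint events $\{u_i\in Q_j\}$ have total probability $q_j$. This is shorter and uses only a single random draw, whereas your pipage-rounding argument is iterative and needs a martingale/tower-property step to track the marginals. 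On the other hand, your approach is self-contained in the sense that it explicitly exhibits the convex decomposition (and hence also proves the hypersimplex vertex characterization you mention), and it generalizes immediately to settings where the constraint set is a matroid base polytope rather than just ``exactly $p$ coordinates equal to $1$''. Either proof is fine here; the paper's is the more direct one for this specific lemma.
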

\begin{proof}
Partition the real interval $ I = [0, p) $ into half-open intervals $ Q_1, Q_2,\ldots, Q_n $ where $| Q_j| = q_j $ for all $ j $. Select $u_1 \in [0, p)$ uniformly at random, and define $ u_2 = u_1+1, u_3 = u_1+2,\ldots, u_n = u_1+ n - 1$, where each $u_i$ is taken modulo $p$. Finally, define $ x_i $ to be the index $ j $ such that $ u_i\in Q_j $.

Since each interval $Q_j$ has size $1$ or smaller, it can contain at most one $ u_i $. For each $ i \in [n]$, the value $u_i$ is uniformly random in $[0, p)$, and thus $\Pr [u_i \in Q_j] = q_j / p $. Since the events $\{u_i \in Q_j\}_{i = 1}^p$ are disjoint, it follows that 
$$\Pr[u_i \in Q_j \text{ for any }i \in [p]] = \sum_{i = 1 }^{p}\Pr [u_i \in Q_j] = q_j.$$
Thus $\Pr[j \in \{x_i\}] = q_j$ and the lemma is proven.
\end{proof}

To analyze proportional emptying, we show that, at any given moment, each cup has fill $ O (\eps ^ { -1 }\log n) $ with high probability. Roughly speaking, the amount of water in each cup can then be modeled as a biased random walk: in each step, the expected amount of water that the emptier removes from the cup is a factor of $ 1+\eps $ larger than the amount of water that the filler inserts. The bias in the random walk prevents it from reaching a large fill at any given moment, and as we shall now see, a simple Chernoff-style analysis (modified to handle the fact that the filler is an adaptive adversary) can be used to bound the fill by $ O (\eps ^ { -1 }\log n) $.

In our analysis of proportional emptying, we will need the following standard Chernoff-style bound, which can be viewed as a variant of Azuma's martingale inequality:
\begin{proposition}\label{prop:azuma}
Fix $\mu, t \in \mathbb{N}$. Suppose that Alice constructs a sequence of random variables $X_1, \ldots, X_t$, with $X_i \in \{0, 1\}$, using the following iterative process. Once the outcomes of $X_1, \ldots, X_{i - 1}$ are determined, Alice then selects a probability $r_i$ and draws $X_i$ from $\{0, 1\}$ with $\Pr[X_i = 1] = r_i$. Alice is an adaptive adversary, meaning that $r_i$ can be a function of $X_1, \ldots, X_{i - 1}$. The only constraint on Alice is that $\sum_i r_i \ge \mu$.

If $X = \sum_i X_i$, then for any $j > 0$,
\[\Pr[X\leq \mu - j] \leq \exp\left(j^2  / (2\mu) \right).\]
\end{proposition}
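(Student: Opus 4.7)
The plan is to prove Proposition \ref{prop:azuma} by the standard Chernoff moment-generating-function method, adapted to an adaptive adversary via the tower property of conditional expectation. Fix any $\lambda > 0$ and let $\mathcal{F}_i = \sigma(X_1, \dots, X_i)$ denote the natural filtration. The goal is to upper-bound the unconditional moment generating function $\mathbb{E}[e^{-\lambda X}]$ and then apply Markov's inequality to $e^{-\lambda X}$; one then optimizes $\lambda$.

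The first step is to compute $\mathbb{E}[e^{-\lambda X_i} \mid \mathcal{F}_{i-1}]$. Because Alice is adaptive, $r_i$ is $\mathcal{F}_{i-1}$-measurable, so this conditional expectation equals $1 - r_i(1 - e^{-\lambda})$. Using $1 - x \le e^{-x}$, we get
\[
\mathbb{E}[e^{-\lambda X_i} \mid \mathcal{F}_{i-1}] \le \exp\!\bigl(-r_i(1 - e^{-\lambda})\bigr).
\]
Next, I would iterate this bound using the tower property: writing $e^{-\lambda X} = \prod_{i=1}^{t} e^{-\lambda X_i}$ and conditioning successively on $\mathcal{F}_{t-1}, \mathcal{F}_{t-2}, \dots, \mathcal{F}_0$, each factor can be replaced (inside the appropriate conditional expectation) by $\exp(-r_i(1 - e^{-\lambda}))$. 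This yields
\[
\mathbb{E}\!\left[e^{-\lambda X}\right] \le \mathbb{E}\!\left[\exp\!\Bigl(-(1 - e^{-\lambda}) \textstyle\sum_{i=1}^{t} r_i\Bigr)\right].
\]
Since the constraint $\sum_i r_i \ge \mu$ holds pointwise (not merely in expectation) and $1 - e^{-\lambda} > 0$, the exponent on the right is at most $-(1 - e^{-\lambda})\mu$ pointwise, giving $\mathbb{E}[e^{-\lambda X}] \le \exp(-(1 - e^{-\lambda})\mu)$.

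To finish, I would apply Markov's inequality: for any $\lambda > 0$,
\[
\Pr[X \le \mu - j] = \Pr[e^{-\lambda X} \ge e^{-\lambda(\mu - j)}] \le \exp\!\bigl(\lambda(\mu - j) - (1 - e^{-\lambda})\mu\bigr).
\]
Optimizing over $\lambda > 0$ (the optimum is at $\lambda = \ln(\mu/(\mu - j))$ when $j < \mu$) reduces the exponent to $(\mu - j)\ln(\mu/(\mu - j)) - j$, which by the standard estimate $(1-\delta)\ln(1-\delta) + \delta \ge \delta^2/2$ for $\delta \in [0,1]$, applied with $\delta = j/\mu$, is at most $-j^2/(2\mu)$. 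This is precisely the claimed bound (reading the statement's exponent as $-j^2/(2\mu)$, which is the only sensible reading). If $j \ge \mu$ the bound is trivial since the probability is $0$. The only real subtlety is handling the adaptivity: it is critical that the constraint on $\sum_i r_i$ is deterministic and that $r_i$ is $\mathcal{F}_{i-1}$-measurable, so that the MGF can be bounded factor-by-factor inside nested conditional expectations; beyond this, the argument is a routine Chernoff computation.
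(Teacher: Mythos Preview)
Your proof is correct and is exactly the ``standard'' Chernoff/MGF argument the paper alludes to. The paper itself does not prove this proposition: it simply cites \cite{azuma} (Corollary~11 there gives the upper-tail version) and remarks that the lower-tail bound follows by the standard modification, which is precisely what you carry out.

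One minor point worth tightening: the intermediate inequality
\[
\mathbb{E}\bigl[e^{-\lambda X}\bigr] \le \mathbb{E}\Bigl[\exp\bigl(-(1-e^{-\lambda})\textstyle\sum_i r_i\bigr)\Bigr]
\]
is slightly loosely justified by the phrase ``each factor can be replaced inside the appropriate conditional expectation,'' because after replacing $e^{-\lambda X_t}$ by $e^{-r_t(1-e^{-\lambda})}$, the factor $r_t$ is $\mathcal{F}_{t-1}$-measurable but not $\mathcal{F}_{t-2}$-measurable, so the next conditioning step does not immediately peel it off. The clean formulation is to observe that
\[
W_k \;:=\; \exp\Bigl(-\lambda\sum_{i\le k}X_i + (1-e^{-\lambda})\sum_{i\le k}r_i\Bigr)
\]
is a nonnegative supermartingale with $W_0=1$ (since $\mathbb{E}[W_k\mid\mathcal{F}_{k-1}] = W_{k-1}\,e^{r_k(1-e^{-\lambda})}(1-r_k(1-e^{-\lambda}))\le W_{k-1}$), whence $\mathbb{E}\bigl[e^{-\lambda X}e^{(1-e^{-\lambda})\sum_i r_i}\bigr]\le 1$. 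Combined with the pointwise constraint $\sum_i r_i\ge\mu$, this gives $\mathbb{E}[e^{-\lambda X}]\le e^{-(1-e^{-\lambda})\mu}$ directly, which is the only inequality you actually use downstream. Everything from Markov's inequality onward---the optimization $\lambda=\ln(\mu/(\mu-j))$ and the estimate $(1-\delta)\ln(1-\delta)+\delta\ge\delta^2/2$---is correct as written.
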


If Alice were an oblivious adversary, unable to see the outcomes of the $X_i$s, then Proposition \ref{prop:azuma} would follow by a standard lower-tail Chernoff bound. Fortunately, Chernoff bounds still hold against an adaptive adversary like Alice. For a proof of Proposition \ref{prop:azuma}, see \cite{azuma}; Corollary 11 of \cite{azuma} gives the upper-tail version of Proposition \ref{prop:azuma} and by modifying the argument in the standard way to get a lower-tail bound, one arrives at Proposition \ref{prop:azuma}.

We complete the section by analyzing proportional emptying.
\begin{lemma}
Suppose $\eps \in (\omega(1/n), 1]$. If the emptier uses proportional emptying, and if the filler always places at least $1/n^2$ units of water into each cup on each step, then at any time $ t $, the backlog is $ O (\eps ^ { -1 }\log n) $ with high probability in $ n $.
\label{lem:prop}
\end{lemma}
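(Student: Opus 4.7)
The plan is to fix a cup $j$ and a time $t$, bound the probability that cup $j$'s fill at time $t$ exceeds $L := C\eps^{-1}\log n$ by $n^{-\Omega(C)}$, and then apply a union bound over the $n$ cups. Let $A_s$ denote the amount of water placed into cup $j$ at step $s$ and $X_s \in \{0, 1\}$ the indicator that proportional emptying selects cup $j$ at step $s$, so $\Pr[X_s = 1 \mid \mathcal{F}_{s - 1}] = A_s$. The key device will be the Lindley-style identity $F_t^{(j)} = \max_{0 \le t_0 \le t}(F_t^* - F_{t_0}^*)$, where $F_s^* := \sum_{r \le s}(A_r - (1 + \eps)X_r)$ is the ``uncapped'' signed flow; this folds the cap at $0$ into the choice of $t_0$, so it suffices to union bound over $t_0$ the event that $F_t^* - F_{t_0}^* \ge L$.

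For each fixed $t_0$, the event rearranges to $\sum_{s > t_0}^{t} X_s \le (S - L)/(1 + \eps)$, where $S := \sum_{s > t_0}^{t} A_s$ is the random total water the filler puts into cup $j$ during the interval; this says $\sum X_s$ dips below its conditional mean $S$ by at least $(\eps S + L)/(1 + \eps)$. Since $S$ is not deterministic, Proposition~\ref{prop:azuma} does not apply verbatim; instead I would work directly with the exponential supermartingale $M_s = \exp\!\bigl(-\theta \sum_{r \le s} X_r + (1 - e^{-\theta})\sum_{r \le s} A_r\bigr)$, which is indeed a supermartingale for any $\theta > 0$ because $\mathbb{E}[e^{-\theta X_s}\mid\mathcal{F}_{s - 1}] \le e^{-A_s(1 - e^{-\theta})}$ -- this is the same conditional MGF bound that drives Proposition~\ref{prop:azuma}. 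Taking $\theta = \Theta(\eps)$ and applying Markov's inequality to $M_t$ should yield a per-$t_0$ bound of the form $\exp(-c_1 \eps L - c_2 \eps^2 S)$ for absolute constants $c_1, c_2 > 0$.

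To finish, I would invoke the WLOG assumption that the filler places at least $1/n^2$ units into every cup in every step, so $S \ge (t - t_0)/n^2$; the per-$t_0$ bound then reads $\exp(-c_1 \eps L)\cdot\exp(-c_2 \eps^2 (t - t_0)/n^2)$. Summing over $t_0 = 0, 1, \ldots, t - 1$ (or even over all $t_0 \le t$) produces a convergent geometric series of total mass at most $O(n^2/\eps^2)\cdot\exp(-c_1 \eps L) = n^{-\Omega(C)}$ once $C$ is chosen large enough; a final union bound over the $n$ cups then yields the lemma.

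The main obstacle is the middle step: the quantity playing the role of $\mu$ in Proposition~\ref{prop:azuma} is itself a random variable driven by the adaptive filler, so one cannot plug in naively. The exponential-martingale route circumvents this because the supermartingale property is preserved pointwise under the filler's adaptive choices of $A_s$; a dyadic union bound over geometrically spaced ranges of $S$ is an alternative but strictly messier workaround. A secondary subtlety is that my bookkeeping of $F_s^*$ treats emptying as always removing $1 + \eps$ water, ignoring the cap at $0$, but the Lindley identity is exactly what makes this valid, so no extra care is needed there.
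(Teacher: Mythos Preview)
Your proof is correct and follows essentially the same biased-random-walk intuition as the paper, but the packaging is genuinely different. The paper introduces the random ``last time the cup was empty'' $r$, observes that the fill at time $t_0$ equals $\mu - (1+\eps)X$ exactly on the interval $(r,t_0]$, and then handles the randomness of both $r$ and $\mu$ by a \emph{double} union bound over the pair $(\lfloor\mu\rfloor, t_0 - r)$, invoking Proposition~\ref{prop:azuma} as a black box for each fixed pair. Your Lindley identity replaces the random $r$ by a deterministic maximum over all possible starting points $t_0$, and your direct use of the exponential supermartingale lets the random total water $S$ sit inside the exponent, so that the deterministic lower bound $S \ge (t-t_0)/n^2$ can be substituted \emph{before} Markov is applied; the result is a single-parameter union bound over $t_0$ and a geometric series. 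Your remark that a dyadic union bound over ranges of $S$ is ``an alternative but strictly messier workaround'' is exactly what the paper does (with integer rather than dyadic buckets). What your route buys is a cleaner bookkeeping that avoids the two-dimensional sum; what the paper's route buys is that Proposition~\ref{prop:azuma} can be quoted verbatim without re-deriving the conditional MGF bound $\mathbb{E}[e^{-\theta X_s}\mid \mathcal{F}_{s-1}] \le e^{-A_s(1-e^{-\theta})}$.
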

\begin{proof}
We will show that, for any given cup $ j $ and time $ t_0 $, with high probability in $ n $, the amount of water in cup $ j $ is at most $ O (\eps ^ { -1 }\log n) $. By taking a union bound over $ j\in [n] $, it follows that the backlog at time $ t_0 $ is $ O (\eps ^ { -1 }\log n) $ with high probability in $ n $.

Since the cups are defined symmetrically, let us just analyze the amount of water in cup $ 1 $ over time. Let $ c_i $ be the amount of water that the filler places into cup $ 1 $
in step $i$. Recall that the filler is an adaptive adversary, so $ c_i $ may be determined partially based on the emptier's behavior in the first $ i - 1$ steps.

Let $r$ be the largest $r \le t_0$ such that cup $1$ is empty at the end of step $r$, and define $t = t_0-r$. Let $\mu$ be the amount of water that the filler has placed into cup $1$ during steps $r+1 , \ldots, t_0$, and let $X$ be the number of times that the emptier has emptied from cup $1$ during those same steps. The amount of water in cup $ 1$ after step $t$ is
exactly $$\mu - (1 + \eps)X.$$
To complete the analysis, it suffices to show that
\begin{equation}
\Pr\left[X \le \frac{1}{1 + \eps} \cdot \lfloor \mu \rfloor - \gamma \eps^{-1}\log n\right] \le \frac{1}{\poly(n)},
\label{eq:walk1}
\end{equation}
where $\gamma$ is taken to be a sufficiently large constant. 

For any integer $\mu \ge 1$ and any $1 \le t \le t_0$, if the end of round $r = t_0-t$ is the last time that the fill of cup $1$ is $0$, and the integer part (floor) of the amount of water placed into cup 1 during steps $r + 1, \ldots, t_0$ equals $\mu$, define $X_{\mu, t}$ to be the number of times that the emptier has removed water from cup 1 during steps $t_0, t + 1, \ldots, t$. Otherwise, define $X_{\mu, t}$ to be $\infty$. To bound \eqref{eq:walk1}, it suffices to bound the probability that there exists any $t$ and any integer $\mu \ge 1$ such that $X_{\mu, t} \le  \frac{1}{1 + \eps} \mu - \gamma \eps^{-1}\log n$. For any given $\mu, t$, we have by Proposition \ref{prop:azuma} that
\begin{align*}
\Pr\left[X_{\mu, t} \le \frac{1}{1 + \eps} \mu - \gamma \eps^{-1}\log n\right] \le \exp\left(- \left(\frac{\eps \mu + \gamma \eps^{-1}\log n}{\sqrt{2\mu}}\right)^2\right).
\end{align*}
In addition to assuming that $\mu \ge 1$ and $t$ are integers, we can also assume that $\mu \ge t / n^2$ since, by assumption, each step places at least $1/n^2$ water into cup $1$.

Taking a union bound over all possible $\mu \ge 1$ and $t \ge 0$ satisfying $\mu \ge t / n^2$, we have
\begin{align*}
&\hspace{0.5cm} \Pr[X \le \frac{1}{1 + \eps} \mu - \gamma \eps^{-1}\log n] \\
& \le \sum_{t = 1}^{n} \sum_{\mu = \lceil t/n^2\rceil}^{tn} \Pr[X_{\mu, t} \le  \frac{1}{1 + \eps} \mu - \gamma \eps^{-1}\log n] \\
& \le \sum_{\mu = 1}^\infty \sum_{t = 1}^{n^2\mu} \Pr[X_{\mu, t} \le  \frac{1}{1 + \eps} \mu - \gamma \eps^{-1}\log n] \\
& = \sum_{\mu = 1}^\infty n^2 \mu \Pr[X_{\mu, t} \le  \frac{1}{1 + \eps} \mu - \gamma \eps^{-1}\log n] \\
& \le \sum_{\mu = 1}^\infty n^2 \mu \exp\left(- \left(\frac{\eps \mu + \gamma \eps^{-1}\log n}{\sqrt{2\mu}}\right)^2\right) \\
& \le \sum_{\mu = 1}^{\gamma \eps^{-2}\log n} n^2 \mu \exp\left(- \left(\frac{\gamma \eps^{-1}\log n}{\sqrt{2\mu}}\right)^2\right) + \sum_{\mu > \gamma \eps^{-2}\log n} n^2 \mu \exp\left(- \left(\frac{\eps \mu}{\sqrt{2\mu}}\right)^2\right) \\
& \le \sum_{\mu = 1}^{\gamma \eps^{-2}\log n} n^2\mu \exp\left(- \frac{\gamma \log n}{2}\right) + \sum_{\mu > \gamma \eps^{-2}\log n} n^2 \mu \exp\left(- \eps^2 \mu/2\right) \\
& \le \frac{1}{\poly(n)},
\end{align*}
where the final inequality uses that $\gamma$ is a sufficiently large constant. 
\end{proof}

\begin{proof}[Proof of Theorem \ref{thm:augmentation}]
Let $c$ be a sufficiently large constant. Consider the two-player game of length $t$ where the filler is declared to win if he achieves backlog at least $c \eps ^ { -1 }\log n$ and the emptier wins otherwise. This is a perfect-information game with no draws, so one of the players must be able to deterministically force a win. Lemma \ref{lem:prop} (along with the fact that the filler WLOG places at least $1/n^2$ water into each cup on each step) tells us that, if $c$ is sufficiently large, then the filler cannot always force a win. Therefore the emptier \emph{can} always force a win, and there is a deterministic emptying algorithm that achieves backlog $c \eps^{-1} \log n$. By Corollary \ref{cor:aug}, it follows that greedy emptying also achieves backlog $c\eps^{-1} \log n$, completing the proof. 
\end{proof}

Interestingly, the argument above can also be used to obtain a nontrivial bound in the resource-augmentation-free setting. Now the amount of water in each cup follows an \emph{unbiased} random walk. At any given step $t$, one can bound the height of such random walk by $O(\sqrt{t\log n}))$ with high probability. Using the fact that greedy emptying is as good as any randomized emptying strategy, it follows that greedy emptying achieves backlog $O(\sqrt{t\log n})$ in a $t$ step game.

\section*{Acknowledgments}

William Kuszmaul is funded by a Fannie \& John Hertz Foundation Fellowship; by an NSF GRFP Fellowship; and by the United States Air Force Research Laboratory and the United States Air Force Artificial Intelligence Accelerator and was accomplished under Cooperative Agreement Number FA8750-19-2-1000. The views and conclusions contained in this document are those of the authors and should not be interpreted as representing the official policies, either expressed or implied, of the United States Air Force or the U.S. Government. The U.S. Government is authorized to reproduce and distribute reprints for Government purposes notwithstanding any copyright notation herein.the United States Air Force Research Laboratory under Cooperative Agreement Number FA8750-19-2-1000. The views and conclusions contained in this document are those of the authors and should not be interpreted as representing the official policies, either expressed or implied, of the United States Air Force or the U.S. Government. The U.S. Government is authorized to reproduce and distribute reprints for Government purposes notwithstanding any copyright notation herein.

Shyam Narayanan is funded by an NSF GRFP Fellowship, a Simons Investigator Award, and the NSF TRIPODS Program (award DMS-2022448).


\appendix


\section{Asymptotic Equivalence between Negative-Fill and Standard Game} \label{sec:appendix_less_ez}

We recall that in the negative-fill cup game, we allow cups to have arbitrary real number (possibly negative) fills, whereas in the standard variable-processor cup game, if an emptier removes water from a cup of fill between $0$ and $1$, the fill goes down to $0$ rather than a negative number.
In both games, if the cups have fills $x_1, \dots, x_n$ the goal of the filler is to maximize the backlog, where the backlog is $\max_{1 \le i \le n} x_i$ as opposed to $\max_{1 \le i \le n} |x_i|$, and the goal of the emptier is to minimize the backlog.

It is easy to see that the negative-fill cup game is harder for the filler than in the standard variable-processor cup game. That is, if the emptier can guarantee the backlog does not exceed some $b \ge 0$ after $t$ rounds of the standard variable-processor game, then they can make the same guarantee in the negative-fill game (by simply simulating it to be as though they never empty any cup below $0$).

In this section, we show the converse is also true, up to a factor of $2$ for backlog. First, we show the converse is true if we are promised the emptier is greedy. We note that this is all that we need for our main theorems, though we will also show a corollary that removes the greedy-emptier assumption.

First, we establish the following proposition.

\begin{proposition} \label{prop:binks:}
    For any integers $n, t \ge 1$ and real number $b \ge 0$, suppose the filler is playing the negative-fill variable-processor cup game over $n$ cups against a greedy emptier, for $t$ rounds. Then, the following are equivalent:
\begin{enumerate}
    \item The filler can guarantee at least one cup has fill $\ge b$.
    \item The filler can guarantee at least one cup has fill $\le -b$.
    \item The filler can either guarantee that at least one cup has fill $\ge b$ or at least one cup has fill $\le -b$ (but does not have to know which one).
\end{enumerate}
\end{proposition}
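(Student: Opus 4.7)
The plan is to exploit a negation symmetry of the negative-fill game. Given a filler move $(p, a_1, \ldots, a_n)$ with $p < n$, I define its \emph{mirror} to be the filler move $(n-p,\, 1-a_1, \ldots, 1-a_n)$, which is itself valid since $1-a_i \in [0,1]$ and $\sum_i(1-a_i) = n-p \in \{1, \ldots, n-1\}$; when $p = n$ the original move has no net effect, and I take its mirror to be the equally ineffectual $(n, 1, \ldots, 1)$. I will then show by induction on the round that, if a strategy $\sigma$ plays on state $\{x_i\}$ while the mirror strategy $\sigma^*$ (which at every step plays the mirror of the move $\sigma$ would make on the corresponding negated state) plays on state $\{-x_i\}$, then after a round of greedy emptying in each game the two resulting states remain negations of each other.

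The per-round calculation is clean: after the filler adds $(a_i)$ and greedy empties a top-$p$ subset $S \subseteq [n]$, the new original state is $\{x_i + a_i - \mathbf{1}_S(i)\}$; in the mirror, adding $(1-a_i)$ to $\{-x_i\}$ gives $\{-(x_i+a_i)+1\}$, and greedy's top-$(n-p)$ subset there is exactly the complement $[n] \setminus S$ (with the appropriately mirrored tie-breaking), so the mirror state becomes $\{-(x_i + a_i - \mathbf{1}_S(i))\}$, the desired negation. Since the all-zero starting state is self-negating, this invariant propagates through all $t$ rounds. Consequently $\sigma$ forces some cup to have fill $\ge b$ at the end of the game if and only if $\sigma^*$ forces some cup to have fill $\le -b$, which immediately yields (1) $\Leftrightarrow$ (2). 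The implications (1) $\Rightarrow$ (3) and (2) $\Rightarrow$ (3) are definitional.

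For the last step (3) $\Rightarrow$ (1), I will use that once a tie-breaking rule is fixed greedy is fully deterministic, so a filler strategy $\sigma$ witnessing (3) produces a unique final state $X$, and by hypothesis $X$ satisfies $\max X \ge b$ or $\min X \le -b$. In the first subcase $\sigma$ itself witnesses (1); in the second subcase $\sigma$ witnesses (2), and then (1) follows via the mirror equivalence established above. The main obstacle I anticipate is the careful per-round bookkeeping around greedy's tie-breaking in the negation correspondence, since the mirror generally pairs each tie-breaking rule $\tau$ of the original game with a \emph{different} tie-breaking rule $\tau^*(y, q) := [n] \setminus \tau(-y, n-q)$ on the mirror game; I will need to verify that $\tau^*$ is always a valid greedy tie-breaking and that $\tau \mapsto \tau^*$ is an involution, so that the filler's guarantees are preserved when passing between $\sigma$ and $\sigma^*$.
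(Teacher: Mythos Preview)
Your proposal is correct and follows essentially the same approach as the paper: both exploit the negation symmetry by replacing the filler move $(p,a_1,\ldots,a_n)$ with its mirror $(n-p,1-a_1,\ldots,1-a_n)$ and verifying that greedy emptying on the mirrored state yields the negation of the original result, then use determinism of greedy to reduce (3) to (1) or (2). Your treatment is in fact slightly more careful than the paper's---you explicitly handle the $p=n$ edge case (where the naive mirror would have $p'=0$, which is invalid) and worry about tie-breaking, whereas the paper simply notes that states are considered up to permutation so tie-breaking is irrelevant; you can safely drop the $\tau\mapsto\tau^*$ involution bookkeeping for the same reason.
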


\begin{proof}
    We first establish the equivalence of the first two items. Since the emptier's strategy is fixed, the filler can know exactly what the state is after each round. Suppose that the filler can play a strategy where after the $i$th round, the state of the cups is $X^{(i)} = (x_1^{(i)}, \dots, x_n^{(i)})$. Note that $X^{(0)} = (0, 0, \dots, 0)$. Then, we claim that the filler is also capable of ensuring that after the $i$th round, the state of the cups is $-X^{(i)} = (-x_1^{(i)}, \dots, -x_n^{(i)})$.
    
    To do so, suppose the filler can go from $X^{(i)}$ to $X^{(i+1)}$ by adding $a_j^{(i)}$ to the $j$th cup and then having the emptier remove $1$ unit from the $p = \sum_{j = 1}^{n} a_j^{(i)}$ fullest cups. Then, to go from $-X^{(i)}$ to $-X^{(i+1)}$, the filler will add $1-a_j^{(i)}$ to the $j$th cup, so that the $j$th cup has fill $1-(X_j^{(i)}+a_j^{(i)})$. Then, $\sum_{j = 1}^{n} (1-a_j^{(i)}) = n-p$, so the emptier will subtract $1$ from the $n-p$ largest values of $1-(X_j^{(i)}+a_j^{(i)})$. This is equivalent to subtracting $1$ from all of the values $1-(X_j^{(i)}+a_j^{(i)})$ to get $-(X_j^{(i)}+a_j^{(i)})$, and then adding $1$ to the $p$ smallest values of $-(X_j^{(i)}+a_j^{(i)})$. Overall, we will precisely get $-X_j^{(i+1)}$, since the $p$ smallest values of $-(X_j^{(i)}+a_j^{(i)})$ are just the negatives of the $p$ largest values of $X_j^{(i)}+a_j^{(i)}$. Since we only care about cups up to permutations, we do not have to worry about tiebreaking issues.
    
    This establishes that the first two items are equivalent, since if after $t$ rounds the filler can guarantee the fills are $(X_1^{(t)}, \dots, X_n^{(t)})$ with $\max_{1 \le j \le n} X_j^{(t)} \ge b$, then the filler can guarantee the fills are $(-X_1^{(t)}, \dots, -X_n^{(t)})$, where $\min_{1 \le j \le n} -X_j^{(t)} \le -b$, and vice versa.
    
    Finally, to see why the third item is equivalent, it is straightforward that either the first two items implies the third item. But since the filler is really playing a $1$-player game as the emptier's strategy is known beforehand, if the filler can guarantee one of the two events can occur (but doesn't know which one), then the filler can either guarantee the first event will occur or can guarantee that the second event occurs. But since these two are equivalent, this establishes the equivalence between all three items.
\end{proof}

Now, we are able to establish the main lemma of this section, assuming the emptier is greedy.

\begin{lemma} \label{lem:neg_nonneg_equiv}
    For any $n, t \ge 1$, suppose the filler can guarantee that the backlog exceeds some $b \ge 0$ after $t$ rounds of the standard variable-processor game over $n$ cups, where the emptier is promised to be greedy. Then, the filler can guarantee that the backlog exceeds $b/2$ after $t$ rounds of the negative-fill game, where again the emptier is promised to be greedy.
\end{lemma}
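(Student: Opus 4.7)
The plan is to reduce to Proposition \ref{prop:binks:} via a stopping-time construction. Given a standard-game filler strategy $\sigma$ that achieves backlog exceeding $b$ in $t$ rounds, I will construct a negative-fill filler strategy achieving backlog exceeding $b/2$ in $t$ rounds. The negative-fill filler will simulate $\sigma$ for a carefully chosen prefix of $\tau^* \le t$ rounds and then ``pad'' the remaining $t - \tau^*$ rounds via the no-op move with $p=n$ that fills every cup with $1$ unit (greedy then empties each cup by $1$, so the state is preserved).

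Let $X^{(\tau)}$ denote the negative-fill state reached by $\sigma$ against greedy and $Y^{(\tau)}$ the simultaneously-simulated standard-game state. Since the filler and emptier move the same total volume each round in the negative-fill game, $\sum_j X^{(\tau)}_j = 0$, so
\[
\max_j \bigl|X^{(\tau)}_j\bigr| \;\ge\; \tfrac{1}{2}\Bigl(\max_j X^{(\tau)}_j - \min_j X^{(\tau)}_j\Bigr).
\]
Thus it will suffice to find some $\tau^* \le t$ at which the range of $X^{(\tau^*)}$ exceeds $b$: if $\max_j X^{(\tau^*)}_j > b/2$, the neg-fill filler plays $\sigma$ through round $\tau^*$ then pads; and if instead $\min_j X^{(\tau^*)}_j < -b/2$, I will invoke Proposition \ref{prop:binks:} to obtain a dual strategy $\sigma^{\star}$ reaching $-X^{(\tau^*)}$ after $\tau^*$ rounds (producing max $> b/2$) and again pad.

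The key claim will be the invariant
\[
\max_{\tau' \le \tau}\bigl(\max_j X^{(\tau')}_j - \min_j X^{(\tau')}_j\bigr) \;\ge\; \max_j Y^{(\tau)}_j \qquad \text{for all } \tau \le t.
\]
Specializing to $\tau = t$ and using $\max_j Y^{(t)}_j > b$ will produce the required $\tau^*$. I will prove the invariant by induction on $\tau$, with the only substantive case being when $\max_j Y$ strictly increases in round $\tau$: then some cup $j^*$ attains the new standard maximum without being emptied by greedy, while at least $p_\tau$ other cups have even larger post-fill values in $Y^{(\tau-1)} + a$. Case-splitting on how greedy in the negative-fill game responds to the same filler move---and comparing the multisets $Y^{(\tau-1)} + a$ and $X^{(\tau-1)} + a$---I will argue that either $\mathrm{range}(X^{(\tau)})$ reaches $\max_j Y^{(\tau)}_j$ immediately (because greedy is forced to drive some cup very negative), or $\mathrm{range}(X^{(\tau-1)})$ was already this large and is carried forward by the running maximum.

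The main obstacle will be the inductive step of the invariant: because the sort orders of $Y^{(\tau)}$ and $X^{(\tau)}$ can diverge, greedy may empty different cups in the two games, and $\sigma$'s adaptive moves depend on the standard state rather than the neg-fill state. The technical heart will be to show that a single-round jump of size $\Delta$ in $\max_j Y$ is always mirrored by a $\Delta$-sized stretching of the neg-fill spread, either via a cup driven very negative in the current round or via a gap inherited from an earlier round.
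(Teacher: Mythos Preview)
Your proposal rests entirely on the invariant
\[
\max_{\tau'\le\tau}\operatorname{range}\bigl(X^{(\tau')}\bigr)\;\ge\;\max_j Y^{(\tau)}_j,
\]
but you do not prove it; you only describe what the inductive step ``will'' do. The obstacle you yourself flag is real: once the first flooring occurs, the sorted orders of $X$ and $Y$ can differ, greedy may empty different index sets in the two games, and the coupling between $X$ and $Y$ becomes genuinely delicate. Your sketch (``either greedy is forced to drive some cup very negative, or the gap was inherited'') is not an argument---there is no reason offered why these exhaust the cases, nor why a jump of size $\Delta$ in $\max_j Y$ must be accompanied by a $\Delta$-stretch in the spread of $X$. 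In fact, the pointwise version $\operatorname{range}(X^{(\tau)})\ge\max_j Y^{(\tau)}_j$ is false (one can arrange $X^{(\tau)}=(0,\dots,0)$ while $\max_j Y^{(\tau)}_j>0$), so the running-max is doing essential work, and establishing it would require a substantive inductive comparison that you have not supplied. As written, this is a plan with its central lemma missing.

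The paper bypasses all of this with a one-line shift trick. Since adding fill can only help the filler, the hypothesis implies there is a standard-game strategy $\sigma'$ that, starting from the all-$b/2$ state, still drives backlog past $b$ within $t$ rounds against greedy. Run $\sigma'$ from the all-$b/2$ start and split on whether any cup ever reaches $0$. If no cup reaches $0$, then flooring never triggers, so the standard and negative-fill evolutions from the all-$b/2$ start coincide; shifting down by $b/2$ (greedy is shift-invariant) gives a negative-fill run from $0$ with backlog exceeding $b/2$ at time $t$. If some cup first reaches $0$ at time $\tau\le t$, then up to time $\tau$ the two games from the all-$b/2$ start agree, and at time $\tau$ the negative-fill version has a cup strictly below $0$; shifting down by $b/2$, the negative-fill game from $0$ has a cup below $-b/2$ at time $\tau$, and one pads the remaining rounds. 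Either way the filler forces $\max\ge b/2$ or $\min\le -b/2$ at time $t$, and Proposition~\ref{prop:binks:} finishes. This avoids any need to track the divergence between $X$ and $Y$ after the first flooring.
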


\begin{proof}
    Since adding fill always benefits the filler, we know that if we were to start in a state where all cups have fills $b/2 > 0$, then the filler would still be able to guarantee that the backlog exceeds $b$ after $t$ rounds in the standard variable-processor game against a greedy emptier. In order for a cup to reach fill $b$ starting from this state, either some cup must reach $0$ fill at or before time step $t$, or some cup must reach fill $b$ without any cup reaching $0$ fill during the $t$ time steps. 
    
    The former case means that the same filling procedure on the negative-fill game starting with all cups with fill $0$ will result in some cup having fill less than or equal to $-b/2$ at some point during the first $t$ time steps. If this takes fewer than $t$ time steps, then the filler can preserve the state of the cups for the remaining steps (by always setting $p = n$ and putting $1$ unit of water into every cup on each step) so that  after $t$ time steps, the minimum fill is still less than or equal to $-b/2$. The latter case, on the other hand, means that in the negative-fill cup game starting with all cups with fill $0$ will result in some cup having fill $b/2$ at $t$ time steps. Therefore, the filler can guarantee that either the minimum fill is $-b/2$ after $t$ rounds, or the maximum fill is $b/2$ after $t$ rounds. Thus, by Proposition \ref{prop:binks:}, the filler can guarantee a backlog of $b/2$ after $t$ rounds in the negative-fill cup game.
\end{proof}

By noting the fact that the greedy emptier is optimal in both the negative-fill and standard variable-processor cup games, we have the following corollary.

\begin{corollary} \label{cor:neg_nonneg_equiv}
    For any $n, t \ge 1$, suppose the filler can guarantee that the backlog exceeds some $b \ge 0$ after $t$ rounds of the standard variable-processor game over $n$ cups, against any emptier. Then, the filler can guarantee that the backlog exceeds $b/2$ after $t$ rounds of the negative-fill game, against any emptier.
\end{corollary}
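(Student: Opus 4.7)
The plan is to bridge the ``greedy-emptier'' hypothesis of Lemma \ref{lem:neg_nonneg_equiv} with the ``any-emptier'' setting of the corollary by invoking Theorem \ref{thm:greedyopt} (greedy emptying is optimal) in the negative-fill game. The proof will be a two-step ``upgrade'' of the lemma, where the first step trivially weakens the corollary's hypothesis into a form usable by the lemma, and the second step uses greedy-optimality to strengthen the lemma's conclusion back into the form required by the corollary.

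First, I would observe that the hypothesis of Lemma \ref{lem:neg_nonneg_equiv} follows immediately from the hypothesis of the corollary: if the filler has a strategy that forces backlog exceeding $b$ in the standard game against every emptier, then that same strategy in particular forces backlog exceeding $b$ when the emptier happens to play greedily (since greedy is one specific emptying strategy). Applying Lemma \ref{lem:neg_nonneg_equiv} then yields a filler strategy in the negative-fill game that forces backlog exceeding $b/2$ against the greedy emptier; equivalently, $\GREEDY(S_0, t) > b/2$, where $S_0$ denotes the all-zero starting state of the negative-fill game.

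Second, I would invoke Theorem \ref{thm:greedyopt} applied to the negative-fill game, which tells us $\OPT(S_0, t) = \GREEDY(S_0, t) > b/2$. By the recursive definition of $\OPT$ as the minimax value of the game, a straightforward induction on $t$ shows that the inequality $\OPT(S_0, t) > b/2$ is equivalent to the existence of a single adaptive filler strategy that guarantees backlog exceeding $b/2$ against every adaptive emptier in the negative-fill game. This is exactly the conclusion of the corollary.

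I do not anticipate any real obstacle. The only care needed is to check that Theorem \ref{thm:greedyopt} is available in the negative-fill version of the game (which it is, since the theorem is stated for both the non-negative-fill and negative-fill variants), and to confirm that the adaptive minimax encoded by $\OPT$ really does yield a universal filler strategy rather than one that depends on the emptier — but this follows immediately from the recursion defining $\OPT$.
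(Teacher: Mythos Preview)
Your proposal is correct and is essentially the same argument as the paper's: the paper's one-line proof simply cites that greedy emptying is optimal in both games (Theorem \ref{thm:greedyopt}) and deduces the corollary from Lemma \ref{lem:neg_nonneg_equiv}. Your write-up is slightly more economical in that you only invoke greedy-optimality for the negative-fill game (using the trivial ``any emptier $\Rightarrow$ greedy emptier'' implication on the standard-game side), but the route is the same.
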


\section{Propositions on Majorization and Domination} \label{sec:appendix_ez}

In this section, we prove Propositions \ref{prop:ez2}, \ref{prop:perturbation}, and \ref{prop:super_dom_ez}.

\begin{proof}[Proof of Proposition \ref{prop:ez2}]
    We wish to show that for any $1 \le t \le k+r,$ the sum of the largest $t$ elements of $\{a_1, \dots, a_k, c_1, \dots, c_r\}$ is at least the sum of the largest $t$ elements of $\{b_1, \dots, b_k, c_1, \dots, c_r\}$. To prove this, assume WLOG that $a_1 \ge \cdots \ge a_k,$ $b_1 \ge \cdots \ge b_k$, and $c_1 \ge \cdots \ge c_r$. Suppose the largest $t$ elements of $\{b_1, \dots, b_k, c_1, \dots, c_r\}$ are $b_1, \dots, b_g, c_1, \dots, c_{r-g}$, where $0 \le g \le t$. Then, since $\{a_1, \dots, a_k\}$ majorizes $\{b_1, \dots, b_k\}$, we have that $a_1 + \cdots + a_g \ge b_1 + \cdots + b_g$, so $a_1 + \cdots + b_g + c_1 + \cdots + c_{r-g} \ge b_1 + \cdots + b_g + c_1 + \cdots + c_{r-g}.$ So, the sum of the largest $t$ elements of $\{a_1, \dots, a_k, c_1, \dots, c_r\}$ must be at least as large as $b_1 + \cdots + b_g + c_1 + \cdots + c_{r-g}.$
\end{proof}

\begin{proof}[Proof of Proposition \ref{prop:perturbation}]
    We additionally show that in each step, that after each perturbation, none of the terms will be $x_1$ or less than $x_n$. We prove this by induction on $n$, the length of the sequences $\{x_i\}$ and $\{y_i\}$. If $n = 1$, then we must have $x_1 = y_1$ for majorization, so the proof is trivial. 
    
    For $n \ge 2,$ assume WLOG that $\{x_i\}$ and $\{y_i\}$ are sorted in nondecreasing order. Then, if $\{x_i\}$ dominates $\{y_i\}$, $x_1 \ge y_1 \ge y_n \ge x_n$. If $x_1 = y_1,$ then we use the inductive hypothesis on $\{x_2, \dots, x_n\}$ and $\{y_2, \dots, y_n\}$, and keep $x_1 = y_1$ as is. Since each perturbation does not increase anything to greater than $x_2 \le x_1$ or decrease anything to less than $x_n$, we have that the order of the full sequence (including the first term which remains at $y_1$) is preserved and no term will ever increase to greater than $x_1$ or less than $x_n$. A symmetric argument takes care of the case when $x_n = y_n$, where we use the inductive hypothesis on $\{x_1, \dots, x_{n-1}\}$ and $\{y_1, \dots, y_{n-1}\}$.
    
    Otherwise, let $\alpha = \min(x_1-y_1, y_n-x_n)$, and let $\eps = \frac{\alpha}{1 + \lfloor \alpha \rfloor}.$ For each of $1 + \lfloor \alpha \rfloor$ steps, increase $y_1$ by $\eps$ and decrease $y_n$ by $\eps$. Since $y_1 \ge y_2 \ge \cdots \ge y_n$ at the beginning, we clearly preserve this at each step, and since $x_1 \ge y_1$ and $x_n \ge y_n$ at the beginning, we also clearly preserve this throughout. Finally, we have either increased $y_1$ to $x_1$ or decreased $y_n$ to $x_n$, so we can apply the previous paragraph.
\end{proof}

\begin{proof}[Proof of Proposition \ref{prop:super_dom_ez}]
    Suppose the contrary, so there exists $1 \le j \le n$ such that $j$th largest element of $x$ is less than the $j$th largest element of $y$. Then, if $x^{(i)}$ represents the $i$th largest element of $\{x_i\}$ and $y^{(i)}$ represents the $i$th largest element of $\{y_i\}$, then $y^{(1)} \ge \cdots \ge y^{(j)} > x^{(j)} \ge \cdots \ge x^{(n)}.$ Note, however, that $y^{(1)}, \dots, y^{(j)}$ comprises $j$ numbers and $x^{(j)}, \dots, x^{(n)}$ comprises $n-j+1$ numbers. Therefore, by the Pigeonhole principle, there exists $1 \le k \le j$ and $j \le \ell \le n$ such that $y^{(k)} = y_m$ and $x^{(\ell)} = x_m$ for some $m \in [n]$. But then this implies that $y_m > x_m$, which is a contradiction.
\end{proof}

\end{document}